\newcommand{\R}{\mathbb{R}}
\newcommand{\N}{\mathbb{N}}
\newcommand{\Z}{\mathbb{Z}}
\newcommand{\field}{\mathds{k}}
\newcommand\SComplex{K}
\newcommand{\Filt}{\mathrm{Filt}}
\renewcommand{\Im}{\mathrm{Im}}
\newcommand{\VertSet}{\mathrm{V}}
\newcommand\simplex{\sigma}
\newcommand\Polytope{\mathcal{P}}
\newcommand\Barc{\mathrm{Bar}}
\newcommand\persmap{\mathrm{PH}}
\newcommand{\FinEnd}{\End^*}
\newcommand{\End}{\mathrm{End}}
\newcommand\Filtration{\mathcal{K}}
\newcommand\I{\text{I}}
\newcommand\EquivClassSimplex{\Phi}
\newcommand\Din{D^{< j}}
\newcommand\Dinbis{D^{< j'}}
\newcommand\pdata{\mathcal{F}}
\newcommand{\dimK}{\ensuremath{\mathrm{dim }K}}
\newcommand{\low}{\ensuremath{\mathrm{low}}}
\newcommand{\True}{\ensuremath{\mathrm{True}}}
\newcommand{\False}{\ensuremath{\mathrm{False}}}
\newcommand{\Results}{\ensuremath{\mathbf{Results}}}
\newcommand{\DFS}{\ensuremath{\mathbf{ExtendFiltration}}}
\newcommand{\AlgoComputeFiltrations}{\ensuremath{\mathbf{ComputeFiltrations}}}
\newcommand{\Intervals}{\mathcal{I}}
\newcommand{\MaxFree}{\ensuremath{\mathrm{M}}}
\newcommand{\TrivCur}{\Phi_{\mathrm{cur}}^{\mathrm{n.c.}}}
\newcommand{\ClassCur}{\Class_{\mathrm{cur}}}
\newcommand{\IntervalsCur}{\Intervals_{\mathrm{cur}}}
\newcommand{\FiltrationIn}{\Filtration_{\mathrm{in}}}
\newcommand{\LowerStarSimplices}{\mathrm{L}}
\newcommand{\Class}{\EquivClassSimplex}
\newcommand{\basis}{E}
\newcommand{\chaincx}{C} 
\newcommand{\filta}{f} 
\newcommand{\filtb}{g} 
\newcommand{\filtersCompat}{\persmap^{-1}(D)} 
\newcommand{\unionofpolytopes}{S}
\newcommand{\filterLowerK}[2]{\mathrm{Low}_{#1}(#2)}
\newcommand{\cell}{\mathcal{P}}
\newcommand{\critvals}{\Gamma} 
\newcommand{\critval}{\gamma} 
\newcommand{\ncritval}{m} 
\newcommand{\birth}{\mathrm{birth}} 
\newcommand{\death}{\mathrm{death}} 
\newcommand{\ambientPoly}{{\mathcal P}}
\newcommand{\prop}{P} 
\newcommand{\cwcomplex}{\SComplex}
\newcommand{\Pairing}{\pi}
\newcommand{\PairingBirth}{\pi_{+}}
\newcommand{\PairingDeath}{\pi_{-}}
\def\namedlabel#1#2{\begingroup
    #2%
    \def\@currentlabel{#2}%
    \phantomsection\label{#1}\endgroup
}
\newcommand{\rmathbb}[1]{{\mathpalette\dude@rmathbb{#1}}}
\newcommand{\dude@rmathbb}[2]{%
  \scalebox{-1}[1]{$\m@th#1\mathbb{#2}$}%
  }
\theoremstyle{plain}
\newtheorem{theorem}{Theorem}[section]
\newtheorem{lemma}[theorem]{Lemma}
\newtheorem{proposition}[theorem]{Proposition}
\newtheorem{conjecture}{Conjecture}
\newtheorem*{theorem*}{Theorem}
\newtheorem*{proposition*}{Proposition}
\theoremstyle{definition}
\newtheorem{remark}[theorem]{Remark}
\newtheorem{definition}[theorem]{Definition}
\newtheorem{example}[theorem]{Example}
\definecolor{dred}{rgb}{0.7,0.0,0.0}
\begin{document}
	\author{
 Jacob Leygonie
  \\
  Mathematical Institute\\
  University of Oxford\\
  Oxford OX2 6GG, UK \\
  \texttt{jacob.leygonie@maths.ox.ac.uk}
  \And
   Gregory Henselman-Petrusek 
  \\
  Mathematical Institute\\
  University of Oxford\\
  Oxford OX2 6GG, UK \\
  \texttt{henselmanpet@maths.ox.ac.uk} \\
   \\
}
	
	\title{Algorithmic Reconstruction of the Fiber of Persistent Homology on Cell Complexes }

	\maketitle

\begin{abstract}
Let~$\SComplex$ be a finite simplicial, cubical, delta or CW complex. The persistence map~$\persmap$ takes a filter~$f:\SComplex\rightarrow \R$ as input and returns the barcodes~$\persmap(f)$ of the associated sublevel set persistent homology modules. We address the inverse problem: given a target barcode~$D$, computing the fiber~$\persmap^{-1}(D)$. For this, we use the fact that~$\persmap^{-1}(D)$ decomposes as complex of polyhedra when~$\SComplex$ is a simplicial complex, and we generalise this result to arbitrary based chain complexes. We then design and implement a depth first search algorithm that recovers the polyhedra forming the fiber~$\persmap^{-1}(D)$. As an application, we solve a corpus of 120 sample problems, providing a first insight into the statistical structure of these fibers, for general CW complexes.


\end{abstract}

\tableofcontents

\section*{Introduction}
\label{sec_intro}
Persistent Homology (PH) is a computable descriptor~\cite{edelsbrunner2008persistent,zomorodian2005computing} for data science problems where topology is prominent, e.g. for analysing graphs and simplicial complexes~$\SComplex$. Given a function~$f:\SComplex\rightarrow \R$, one assembles the homology groups of the sub complexes~$f^{-1}((-\infty,t])$ into a module which is faithfully represented by a so-called barcode~$D=\persmap(f)$. Using vectorisation methods~\cite{adams2017persistence,bubenik2015statistical} the topological information of the barcode can then be used in statistical studies and machine learning problems. 

To gain an a priori understanding of the problems where~$\persmap$ is applied, it is key to know about the invariance of~$\persmap$, i.e. the context in which it is a discriminating descriptor. Hence the general inverse problem: what are the different functions~$f$ giving rise to the same barcode~$D=\persmap(f)$? Equivalently we are interested in the properties of the fiber~$\persmap^{-1}(D)$ over a target barcode~$D$.  Prior work~\cite{leygonie2021fiber} has shown the fiber to be the geometric realization of a polyhedral complex; each polyhedron represents the restriction of the fiber to strata of the space of filters. Similarly the space of barcodes is given a stratification, and $\persmap^{-1}(D)$ is piecewise-affinely isomorphic to $\persmap^{-1}(D')$ for any~$D, D'$ that belong to the same barcode stratum. Furthermore, if~$D'$ lies in the closure of the stratum containing~$D$, then there is a natural map~$\persmap^{-1}(D) \rightarrow \persmap^{-1}(D')$ that respect the polyhedral structure of~$\persmap^{-1}(D)$ and~$\persmap^{-1}(D')$.

However, understanding the fiber~$\persmap^{-1}(D)$ remains a challenge, in general.  If the underlying simplicial complex~$\SComplex$ is a subdivision of the unit interval or of the circle, then the fiber~$\persmap^{-1}(D)$ is a disjoint union of contractible sets~\cite{cyranka2018contractibility} and circles~\cite{mischaikow2021persistent},  respectively.  Outside these and a few other 1-dimensional examples, however, little is known.  Even establishing that~$\persmap^{-1}(D)\neq \emptyset$ can be difficult, as we shall see in appendix~\ref{appendix:connection_to_zeeman}.\footnote{For certain types of barcode~$D$, we show that $\persmap^{-1}(D) = \emptyset$ iff~$\SComplex$ is collapsible.}  

Algorithm development represents an important step toward addressing this challenge. Computerized calculations offer a range of new examples (intractable by hand) with which to test hypotheses and search for patterns, thus contributing to the growth of theory. They also represent an essential prerequisite to many scientific applications.  

In this paper we propose an algorithmic approach to compute the fiber of the persistence map~$\persmap$, for an arbitrary finite simplicial complex~$\SComplex$, over an arbitrary barcode~$D$.  We then generalize this approach to include finite CW complexes.
%

\paragraph{Outline of contributions} After introducing elementary material from Persistence theory (section~\ref{sec_prelim}), we define in section~\ref{section_compatible_data} the key data structures used in our method. The algorithm, presented in section~\ref{section_computing_fiber}, computes the list of polyhedra in the fiber~$\persmap^{-1}(D)$. For this, we adopt an inductive approach that constructs filtrations of~$K$ simplex by simplex, and simultaneously, an associated barcode via the standard reduction algorithm for computing persistent homology. Through this incremental construction, we ensure that the filtrations remain compatible with the target barcode~$D$. The resulting collection of polyhedra is the polyhedral complex that characterises the homeomorphism type of~$\persmap^{-1}(D)$.

That the fiber~$\persmap^{-1}(D)$ is a polyhedral complex is generalised to arbitrary based chain complexes in section~\ref{section_based_chain_complexes}, in particular including CW complexes, delta complexes, cubical complexes, on which we can take arbitrary filter functions, or the sub spaces of lower-star and lower-edge filter functions. In turn our algorithm adapts to these situations. 

In a section~\ref{section_experiments} dedicated to experiments, we apply the algorithm to multiple complexes and barcodes, and report statistics about the fiber~$\persmap^{-1}(D)$, such as its number of polyhedra and its Betti numbers. Sometimes unexpectedly, most of the properties observed for the $1$-dimensional complexes studied in~\cite{cyranka2018contractibility, leygonie2021fiber,mischaikow2021persistent} do not hold for more general complexes. For instance, already for a triangulated $2$-sphere the fiber~$\persmap^{-1}(D)$ over some barcodes~$D$ has non-trivial homology in degree~$3$, unlike all the existing examples of graphs, for which~$\persmap^{-1}(D)$ has vanishing homology in dimension greater than~$1$. We also find CW complexes, for example the real projective plane, that are topological manifolds whose fibers are not necessarily manifolds. 

By contrast, the algorithm allows us to observe novel trends that hold consistently across examples: for instance to each simplicial complex~$\SComplex$ we associate a specific barcode~$D_{\SComplex}$ such that the fiber~$\persmap^{-1}(D_{\SComplex})$ and~$\SComplex$ have the same Betti numbers (see conjecture~\ref{conjecture_fiber_homotopy_equivalent_to_K}). Our code base has private dependencies, hence will be made public at a later time.

\paragraph{Related works} Finally we note that for related inverse problems algorithmic approaches to reconstruct the fiber have been designed. For instance instead of taking a single function the Persistent Homology Transform (PHT) of~\cite{turner2014persistent} computes the barcodes of a family of functions over a fixed shape in~$\R^3$, which is enough to completely characterise this shape, see~\cite{curry2018many,ghrist2018persistent} for generalisations to higher dimensions. These sorts of results motivated the design of algorithms to reconstruct the complex~$\SComplex$ from the associated family of barcodes~\cite{betthauser2018topological,fasy2019persistence}.
\paragraph{Acknowledgments} Both authors thank Ulrike Tillmann and Heather Harrington for close guidance and numerous interactions. Both authors acknowledge the support of the Centre for Topological Data Analysis of Oxford, EPSRC grant EP/R018472/1. J.L.'s research is also funded by ESPRC grant EP/R018472/1.  G. H.-P. acknowledges support from NSF grant DMS-1854748.
%

\section{Background}
\label{sec_prelim}
We fix a finite simplicial complex~$\SComplex$ of dimension~$\dimK$ and  with $\sharp K$ many simplices. Throughout, (simplicial) homology is taken with coefficients in a fixed field $\field$. A {\em filter} over~$\SComplex$ is a map~$f:K\rightarrow \I$ valued in the interval $\I:=[1;\sharp \SComplex ] \subset \mathbb R$, whose sublevel sets are subcomplexes of~$\SComplex$. If we regard $\SComplex$ as a poset partially ordered by face inclusions~$\simplex \subseteq \simplex'$, then~$f$ can be regarded, equivalently, as an order-preserving function into~$\I$. The set of filters is a polyhedron contained in the Euclidean space~$\R^K$.

For each homology degree~$0\leqslant p \leqslant \dimK$, we have a (finite) persistent homology module arising from the sub-level sets of~$f$. The isomorphism type of each module of this form is uniquely determined by the associated barcode~$\persmap_p(f)$; concretely, the barcode is a finite multi-set of pairs~$(b,d)$, called {\em intervals}, each of which characterizes the appearance and annihilation of a class of~$p$-cycles in the filtration.  For further details, see \cite{zomorodian2005computing, EHComputational10}. We denote the set of all barcodes by~$\Barc$.

The persistence map~$\persmap$ takes a filter~$f$ and returns the $\dimK+1$ barcodes of interest:
\[\persmap: f\in \Filt(\SComplex) \longmapsto \big(\persmap_0(f),\persmap_1(f),\cdots, \persmap_{\dimK}(f)\big)\in \Barc^{\dimK+1}.\]
Our goal is to compute the fiber~$\persmap^{-1}(D)$ over some $(d+1)$-tuple of barcodes~$D=(D_0,D_1,\cdots,D_{\dimK})\in \Barc^{\dimK+1}$, from now on called barcode for simplicity.

There is another, equivalent formulation of~$D$ which is sometimes better suited to formal arguments. 
In this approach, we replace the sequence of multisets~$D$ with a single, bona fide set~$X$.  We regard~$X$ as a set of indices~$x$, with one index for each interval in the barcode, and write~$\birth(x)$,~$\death(x)$, and~$\dim(x)$ for the birth time, death time, and dimension, respectively, of the corresponding interval.  
Thus~$D_p$ is the multiset $\{(\birth(x), \death(x) ) : x \in X, \; \dim(x) = p \}$.  The set~$X$ is especially useful for algorithms, e.g.\ when writing for-loops, however it is nonstandard as a convention.%

By way of compromise, we will identify~$D$ with~$X$.  When we refer to fixing~$(b,d) \in D$, we mean fixing~$x \in X$ such that~$\birth(x) = b$ and~$\death(x) = d$.  By~$\dim(b,d)$, we then mean~$\dim(x)$.

We also abuse notation and write~$(b,d)\in D$ whenever~$(b,d)\in D_p$ for some~$0\leq p\leq \dimK$, and we set~$\dim(b,d):=p$.  

Given a barcode~$D$, let
\[
        \End(D) =  \{ b :  (b,d) \in D \text{ for some } d \} \cup  \{ d :  (b,d) \in D \text{ for some } b \} \subseteq \R \sqcup \{+\infty\}
\]
denote the set of all endpoints of intervals in~$D$. Set  
$\FinEnd(D) = \End(D) \backslash\{\infty\}$
and write~$\dim D = |\FinEnd(D)|$ for the number of finite endpoints.  In particular,~$\dim D \leq \sharp \SComplex$. By normalizing, we may assume without loss of generality that $\FinEnd(D)$ has form $\{1, \ldots, \dim D\}$.  For instance, the barcode~$D$ with~$4$ intervals (in possibly different homology degrees)~$(1,2)$,~$(1,3)$,~$(2,3)$ and~$(2,+\infty)$ has dimension~$3$, since the endpoints values form the set~$\{1,2,3\}$. 

For any pair of filters~$\filta, \filtb: \basis \to \I$, let us define the relation~$\filta \sim \filtb$ by either of the following two equivalent axioms:
    \begin{enumerate}[label=\textbf{(A\arabic*)}]
        \item 
        \label{item:cellaxiom_compose_background}        
        There exists an order-preserving map $\psi: \I \to \I$ such that $\psi(i) = i$ for each~$i \in \{ 1, \ldots,  \dim D\}$, and $\filtb = \psi \circ \filta$.    
        \item 
            \label{item:cellaxiom_inequality_background}
            The function $\filtb$ satisfies
            \begin{align}
                \filta(e) \in \FinEnd(D) \implies \filtb(e) = \filta(e)
                &&
                \filta(e) \le \filta(e') \implies \filtb(e) \le \filtb(e').
                \label{eq:equivalence_condition_background}
            \end{align}        
    \end{enumerate}
Note that~$\sim$ is reflexive and transitive but not symmetric. Given a filter~$\filta$ we write~$\eta^i := \filta(\basis) \cap (i, i+1)$ and 
\[
\cell(\filta) := \{ \filtb \in \I ^E  : \, \, \filta \sim \filtb \} \subseteq \I^{\SComplex}.
\]
Theorems~\ref{theorem_polyhedral_complex_a} and~\ref{theorem_polyhedral_complex} below are proved in~\cite[Theorem~2.2]{leygonie2021fiber}.  Recall that a finite set~$\ambientPoly$ of polyhedra in~$\R^\SComplex$ is a \emph{polyhedral complex} if for each polyhedron~$X\in \ambientPoly$, the faces of~$X$ belong to~$\ambientPoly$ as well, and if, furthermore, any two polyhedra~$X,X'\in \ambientPoly$ intersect at a common face. The underlying space of~$\ambientPoly$ is~$\vert \ambientPoly \vert =  \bigcup_{X \in \ambientPoly} X$.
\begin{theorem}
\label{theorem_polyhedral_complex_a}
Let~$\filta\in \persmap^{-1}(D)$ be a filter in the fiber. Then~$\cell(\filta)\subseteq \persmap^{-1}(D)$, and~$\cell(\filta)$ is a polyhedron which is affinely isomorphic to the product~$\Delta_{\# \eta^1} \times \cdots \times \Delta_{\# \eta^{\dim D}}$, where~$\Delta_{k}$ stands for the standard geometric simplex of dimension~$k$.
\end{theorem}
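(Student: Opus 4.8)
The plan is to prove the two assertions separately, switching freely between the equivalent descriptions~\ref{item:cellaxiom_compose_background} and~\ref{item:cellaxiom_inequality_background} of the relation~$\sim$.

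\textbf{The containment $\cell(\filta)\subseteq \persmap^{-1}(D)$.} Given~$\filtb\in\cell(\filta)$, I would use~\ref{item:cellaxiom_compose_background} to write~$\filtb=\psi\circ\filta$ for an order-preserving~$\psi:\I\to\I$ with~$\psi(i)=i$ for all~$i\in\{1,\ldots,\dim D\}$; monotonicity of~$\psi$ makes~$\filtb$ a filter. Now the heights at which the sublevel-set homology of~$\filta$ changes are exactly the finite endpoints~$\FinEnd(D)=\{1,\ldots,\dim D\}$ of~$\persmap(\filta)=D$. For each such~$i$, monotonicity together with~$\psi(i)=i$ and~$\psi(i+1)=i+1$ forces~$\filtb^{-1}((-\infty,i])$ to be the sublevel set of~$\filta$ cut at some height in~$[i,i+1)$, which therefore carries the same homology as~$\filta^{-1}((-\infty,i])$, compatibly with the structure maps; and~$\filtb^{-1}((-\infty,t])$ is constant in homology for~$t\in[i,i+1)$ for the same reason. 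Hence the persistence modules of~$\filta$ and~$\filtb$ are isomorphic, so~$\persmap(\filtb)=D$. (Equivalently: every interval~$(b,d)\in D$ is carried to~$(\psi(b),\psi(d))=(b,d)$, using that~$\psi$ fixes the endpoints; or one may simply invoke reparametrisation-invariance of persistence.)

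\textbf{The affine type of $\cell(\filta)$.} Here I would use~\ref{item:cellaxiom_inequality_background}. Partition the cells of~$\SComplex$ by the value of~$\filta$: the \emph{nailed} cells, those~$e$ with~$\filta(e)\in\FinEnd(D)$, and for each~$i\in\{1,\ldots,\dim D\}$ the cells landing in the open slab~$(i,i+1)$, which split further into the level sets~$\filta^{-1}(a)$,~$a\in\eta^i$. From~\eqref{eq:equivalence_condition_background} one reads that~$\filtb\in\cell(\filta)$ iff (i)~$\filtb=\filta$ on the nailed cells, (ii)~$\filtb$ is constant on each level set~$\filta^{-1}(a)$, and (iii) writing~$t^i_1\le\cdots\le t^i_{\#\eta^i}$ for the resulting common values on the level sets in slab~$i$ (ordered as the corresponding elements of~$\eta^i$), one has~$i\le t^i_1$ and~$t^i_{\#\eta^i}\le i+1$, the bounds coming from comparison with the nailed cells bracketing the slab. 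The verification that (i)--(iii) are \emph{sufficient} (that such an assignment genuinely satisfies~\eqref{eq:equivalence_condition_background} and is valued in~$\I$) is a short case check over the relative position of two cells, and is the only computational part of this step.

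Consequently~$\filtb\mapsto\big((t^i_1,\ldots,t^i_{\#\eta^i})\big)_{i=1}^{\dim D}$ is an affine isomorphism from~$\cell(\filta)$ onto~$\prod_{i=1}^{\dim D}\{(t_1,\ldots,t_{\#\eta^i}):i\le t_1\le\cdots\le t_{\#\eta^i}\le i+1\}$, the nailed coordinates being pinned. Each factor~$\{i\le t_1\le\cdots\le t_k\le i+1\}$ is a standard affine realisation of~$\Delta_k$---its~$k+1$ vertices are obtained by setting an initial run of the~$t_j$ to~$i$ and the remainder to~$i+1$---so~$\cell(\filta)\cong\Delta_{\#\eta^1}\times\cdots\times\Delta_{\#\eta^{\dim D}}$. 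I expect the real obstacle to be the barcode invariance in the first step: since~$\psi$ need not be injective, the filtration of~$\filtb$ may merge or skip stages of that of~$\filta$, and one must rule out that this perturbs the barcode---the sandwich into~$[i,i+1)$ above (or a reparametrisation lemma) is what makes this rigorous. The affine-type computation, by contrast, is bookkeeping once one recognises the constraint polytope as a product of order simplices of the correct dimensions.
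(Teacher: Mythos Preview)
Your proof is correct and aligns with the paper's own treatment: Theorem~\ref{theorem_polyhedral_complex_a} is not proved in the text (it is cited from~\cite[Theorem~2.2]{leygonie2021fiber}), but the paper's arguments for the generalised versions---Lemma~\ref{lem:wholecell} for the containment~$\cell(\filta)\subseteq\persmap^{-1}(D)$ via the equivariance~$\persmap(\psi\circ\filta)=\psi\cdot\persmap(\filta)$ (your parenthetical fallback, cf.\ Remark~\ref{remark_equivariance}), and Lemma~\ref{lem:polyhedraareproductsofsimplices} for the product-of-simplices description via the axioms---are exactly your two steps. The only wrinkle, and it is inherited from the statement rather than introduced by you, concerns the last slab~$i=\dim D$: there is no nailed cell at~$\dim D+1$, so your upper bound ``$t^i_{\#\eta^i}\le i+1$ from the bracketing nailed cells'' is unjustified there---the bound comes instead from~$\filtb\in\I^\SComplex$---and correspondingly the paper's definition~$\eta^{\dim D}=\filta(\SComplex)\cap(\dim D,\dim D+1)$ can miss values of~$\filta$ above~$\dim D+1$; Lemma~\ref{lem:polyhedraareproductsofsimplices} silently repairs this by setting~$\gamma_{m+1}:=\infty$.
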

The polyhedra~$\cell(\filta)$ then assemble into a polyhedral complex with underlying space the fiber~$\persmap^{-1}(D)$:
\begin{theorem}
\label{theorem_polyhedral_complex}
The set~$\big\{ \cell(\filta) \mid \filta\in \persmap^{-1}(D) \big\}$ is a polyhedral complex. 
\end{theorem}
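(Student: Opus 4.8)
The plan is to reduce the proof to the combinatorics of the (non-symmetric) relation $\sim$; throughout, each set $\cell(\filta)$ with $\filta\in\persmap^{-1}(D)$ is already a polyhedron in $\R^{\SComplex}$ by Theorem~\ref{theorem_polyhedral_complex_a}, so only the two structural axioms of a polyhedral complex remain: closure under faces, and that two cells meet in a common face. Write $\filta\approx\filtb$ when $\filta\sim\filtb$ and $\filtb\sim\filta$. Unwinding axiom~\ref{item:cellaxiom_inequality_background}, this holds precisely when $\filta$ and $\filtb$ induce the same total preorder $\preceq$ on the face poset $\SComplex$ (namely $\sigma\preceq\sigma'\iff\filta(\sigma)\le\filta(\sigma')$) and take a value in $\FinEnd(D)$ at exactly the same simplices, with the same value there. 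Hence $\approx$ is an equivalence relation, and from the same axiom $\cell(\filta)$ depends only on the pair (preorder of $\filta$, pinning data of $\filta$); since $\SComplex$ is finite there are finitely many such pairs, so $\{\cell(\filta):\filta\in\persmap^{-1}(D)\}$ is finite. The $\approx$-classes partition $\persmap^{-1}(D)$, and --- as will emerge --- each is the relative interior of a cell.

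Everything hinges on a \emph{Key Lemma}: if $\filta\in\persmap^{-1}(D)$ and $\filtb\in\cell(\filta)$, then $\cell(\filtb)$ is a face of $\cell(\filta)$ and $\filtb$ lies in its relative interior. To prove it I would pass to the concrete model of Theorem~\ref{theorem_polyhedral_complex_a}: after fixing the pinned coordinates, $\cell(\filta)$ is the product over the gaps $i$ of order-simplices $\{\,i\le t^i_1\le\cdots\le t^i_{\#\eta^i}\le i{+}1\,\}\cong\Delta_{\#\eta^i}$, where $t^i_j$ is the common coordinate of the simplices whose $\filta$-value is the $j$-th smallest element of $\eta^i$. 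The condition $\filtb\in\cell(\filta)$ amounts to: in each gap a lowest run of groups is slid onto the endpoint $i$, a highest run onto $i{+}1$ (both lie in $\FinEnd(D)$, so these become new pins), and the remaining groups are amalgamated into runs of consecutive groups --- and these are exactly the ways to pass to a face of a product of order-simplices, with $\filtb$ generic (relative-interior) in that face. A direct comparison of defining (in)equalities then identifies $\cell(\filtb)$ with that face. The inclusion $\cell(\filtb)\subseteq\cell(\filta)$ itself is free, being transitivity of $\sim$; the content of the lemma is that this sub-polyhedron is a \emph{face}.

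Granting the Key Lemma, the two axioms follow. Closure under faces: a face $F$ of $\cell(\filta)$ has a relative-interior point $\filtb$, which lies in $\cell(\filta)\subseteq\persmap^{-1}(D)$ by Theorem~\ref{theorem_polyhedral_complex_a}; so $\cell(\filtb)$ is a cell of our collection, it is a face of $\cell(\filta)$ by the Key Lemma, and it contains $\filtb$ in its relative interior, whence $F=\cell(\filtb)$ since the relative interiors of the faces of $\cell(\filta)$ partition it. Intersections: fix $\filta,\filtb\in\persmap^{-1}(D)$; then $\cell(\filta)\cap\cell(\filtb)$ is precisely the set of filters that respect the total preorder $\preceq$ obtained as the transitive closure of $\preceq_\filta\cup\preceq_\filtb$ and satisfy the pinning constraints of both $\filta$ and $\filtb$. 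If these are jointly inconsistent the intersection is empty --- the empty face of each; otherwise $\preceq$ refines the face order, so a routine interpolation produces a filter $\filtc$ with $\cell(\filtc)$ equal to that set. Since $\cell(\filtc)=\cell(\filta)\cap\cell(\filtb)\subseteq\persmap^{-1}(D)$ and contains $\filtc$, the Key Lemma (with base cell $\cell(\filta)$, then $\cell(\filtb)$) exhibits $\cell(\filtc)$ as a common face.

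The main obstacle is the Key Lemma --- specifically, showing that the sub-polyhedron $\cell(\filtb)\subseteq\cell(\filta)$ is a genuine face and not merely a convex subset, which is where the asymmetry of $\sim$ must be handled carefully. One has to verify that every constraint separating $\cell(\filtb)$ from $\cell(\filta)$ comes from tightening a boundary inequality of the order-simplex product --- groups slide to an endpoint ``from the outside in'' and merge only in consecutive runs, never, say, an interior group reaching an endpoint alone --- and, conversely, that all faces arise this way. With that combinatorial dictionary in hand, the remaining steps (finiteness, and deducing the face and intersection axioms) are bookkeeping on top of Theorem~\ref{theorem_polyhedral_complex_a}.
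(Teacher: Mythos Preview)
Your proposal is correct, and you have put your finger on the right technical crux: the Key Lemma that $\cell(\filtb)$ is a \emph{face} of $\cell(\filta)$ whenever $\filtb\in\cell(\filta)$, with $\filtb$ in its relative interior. Transitivity of $\sim$ only gives containment; the face claim is where the work lies, and your order-simplex description of how faces arise (runs sliding to endpoints, consecutive groups merging) is the right combinatorial dictionary.

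The paper does not actually prove Theorem~\ref{theorem_polyhedral_complex} here --- it cites \cite[Theorem~2.2]{leygonie2021fiber} --- but it reproves the generalisation in Section~\ref{section_based_chain_complexes}, and the strategy there differs from yours in one structural respect. Rather than working inside the fiber, the paper first shows (Proposition~\ref{cor:polyhedral_decomp_unit_square}) that the cells $\cell_\critvals(\filta)$ indexed by \emph{all} functions $\filta:\basis\to\I$, not only those in $\persmap^{-1}(D)$, assemble into a polyhedral complex $\ambientPoly_\critvals$ whose underlying space is the whole cube $\I^\basis$. Once this ambient complex exists, the fiber is a subcomplex simply because it is a union of cells (Lemma~\ref{lem:wholecell} plus Theorem~\ref{thm:fiberispoly}), and the face and intersection axioms are inherited automatically. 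This buys modularity: the same $\ambientPoly_\critvals$ is reused in Section~\ref{sec_technical_results_for_restricted_filts} to handle lower-star and Vietoris-Rips restrictions, and the combinatorics of the Key Lemma need only be done once, in a context where the barcode $D$ plays no role. Your route, by contrast, keeps the fiber in view throughout and verifies both axioms directly; this is self-contained but entangles the polyhedral structure with membership in $\persmap^{-1}(D)$ at every step. Note also that the paper's own proof of Proposition~\ref{cor:polyhedral_decomp_unit_square} is quite terse (``follows from Lemmas~\ref{lem:cellispoly} and~\ref{lem:intersectionimpliescontainment}''), and your Key Lemma together with your common-refinement construction of $\filtc$ is precisely what is needed to fill that in.
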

In section~\ref{section_based_chain_complexes} we extend Theorems~\ref{theorem_polyhedral_complex_a} and~\ref{theorem_polyhedral_complex} to CW complexes and more generally to based chain complexes. 
\section{Data structures}
\label{section_compatible_data}
\subsection{Classifications of simplices compatible with a barcode}
\label{section_compatible_filtrations}
Let us fix a finite simplicial complex~$\SComplex$ and a barcode~$D$. 

An important part of our strategy will be to introduce the concept of~$D$-compatible factorization. This object allows one to define clean correspondences between simplices in~$\SComplex$ and interval endpoints. This, in turn, allows one to define equivalence classes of simplices according to whether or not they contribute to the barcode~$D$ -- a major benefit, in terms of overall organization. 

Let $f\in \persmap^{-1}(D)$ be given. Then $\FinEnd(D) = \{1,\cdots, \dim D\}\subseteq \mathrm{Im}(f)$. Note that containment may be proper because $f$ can take non-integer values, in general.  However, we claim  $\{1,\cdots, \dim D\} = \mathrm{Im}(f)$ when $f$ is injective; in this case simplices appear ``one at a time'' in the corresponding filtration, and a simple counting argument allows one to infer that $\dim D = \# \SComplex = \#\mathrm{Im}(f)$ as a result.  We make use of this property in the general case.

\begin{figure}[ht]
\centering
\includegraphics[width=0.7\textwidth]{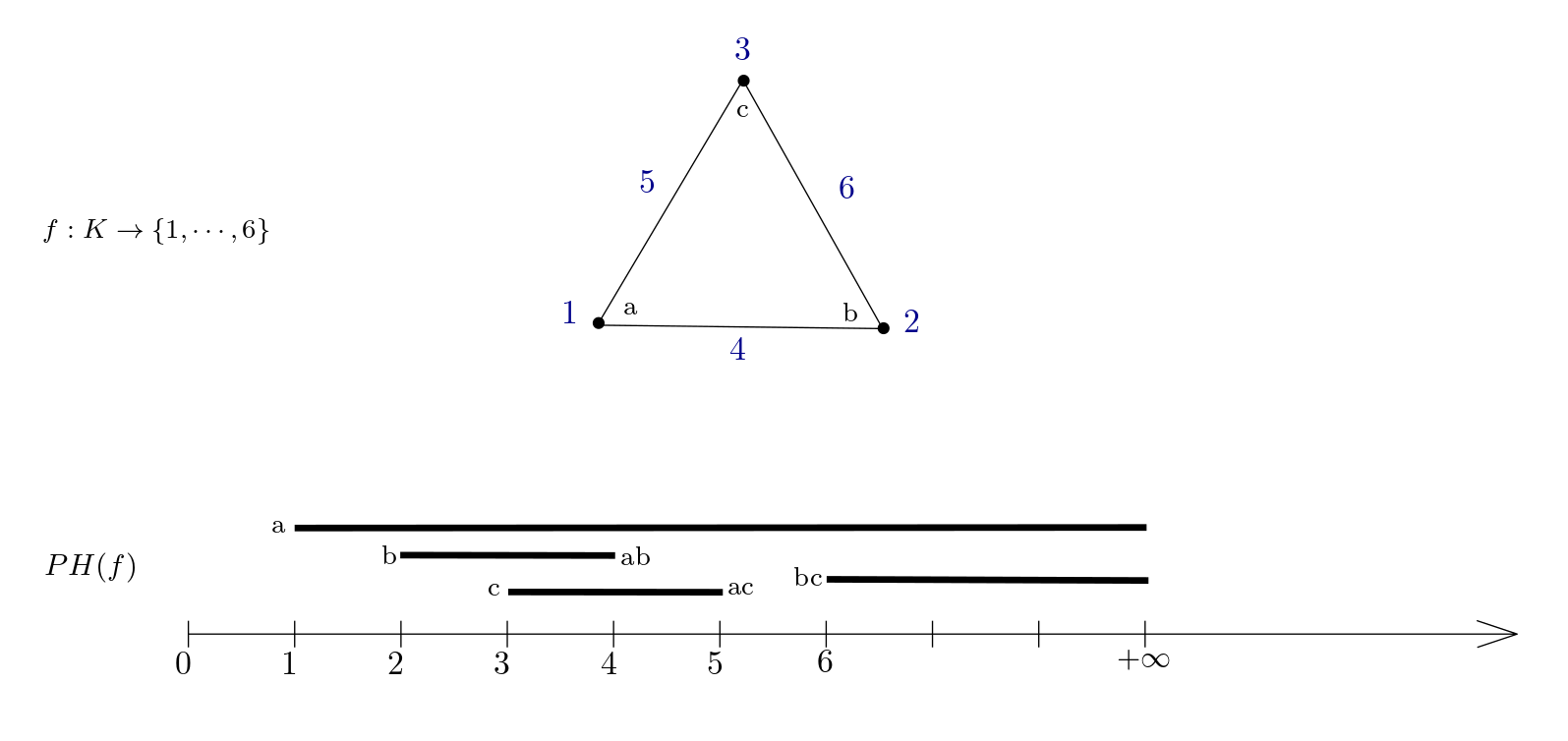}
\caption{{\footnotesize Given an injective filter~$f$, the finite  endpoints in the barcode~$\persmap(f)$ are canonically identified with simplices in~$\SComplex$. Here,~$\SComplex$ is the triangle with vertices denoted by a, b and c. The values (in dark blue) of the bijection $f:\SComplex \rightarrow \{1,\cdots, 6\}$ are displayed along the simplices of~$\SComplex$. No two distinct intervals in the resulting barcode (all homology degrees included) share a common end value, and these endpoints are annotated with the corresponding simplex}}
\label{fig:injective}
\end{figure}
\begin{definition}
\label{definition_compatible_injective_function}
Let~$f$ be an injective filter valued in~$\{1,\cdots, \sharp \SComplex\}$. The  \emph{associated total order} is the  unique total order on simplices~$\simplex_1<\cdots <\simplex_{\sharp \SComplex}$ such that  $f(\simplex_k) = k$ for each $k$. If there exists a non-decreasing continuous map~$\phi: \R \mapsto \I$ such that $\persmap(\phi\circ f)=D$, we say that~$(f,\phi)$ is a {\em $D$-compatible factorization}.
\end{definition}
\begin{remark}
\label{remark_equivariance}
Given a non-decreasing continuous map~$\phi$ and an arbitrary barcode~$D' = [D'_0, \ldots, D'_{\# \SComplex}]$, let $\phi(D')$ be the barcode such that $\phi(D')_p$ contains one interval of form~$(\phi(b'),\phi(d'))$ for each interval~$(b',d')$ in~$D'_p$ that satisfies~$\phi(b')\neq \phi(d')$. Here, by convention, we define $\phi(\infty) = \infty$. Then the persistence map is equivariant with respect to this action, in the sense that~$\persmap(\phi\circ f)= \phi(\persmap(f))$; see for instance~\cite[Lemma~1.5]{leygonie2021fiber}. This gives another perspective on the idea of $D$-compatibility: The barcode~$\persmap(f)$ has endpoints in bijection with the simplices of~$\SComplex$, and~$\phi$ \lq contracts' this barcode into~$D$. See Fig.~\ref{fig:nailing_and_free} for illustration.
\end{remark}

We now describe how a~$D$-compatible factorization~$(f,\phi)$ allows one to disambiguate the homological roles of each simplex. Let~$\simplex\in K$ be a given. By injectivity of~$f$, there exists a unique~$\simplex'\in K$ such that $(f(\simplex), f(\simplex'))$ or  $(f(\simplex'), f(\simplex))$ lies in the barcode $\persmap(f)$. We write $[\simplex,\simplex']_f$ for this pairing; where context leaves no room for confusion, we simply write~$[\simplex,\simplex']$. There is also the case where~$\simplex$ is unpaired, i.e. when it corresponds to an infinite interval~$(f(\simplex), \infty)$. We then write~$[\simplex ,\infty]$.
Then~$(f,\phi)$ gives rise to the following classification of pairs of simplices~$[\simplex,\simplex']$ (with the possibility that~$\simplex'=\infty$) according to whether or not~$(\phi\circ f(\simplex),\phi\circ f(\simplex'))$ is an interval in the barcode~$D$:
\begin{description}
\item[\namedlabel{itm:resp}{Critical}]~$(\phi\circ f(\simplex),\phi\circ f(\simplex'))=(b,d)$ for some non-trivial interval~$(b,d)\in D$; We say that~$[\simplex,\simplex']$ is {$(f,\phi)$-\em critical} for~$(b,d)$ and record this association via~$\PairingBirth(b,d):=\simplex$ and~$\PairingDeath(b,d):=\simplex'$;
\item[\namedlabel{itm:triv}{Non-Critical}] Otherwise~$\phi$ cancels the interval $(f(\simplex),f(\simplex'))$, i.e. $\phi\circ f(\simplex)=\phi\circ f(\simplex')$; We say that~$[\simplex,\simplex']$ is {\em $(f,\phi)$-non-critical}.
\end{description}

In particular we have a well-defined bijection of intervals in~$D$ with critical pairs:
\[\Pairing: D \to \SComplex \times (\SComplex\sqcup \{\infty\}), \quad (b,d) \longmapsto [\PairingBirth(b,d), \PairingDeath(b,d)]. \]%
By abuse of terminology we say that a simplex~$\simplex$ is {\em critical} if it is part of a critical pair, and that it is {\em non-critical} otherwise. Note that an unpaired simplex is necessarily critical for some~$(b,\infty)\in D$. In Fig.~\ref{fig:nailing_and_free}, we provide an example of $D$-compatible~$(f,\phi)$ and the resulting critical and non-critical simplices. 

There are uncountably infinitely many $D$-compatible factorizations. In the two following definitions we compress the necessary information into finitely many equivalence classes of $D$-compatible factorizations. 
\begin{definition}
\label{definition_filtration}
A {\em classification}~$\Filtration$ of simplices of~$\SComplex$ relative to~$D$, or {\em classification} for short, consists of:
\begin{itemize}
    \item[(i)] A bijective filter~$f:\SComplex\rightarrow \{1,\cdots, \sharp \SComplex\}$, or equivalently a total ordering~$\simplex_1<\cdots <\simplex_{\sharp \SComplex}$;
    \item[(ii)] A partition of simplices into consecutive ordered sets~$\Class_1,\cdots,\Class_m$, which we refer to as~{\em classes}:
      \begin{align*}
        \underbrace
            {\simplex_1<\simplex_2<\cdots<\simplex_{\sharp \Class_1}}_
            { \Class_1 }
        < 
        \underbrace
            {\cdots < \cdots < \cdots }_
            {\Class_2, \cdots , \Class_{m-1} } 
       <           
        \underbrace
            {\simplex_{(\sharp \SComplex- \sharp \Class_m +1)}<\cdots < \simplex_{\sharp \SComplex}}_
            {\Class_m };        
    \end{align*}
    \item[(iii)] For each interval endpoint~$j$, a class~$\Class_{i_j}$, with $j \longmapsto i_j$ a non-decreasing assignment.
\end{itemize}
\end{definition}
\begin{definition}
\label{definition_free_nail_compatibledata}
A $D$-compatible classification~$\Filtration$ is a classification induced by a~$D$-compatible factorization~$(f,\phi)$ as follows:
\begin{itemize}
    \item[(i)] $f:\SComplex\rightarrow \{1,\cdots, \sharp \SComplex\}$ equals the bijective filter of~$\Filtration$ inducing the order~$\simplex_1<\cdots <\simplex_{\sharp \SComplex}$; 
    \item[(ii)] The image of~$\phi\circ f$ has cardinality~$m$, i.e. we can write $\Im(\phi\circ f)=\{t_1<\cdots<t_m\}$, and the associated sequence of pre-images equal the classes of~$\Filtration$:
     \[(\phi\circ f)^{-1}(t_1)=\Class_1,\, \,  \cdots, \, \, (\phi\circ f)^{-1}(t_m)=\Class_m. \]
     \item[(iii)] For each interval endpoint~$j \in \FinEnd(D)$, we have $(\phi\circ f)^{-1}(j)=\Class_{i_j}$. 
\end{itemize}
\end{definition}
\begin{figure}[ht]
\centering
\includegraphics[width=0.9\textwidth]{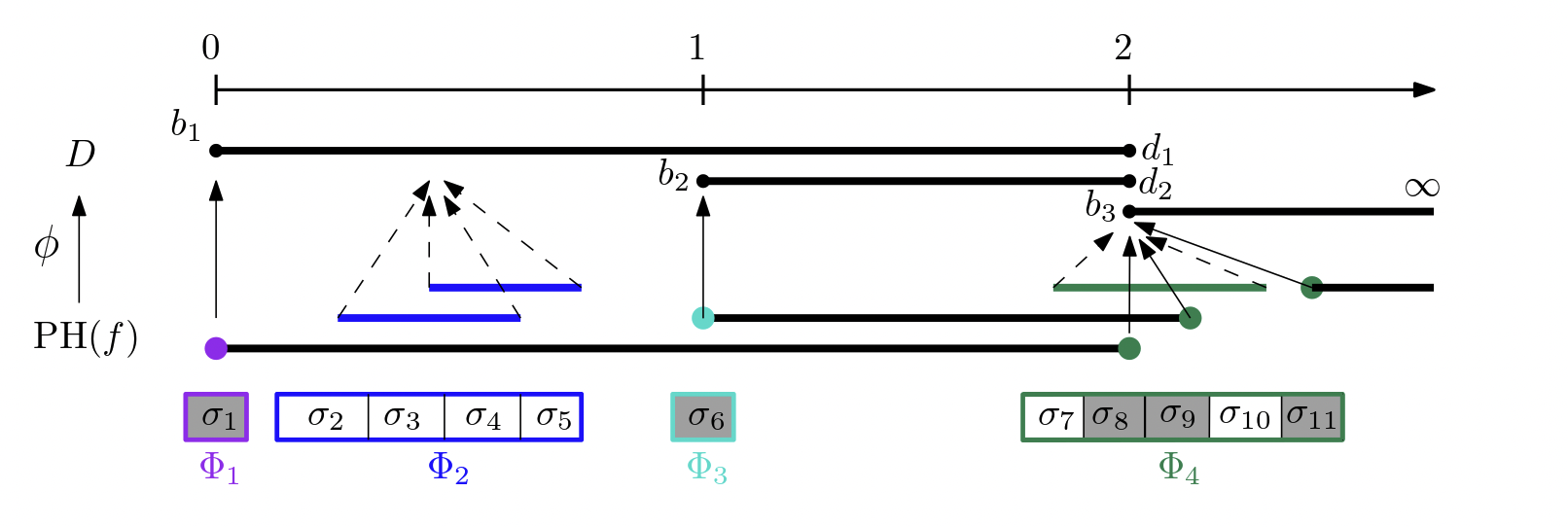}
\caption{ With a simplicial complex$~\SComplex$ and an injective filter~$f$ in the background, $\phi$ maps the  barcode~$\persmap(f)$ to~$D$ via the arrows going up, so that~$(f,\phi)$ is $D$-compatible. In the corresponding~$D$-compatible classification, there are~$4$ distinct equivalence classes $\Class_i$ of simplices, and they are distinguished by the use of different colors. Critical simplices are grayed and correspond to intervals via the upward solid arrows: Here~$\Pairing(b_1,d_1)=[\simplex_1,\simplex_8]$,~$\Pairing(b_2,d_2)=[\simplex_6,\simplex_9]$ and~$\Pairing(b_3,\infty)=[\simplex_{11},\infty]$, hence the critical simplices form the set~$\{\simplex_1,\simplex_6,\simplex_8,\simplex_9,\simplex_{11}\}$. %
The other simplices are non-critical and cancelled in pairs by the dotted arrows:~$\{\simplex_2,\simplex_3,\simplex_4,\simplex_5,\simplex_7,\simplex_{10}\}$.%
}
\label{fig:nailing_and_free}
\end{figure}
We depict all this data in Fig~\ref{fig:nailing_and_free}. Note that given a classification~$\Filtration$ the filter~$f$ induces pairings~$[\simplex,\simplex']$.  
\begin{proposition}
\label{proposition_filtration_is_D_compatible}
A classification~$\Filtration$ is the~$D$-compatible classification induced by some~$D$-compatible factorization if and only if the following two rules are satisfied:
\begin{description}
    \item[\namedlabel{RespRule}{Critical rule}:] For any~$1\leq j<j'\leq\dim D$ and~$0\leq p\leq \sharp \SComplex$, there are as many pairs~$[\simplex_j, \simplex_{j'}]_f$ with~$\simplex_j\in \Class_{i_j}$ and~$\simplex_{j'}\in \Class_{i_{j'}}$, here $\dim \simplex_j=p$, as there are copies of the interval~$(j,j')$ in~$D_p$;
    \item[\namedlabel{TrivRule}{Non-critical rule}:] For any remaining pair~$[\simplex ,\simplex']$, both~$\simplex$ and~$\simplex'$ belong to the same class~$\Class_i$.
\end{description}
\end{proposition}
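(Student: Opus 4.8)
The plan is to prove both directions of the equivalence by carefully analyzing how a $D$-compatible factorization $(f,\phi)$ interacts with the pairing structure induced by $f$. Throughout, recall that $f$ is injective, so the pairing $[\simplex,\simplex']_f$ depends only on $f$ (equivalently, only on the total order recorded in $\Filtration$), while the partition into classes and the assignment $j \mapsto i_j$ encode exactly the data that $\phi$ must supply.

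For the forward direction, assume $\Filtration$ is the $D$-compatible classification induced by some $D$-compatible factorization $(f,\phi)$. I would first establish the \textbf{Critical rule}. Fix $1 \le j < j' \le \dim D$ and $0 \le p \le \sharp\SComplex$. By Definition~\ref{definition_free_nail_compatibledata}(iii), a simplex $\simplex$ lies in $\Class_{i_j}$ iff $\phi\circ f(\simplex) = j$, and similarly for $j'$. So a pair $[\simplex_j,\simplex_{j'}]_f$ with $\simplex_j\in\Class_{i_j}$, $\simplex_{j'}\in\Class_{i_{j'}}$, $\dim\simplex_j = p$ is precisely a pair that $\phi$ sends to $(j,j')$; by the \ref{itm:resp}/\ref{itm:triv} dichotomy and the equivariance $\persmap(\phi\circ f) = \phi(\persmap(f)) = D$ from Remark~\ref{remark_equivariance}, such pairs are exactly the preimages under $\Pairing$ of the copies of $(j,j')$ in $D_p$. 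Counting both sides gives the rule. For the \textbf{Non-critical rule}, take any pair $[\simplex,\simplex']$ not of the above form; then $(\phi\circ f(\simplex), \phi\circ f(\simplex'))$ is not a non-trivial interval of $D$, so by the dichotomy $\phi$ cancels it, i.e.\ $\phi\circ f(\simplex) = \phi\circ f(\simplex')$, which by Definition~\ref{definition_free_nail_compatibledata}(ii) means $\simplex,\simplex'$ lie in the same class $\Class_i$. One subtlety to handle: a pair $[\simplex_j,\simplex_{j'}]_f$ whose endpoints land in classes $\Class_{i_j},\Class_{i_{j'}}$ with $j \ge j'$ (rather than $j < j'$) — I should check that the ordering of the classes forces $j<j'$ whenever $\simplex_j < \simplex_{j'}$, using that $j\mapsto i_j$ is non-decreasing and the classes are consecutive ordered blocks.

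For the converse, suppose $\Filtration$ is a classification satisfying both rules; I must \emph{construct} a $D$-compatible factorization $(f,\phi)$ inducing it. Take $f$ to be the bijective filter of $\Filtration$. For $\phi$, I would define a non-decreasing continuous $\phi:\R\to\I$ with $\phi$ constant on each class block — sending all of $\Class_i$ to a value $t_i$, with $t_1 < \cdots < t_m$ — and, crucially, arranged so that $t_{i_j} = j$ for each interval endpoint $j \in \FinEnd(D)$; this is possible precisely because $j\mapsto i_j$ is non-decreasing, so the required values $t_{i_j}=j$ are consistent with $t_1<\cdots<t_m$, and we fill in the remaining $t_i$'s by interpolation (strictly between consecutive constrained values, staying in $\I$). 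It then remains to verify $\persmap(\phi\circ f) = D$. Using equivariance, $\persmap(\phi\circ f) = \phi(\persmap(f))$, and $\persmap(f)$ has endpoints in bijection with simplices via the pairing $[\simplex,\simplex']_f$. Applying $\phi$: a pair $[\simplex,\simplex']_f$ contributes the interval $(\phi\circ f(\simplex), \phi\circ f(\simplex'))$ when the two endpoints get distinct $\phi$-values, and contributes nothing otherwise. By the \textbf{Non-critical rule}, the pairs that get cancelled are exactly those not straddling two classes $\Class_{i_j}, \Class_{i_{j'}}$, and by the \textbf{Critical rule} the surviving ones, in each degree $p$, reproduce exactly the multiset of intervals $(j,j')$ in $D_p$ (including the infinite intervals, which I should treat as the $j' = \infty$ case — those correspond to unpaired simplices, and the Critical rule as stated should be read to cover $j' = \dim D + 1$ or an analogous convention for $\infty$; I would make this bookkeeping explicit). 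Hence $\phi(\persmap(f)) = D$, so $(f,\phi)$ is $D$-compatible, and by construction it induces $\Filtration$.

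The main obstacle I anticipate is the careful bookkeeping around infinite intervals and the precise matching of \emph{multisets} (not just sets) of intervals across degrees: the Critical rule is phrased as an equality of cardinalities degree by degree, and I need the equivariance lemma to interact correctly with multiplicities and with the $\dim(b,d) = p$ grading. A secondary technical point is confirming that the total order of $\Filtration$ is genuinely a valid filter order (order-preserving on the face poset) — but this is built into Definition~\ref{definition_filtration}(i), so it is not really at issue; the real work is the combinatorial verification that "pairs straddling class blocks $i_j, i_{j'}$" corresponds bijectively, degree by degree and with multiplicity, to "intervals $(j,j')$ of $D$."
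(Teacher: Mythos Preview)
Your proposal is correct and follows essentially the same approach as the paper: for the converse direction you construct~$\phi$ as an order-preserving map sending each class block~$f(\Class_{i_j})$ to~$j$, which is precisely (and only) what the paper's one-line proof records. Your write-up is considerably more detailed than the paper's --- in particular you supply the forward direction and the verification via equivariance that $\persmap(\phi\circ f)=D$, both of which the paper leaves implicit --- but the underlying idea is identical.
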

\begin{proof}
Indeed, when these two rules are satisfied, we can define~$\phi$ directly on the classes~$\Class_1,\cdots,\Class_m$ as any order-preserving map sending~$f(\Class_{i_j})$ to~$j$. 
\end{proof}
%
It is natural to group~$D$-compatible factorizations~$(f,\phi)$ according to the classification~$\Filtration$ they induce: it is clear that whenever factorizations induce the same classification, then they induce the same pairs~$[\simplex,\simplex]$ of simplices, the same critical and non-critical pairs (and simplices) and the same bijection~$\Pairing$ from intervals in~$D$ to critical pairs. Therefore these concepts are defined as well given a~$D$-compatible classification~$\Filtration$. 
\subsection{Relations to the fiber}
\label{section_retrieving_polyhedra}

We explain how to retrieve the polyhedra that compose the fiber~$\persmap^{-1}(D)$ (see Theorem~\ref{theorem_polyhedral_complex}) from the set of~$D$-compatible classifications~$\Filtration$. Note that if two~$D$-compatible factorizations~$(f,\phi)$ and~$(f',\phi')$ induce the same classification~$\Filtration$, then by Axiom \ref{item:cellaxiom_inequality_background}, they determine the same polyhedron,~$\cell(\phi \circ f)=\cell(\phi' \circ f')$. Thus the following definition is unambiguous:
\[\cell(\Filtration):=\cell(\phi\circ \filta).\]
Moreover, $\cell(\Filtration)\subseteq \persmap^{-1}(D)$ by Theorem \ref{theorem_polyhedral_complex_a},  because~$(f,\phi)$ is $D$-compatible. Conversely, a filter~$\filtb \in \persmap^{-1}(D)$ can always be written as~$\filtb=\phi\circ \filta$ for some injective filter~$\filta$ and non-decreasing map~$\phi: \I \rightarrow \I$, so that~$\cell(\filtb)=\cell(\phi\circ \filta)$. Therefore we have the following result:
\begin{theorem}
\label{theorem_description_polytopes}
The set~$\big \{ \cell(\Filtration) \mid \Filtration \text{ is a $D$-compatible classification} \big \}$ is the polyhedral complex underlying~$\persmap^{-1}(D)$.  
\end{theorem}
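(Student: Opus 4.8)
The plan is to establish two set-theoretic inclusions and then verify the two defining axioms of a polyhedral complex. First I would show that every $\cell(\Filtration)$, for $\Filtration$ a $D$-compatible classification, actually appears as a cell $\cell(\filta)$ for some $\filta \in \persmap^{-1}(D)$: by Proposition~\ref{proposition_filtration_is_D_compatible} there exists a $D$-compatible factorization $(f,\phi)$ inducing $\Filtration$, and then $\phi \circ f \in \persmap^{-1}(D)$ by definition of $D$-compatibility, so $\cell(\Filtration) = \cell(\phi \circ f)$ is one of the cells appearing in Theorem~\ref{theorem_polyhedral_complex}. Conversely, given any $\filtb \in \persmap^{-1}(D)$, I would factor $\filtb = \phi \circ \filta$ with $\filta$ an injective filter (perturb $\filtb$ to break ties, respecting the order) and $\phi$ a non-decreasing map collapsing the perturbation back; then $(\filta,\phi)$ is $D$-compatible, it induces some classification $\Filtration$, and $\cell(\filtb) = \cell(\phi\circ\filta) = \cell(\Filtration)$. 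These two inclusions together show that the set of cells $\{\cell(\Filtration)\}$ coincides, as a set of subsets of $\I^{\SComplex}$, with the set $\{\cell(\filta) : \filta \in \persmap^{-1}(D)\}$.

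Once that identification is in place, Theorem~\ref{theorem_polyhedral_complex} does essentially all the remaining work: it already asserts that $\{\cell(\filta) \mid \filta \in \persmap^{-1}(D)\}$ is a polyhedral complex, i.e.\ that it is closed under taking faces and that any two members meet in a common face, and it already records that the union of the cells is $\persmap^{-1}(D)$. So after the set equality above, the statement follows immediately. The only genuine subtlety is making sure the two descriptions really are the \emph{same} collection and not merely collections with the same union, but that is exactly what the inclusions above give, together with the observation (made in the text preceding the theorem, via Axiom~\ref{item:cellaxiom_inequality_background}) that $\cell(\Filtration)$ is well defined independently of the choice of inducing factorization, so the map $\Filtration \mapsto \cell(\Filtration)$ is well posed.

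I would therefore structure the proof as: (1) recall that $\cell(\Filtration)$ is well defined, using Axiom~\ref{item:cellaxiom_inequality_background}; (2) prove $\cell(\Filtration) \in \{\cell(\filta) : \filta\in\persmap^{-1}(D)\}$ using Proposition~\ref{proposition_filtration_is_D_compatible} and Theorem~\ref{theorem_polyhedral_complex_a}; (3) prove the reverse containment by the perturbation-and-collapse factorization argument; (4) conclude $\{\cell(\Filtration)\} = \{\cell(\filta):\filta\in\persmap^{-1}(D)\}$ and invoke Theorem~\ref{theorem_polyhedral_complex}.

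The step I expect to be the main obstacle is step (3), specifically the claim that an arbitrary $\filtb \in \persmap^{-1}(D)$ factors as $\phi \circ \filta$ with $(\filta,\phi)$ genuinely $D$-compatible in the sense of Definition~\ref{definition_compatible_injective_function} — one must choose the tie-breaking perturbation $\filta$ so that it is still a \emph{filter} (order-preserving with respect to face inclusion, so its sublevel sets are subcomplexes) and so that $\phi$ can be taken non-decreasing with $\persmap(\phi\circ\filta) = D$; here the equivariance relation $\persmap(\phi\circ\filta)=\phi(\persmap(\filta))$ from Remark~\ref{remark_equivariance} is the key tool, since it reduces $D$-compatibility of $(\filta,\phi)$ to the single condition $\phi(\persmap(\filta)) = D = \persmap(\filtb)$, which holds by construction. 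Everything else is bookkeeping that has already been carried out in the surrounding discussion and in~\cite{leygonie2021fiber}.
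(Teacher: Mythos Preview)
Your proposal is correct and follows essentially the same approach as the paper: the argument preceding the theorem statement establishes exactly your steps (1)--(3) (well-definedness via Axiom~\ref{item:cellaxiom_inequality_background}, the inclusion $\cell(\Filtration)\subseteq\persmap^{-1}(D)$ via Theorem~\ref{theorem_polyhedral_complex_a}, and the reverse via factoring an arbitrary $\filtb\in\persmap^{-1}(D)$ as $\phi\circ\filta$ with $\filta$ injective), then defers to Theorem~\ref{theorem_polyhedral_complex} for the polyhedral-complex structure. Your discussion of the factorization step is more careful than the paper's one-line treatment, but the route is the same.
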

The upshot is that it is enough to compute the $D$-compatible classifications~$\Filtration$ in order to cover the fiber~$\persmap^{-1}(D)$. 
\section{Algorithm}
\label{section_computing_fiber}
We now propose an algorithm to retrieve the fiber~$\persmap^{-1}(D)$, i.e. that computes all the $D$-compatible classifications~$\Filtration$ of the previous section. Our implementation is simple in spirit: Algorithm~\ref{alg:DFS_algo_compatible_data} builds these classifications from scratch, simplex by simplex, and tries all the possible ways to do so. Along the way the algorithm manipulates {\em partial}~$D$-compatible classifications, which can be thought of as the result of cutting a~$D$-compatible classification~$\Filtration$ at a given step (see~Fig.~\ref{fig:partial_filt}).

Given an integer~$1\leq j\leq \dim D+1$, let~$\Din=(\Din_0,\cdots, \Din_{\dimK})$ be the truncated version of~$D$ such that 
\begin{align*}
    \Din_p :=   \bigg\{(b,d) \in D_p : d < j \bigg\} \cup \bigg\{ (b, \infty) \in D_p : (b,d) \in D_p, \; \; b< j \leq d \bigg\}.
\end{align*}
for each $0 \le p \le \dimK$.
\begin{definition}
\label{definition_partial_filtration}
A {\em partial classification} is a tuple 
\[\pdata=\big( \FiltrationIn, j, \ClassCur, \IntervalsCur,\TrivCur)\]
where~$\FiltrationIn$ is a~$\Din$-compatible classification, and the set~$\ClassCur$ is a linearly ordered subset of $\SComplex \backslash \FiltrationIn$ called the \emph{current class}. We say that the current class~$\ClassCur$ is {\em incomplete} under either of the following two non-exclusive conditions:

\begin{center}
 ({\bf \ref{RespRule} violation})~$\IntervalsCur\neq \emptyset$;   \hspace{2mm}    and   \hspace{2mm}   ({\bf \ref{TrivRule} violation})~$\TrivCur\neq \emptyset$.
\end{center}
Otherwise~$\ClassCur$ is {\em complete}. The ({\bf \ref{RespRule} violation}) occurs when the current class~$\ClassCur$ is a truncated version of the critical class~$\Class_{i_j}$ containing the simplices critical for the interval endpoint~$j$, and there are still intervals~$(b,j)$ or~$(j,d)$ that are unpaired with simplices of~$\SComplex$. These missing intervals are stored in~$\IntervalsCur$. Meanwhile, the ({\bf \ref{TrivRule} violation}) indicates that some non-critical birth simplices~$\tau \in \ClassCur$ are not yet paired with a death non-critical simplex, i.e.~$\tau$ creates a $\dim \tau$-cycle which must be destroyed in the same class. These unpaired non-critical simplices are stored in~$\TrivCur$. 
\end{definition}

\begin{figure}[ht]
\centering
\includegraphics[width=0.7\textwidth]{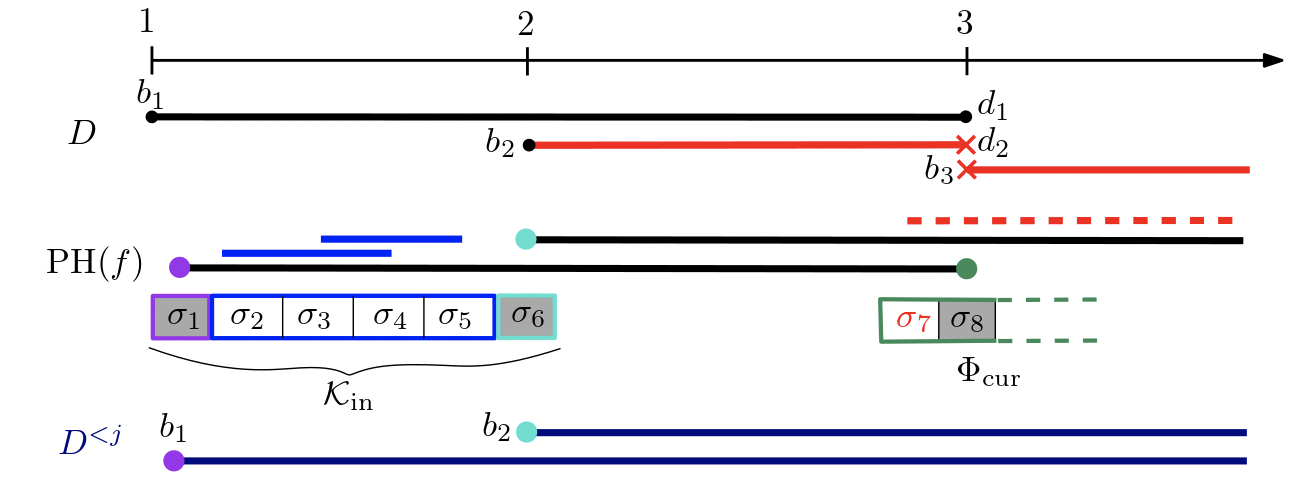}
\caption{ {\footnotesize We consider the barcode~$D$ (top) of Fig.~\ref{fig:nailing_and_free} and cut the $D$-compatible classification at~$\simplex_8$ to get a partial classification. The first three classes~$\Phi_1,\Phi_2,\Phi_3$ are fully present in the resulting partial classification~$\pdata$. So here~$\FiltrationIn$ is a~$\Din$-compatible classification where~$\Din$ (bottom) is the barcode spanned by the interval endpoints~$1$ and~$2$. The next endpoint value to consider is then~$j=3$. The current class~$\ClassCur=\{\simplex_7,\simplex_8\}$ is incomplete for two reasons. On the one hand~{\bf \ref{RespRule}} is violated: The current class~$\ClassCur$ is intended to contain critical simplices for the endpoint~$j=3$, as already it contains the critical simplex~$\simplex_8=\PairingDeath(b_1,d_1)$, but the intervals~$(b_2,d_2)$ and~$(b_3,\infty)$ (in red) also have the endpoint~$j=3$ and are yet unmatched: so~$\IntervalsCur=\{(b_2,d_2),(b_3,\infty)\}$. Hence we should define the simplex~$\PairingDeath(b_2,d_2)$ that destroys the cycle generated by~$\simplex_6=\PairingBirth(b_2,d_2)$, and a simplex~$\PairingDeath(b_3,\infty)$ that creates a~$\dim (b_3,\infty)$-cycle, before completion of the class. On the other hand~{\bf \ref{TrivRule}} is violated: here~$\TrivCur=\{\simplex_7\}$ contains the unpaired simplex~$\simplex_7$ which generates an unexpected (dotted red) interval, which shall be destroyed by another simplex before completion of the class.}}
\label{fig:partial_filt}
\end{figure}
Algorithm~\ref{alg:DFS_algo_compatible_data} builds all the partial classifications starting with the empty one:
\[\pdata=\big( \FiltrationIn, j, \ClassCur, \IntervalsCur,\TrivCur):= (\emptyset,1,\emptyset,D^{<2},\emptyset),\] 
and records complete $D$-compatible classifications in a list~$\Results$. Note that the set~$\IntervalsCur$ of intervals is initialised with intervals starting at the first endpoint~$1$ of~$D$, since it is necessary that the first simplex to enter the filtration will be critical for such an interval. To check that~$\Filtration{}$ is complete is done at line 1 of Algorithm \ref{alg:DFS_algo_compatible_data}, and means that all the simplices of~$\SComplex$ have entered the filtration and that the target barcode~$D$ has been reached. 

If the partial classification~$\pdata$ is not complete but the current class~$\ClassCur$ is complete, i.e. the algorithm checks that~{\bf \ref{RespRule}} and~{\bf \ref{TrivRule}} are satisfied (line 3), then~$\ClassCur$ is added to the classification (line 4), and the algorithm prepares the next class to build according to the following alternative. Either the next class will be the class~$\ClassCur=\Class_{i_{j}}$ that will contain all simplices critical for intervals that have the endpoint~$j$ (lines~6 to~9), or the next class will contain only non-critical simplices (line~10). In practice either we fill~$\IntervalsCur$ with all intervals~$(b,j)$ and~$(j,d)$ in~$D$ that have~$j$ as endpoint, or we set~$\IntervalsCur=\emptyset$.

The remainder of the algorithm enumerates all ways to extend the partial classification (which is provided by the user as input) by adding one simplex to the current class~$\ClassCur$. There are four possible types of extensions, which we explain and illustrate with the example depicted in Fig.~\ref{fig:currentclass}.
\begin{figure}[ht]
\centering
\includegraphics[width=0.5\textwidth]{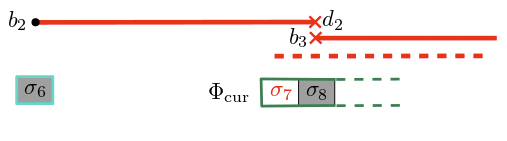}
\caption{ {\footnotesize We consider the part of the partial classification from Fig.~\ref{fig:partial_filt} that is relevant to the current class~$\ClassCur$ only. For~$\ClassCur$ to be completed we must at least find critical simplices~$\PairingBirth(b_3,\infty)$ and~$\PairingDeath(b_2,d_2)$ to account for the red intervals~$(b_2,d_2),(b_3,\infty)$, and pair the non-critical simplex~$\simplex_7$ with another non-critical simplex to destroy the unexpected dotted red interval. }}
\label{fig:currentclass}
\end{figure}
\begin{figure}[ht]
     \centering
     \begin{subfigure}[b]{0.4\textwidth}
         \centering
         \includegraphics[width=\textwidth]{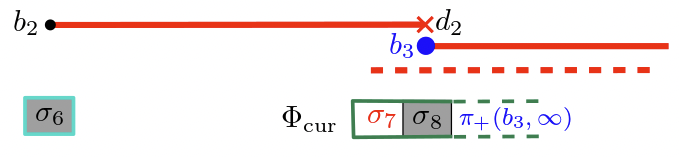}
         \caption{Adding a birth simplex~$\PairingBirth(b_3,\infty)$ generating a~$\dim(b_3,\infty)$-cycle, to be critical for the interval~$(b_3,\infty)$. }
         \label{fig:birth_extension}
     \end{subfigure}
     \hfill
     \begin{subfigure}[b]{0.4\textwidth}
         \centering
         \includegraphics[width=\textwidth]{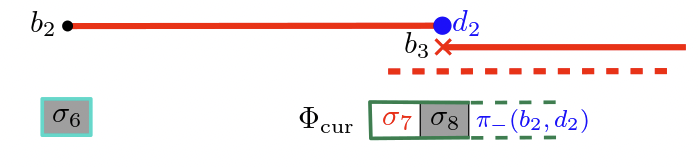}
         \caption{Adding a death simplex~$\PairingDeath(b_2,d_2)$ destroying the~$\dim(b_2,d_2)$-cycle generated by~$\PairingBirth(b_2,d_2)$, to be critical for the interval~$(b_2,d_2)$.}
         \label{fig:death_extension}
     \end{subfigure}
     \begin{subfigure}[b]{0.4\textwidth}
         \centering
         \includegraphics[width=\textwidth]{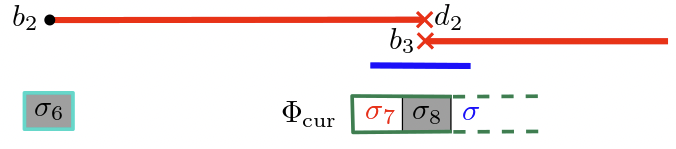}
         \caption{Finding a simplex~$\simplex$ to destroy the cycle generated by an unpaired non-critical simplex~$\simplex_7\in \TrivCur$, resulting in the removal of~$\simplex_7$ in~$\TrivCur$.}
         \label{fig:trivial_death_extension}
     \end{subfigure}
     \hfill
     \begin{subfigure}[b]{0.4\textwidth}
         \centering
         \includegraphics[width=\textwidth]{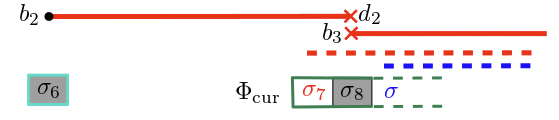}
         \caption{Finding a non-critical simplex~$\simplex$ generating a new cycle, resulting in the addition of~$\simplex$ in~$\TrivCur$.}
         \label{fig:trivial_birth_extension}
     \end{subfigure}
    \caption{The four types of extensions of~$\ClassCur$ (highlighted in blue).}
        \label{fig:extensions}
\end{figure}
\begin{itemize}
    \item(Lines~14 to~19) The first type of extension consists in adding a simplex~$\PairingBirth(j,d)$ critical for the birth of an interval~$(j,d)\in \IntervalsCur$ as in Fig.~\ref{fig:birth_extension}.
    \item(Lines~20 to~26) The second type of extension consists in adding a critical simplex~$\PairingDeath(b,j)$ to account for the death of an interval~$(b,j)\in \IntervalsCur$ as in Fig.~\ref{fig:death_extension}.
    \item (Lines~27 to~31) The third type of extension consists in adding a simplex~$\simplex$ to pair with a non-critical unpaired simplex~$\simplex'\in \TrivCur$ as in Fig.~\ref{fig:trivial_death_extension}.
    \item (Lines~31 to~36) The fourth type of extension consists in adding a non-critical simplex~$\simplex$ to~$\TrivCur$ as in Fig.~\ref{fig:trivial_birth_extension}, i.e. a simplex that creates a $\dim \simplex$-cycle that shall later be destroyed. 
\end{itemize}

To figure out which simplices~$\simplex\notin \FiltrationIn\cup \ClassCur$ can extend the partial~$D$-compatible classifications~$\pdata$ through the algorithm, we maintain a matrix~$\delta$ which is derived from the boundary matrix~$\partial$ via elementary row and column operations. The range of possible extensions can be deduced from the sparsity pattern of $\delta$ as follows;  see~\cite{cohen2006vines} and Remark~\ref{remark_matrix_reduction} for full details on the construction and interpretation of this matrix.
\begin{itemize}
    \item Simplex~$\simplex$ can be added as a (critical or non-critical) simplex to~$\ClassCur$ if and only if~$\delta \simplex=0$ or the lowest non-zero entry of column associated to~$\simplex$ corresponds to another simplex~$\simplex'$ that already belongs to the partial classification:~$\simplex'\in \FiltrationIn\cup \ClassCur$ . We then write~$\low (\delta \simplex):=\simplex'$;
    \item If a simplex~$\simplex$ such that~$\delta \simplex=0$ is added to~$\ClassCur$, then it is a birth simplex and creates a new pair~$[\simplex,\infty]$;
    \item If a simplex~$\simplex$ such that~$\low (\delta \simplex)=\simplex'\in \FiltrationIn\cup \ClassCur$ is added to~$\ClassCur$, then it is a death simplex, which has the effect to replace a pair~$[\simplex',\infty]$ in~$\pdata$ with a new pair~$[\simplex',\simplex]$.
\end{itemize}

Given a complete~$D$-compatible classification, recall that we have a correspondence~$\Pairing$ of intervals in~$D$ with critical pairs. When~$\pdata$ is a (non-complete) partial $D$-compatible classification, the correspondence is not necessarily defined for all intervals in~$D$: For instance there are no birth and death critical simplices for~$(b,d)$ in~$\pdata$, for any interval~$(b,d)\in D$ such that~$j<b<d$. In this case we only have a partially defined correspondence, which we indicate by the conventions~$\PairingBirth(b,d):=\emptyset$ and~$\PairingDeath(b,d):=\emptyset$. Note that it is also possible to have~$\Pairing(b,d)=(\simplex,\emptyset)$ whenever we already have a birth critical simplex~$\simplex=\PairingBirth(b,d)$ in~$\pdata$ that is not yet associated with a death critical simplex. In the algorithm, we store and update the partially defined correspondence~$\Pairing$ as we incrementally construct the partial classification~$\pdata$.

Finally, to improve the time-efficiency of our algorithm, we maintain an array~$\MaxFree \in \N^{\dimK+1}$ of integers that constrain the non-critical simplices that can be added to the classification in each dimension. In the array~$\MaxFree=(\MaxFree_0,\cdots, \MaxFree_{\dimK})$, each~$\MaxFree_{p}$ is the number of non-critical positive $p$-simplices that remain to be added in the classification. Since non-critical simplices come in pairs,~$\MaxFree_{p}$ is also the number of non-critical negative $(p+1)$-simplices that remain to be added in the classification: At initialization this is the rank~$\mathrm{rank}(\partial_{p+1})$ of the boundary matrix restricted to~$(p+1)$-simplices, minus the number of bounded bars~$(b,d)\in D$ in degree~$p$ (because those bars are in 1-1 correspondence with negative critical~$(p+1)$-simplices). By convention~$\MaxFree_{-1}=\MaxFree_{\dimK+1}=0$.

In practice we call the main algorithm $\AlgoComputeFiltrations(\SComplex,D)$ (see Alg~\ref{alg:compute_filtrations}) to build the initial partial classification and then call the exploration (see Alg.~\ref{alg:DFS_algo_compatible_data}) as a subroutine $\DFS(\pdata,\Results,\delta,\Pairing,\MaxFree)$.
\begin{algorithm}[ht]
 \caption{ $\AlgoComputeFiltrations(\SComplex,D)$ }
 \label{alg:compute_filtrations}
\begin{algorithmic}[1]
\REQUIRE Finite simplicial complex~$\SComplex$ and target barcode~$D$
\ENSURE List~$\Results$ of all $D$-compatible classifications
\STATE $\pdata= \big( \FiltrationIn,j, \ClassCur,  \IntervalsCur,\TrivCur)\gets (\emptyset,\emptyset,1,\emptyset, D^{<2},\emptyset)$
\STATE{Initialise~$\delta$ as the boundary matrix of~$\SComplex$}
\STATE{Initialise correspondence~$\Pairing(b,d):=[\emptyset,\emptyset]$ for bounded~$(b,d)\in D$ and~$\Pairing(b,\infty):=[\emptyset,\infty]$ for infinite~$(b,\infty)\in D$}%
\STATE $\MaxFree_{p}\gets \mathrm{rank}(\partial_{p+1})- \sharp \big\{(b,d)\in D \, | \, \dim (b,d)= p+1, \, d<\infty \big\}$, for $0\leq p \leq \dimK$ 
\RETURN $\DFS(\pdata,\Results,\delta,\Pairing,\MaxFree)$
\end{algorithmic}
\end{algorithm}

\begin{algorithm}[ht]
 \caption{ $\DFS(\pdata, \Results,\delta,\Pairing,\MaxFree)$ }
 \label{alg:DFS_algo_compatible_data}
\begin{algorithmic}[1]
\REQUIRE Partial classification $\pdata$, list~$\Results$ of $D$-compatible classifications, matrix~$\delta$, pairing~$\Pairing$, integers~$\MaxFree\in \N^{d+1}$
\ENSURE List~$\Results$ enriched with $D$-compatible classifications that extend the classification~$\FiltrationIn$
\IF{$|\FiltrationIn|=\sharp \SComplex$ and $\Din=D$ and $\ClassCur=\emptyset$} \RETURN $\Results\cup \{\FiltrationIn{}\}$ 
\ELSIF{$\ClassCur\neq\emptyset$ and $\IntervalsCur= \emptyset$ and $\TrivCur=\emptyset$}
\STATE{$\FiltrationIn'\gets \FiltrationIn\cup \ClassCur$}
\STATE{{\bf if} $\ClassCur$ contains critical simplices for the endpoint~$j$ {\bf then} $j'\gets j+1$ {\bf else} $j'\gets j$}
\IF{$\Dinbis \neq D$} \STATE $\IntervalsCur'\gets \{(b,d)\in D | \, b=j' \text{ or } d=j'\}$
\STATE $\pdata' \gets (\FiltrationIn',j', \emptyset,\emptyset,\IntervalsCur')$
\STATE $\Results\gets \DFS(\pdata',\Results,\delta,\Pairing,\MaxFree)$ \ENDIF
\STATE $\pdata' \gets (\FiltrationIn',j', \emptyset,\emptyset,\emptyset)$ 
\RETURN $\DFS(\pdata',\Results,\delta,\Pairing,\MaxFree)$
\ELSE
\FOR{interval~$(b,d)$ in~$\IntervalsCur$}
\IF{$b=j$}
\FOR{simplex~$\simplex \notin \FiltrationIn \cup \ClassCur$ such that~$\delta\simplex=0$ and~$\dim \simplex= \dim (b,d)$}
\STATE $\ClassCur'\gets \ClassCur \cup \{\simplex\}$ and $\IntervalsCur'\gets \IntervalsCur\setminus \{(b,d)\}$
\STATE $\pdata' \gets (\FiltrationIn, j, \ClassCur', \IntervalsCur', \TrivCur)$
\STATE $\Pairing'\gets \Pairing$ where $\Pairing'$ agrees with~$\Pairing$ on all inputs, except that  $\PairingBirth'(b,d):=\simplex$ instead of $\emptyset$
\STATE $\Results\gets \DFS(\pdata',\Results,\delta,\Pairing',\MaxFree)$
\ENDFOR
\ELSIF{$d={j}$}
%
\FOR{simplex~$\simplex\notin \FiltrationIn \cup \ClassCur$ such that~$\low (\delta\simplex)=\PairingBirth(b,d)$ and~$\dim \simplex= \dim (b,d) +1$}
\STATE $\ClassCur'\gets \ClassCur \cup \{\simplex\}$ and $\IntervalsCur'\gets \IntervalsCur\setminus \{(b,d)\}$
\STATE $\pdata' \gets (\FiltrationIn, j, \ClassCur', \IntervalsCur', \TrivCur)$
\STATE $\delta' \gets$ reduce~$\delta$ with respect to the column of $\simplex$, c.f.\ Remark \ref{remark_matrix_reduction}.

\STATE $\Pairing'\gets \Pairing$, where $\Pairing'$ agrees with $\Pairing$ on all inputs, except that  $\PairingDeath'(b,d):=\simplex'$ instead of $\emptyset$
\STATE $\Results\gets \DFS(\pdata',\Results,\delta',\Pairing',\MaxFree)$ 
\ENDFOR
\ENDIF
\ENDFOR

\FOR{simplex~$\tau \in \TrivCur$}
\FOR{simplex~$\simplex\notin \FiltrationIn \cup \ClassCur$ such that~$\low (\delta\simplex)=\tau$ and~$\dim \simplex= \dim \tau +1$}
\STATE $\pdata' \gets (\FiltrationIn,j , \ClassCur\cup \{\simplex\},\IntervalsCur, \TrivCur\setminus\{\tau\})$
\STATE $\delta' \gets$ reduce~$\delta$ with respect to the column of~$\simplex$, c.f. Remark \ref{remark_matrix_reduction}
\STATE $\Results\gets \DFS(\pdata',\Results,\delta',\Pairing,\MaxFree)$ \ENDFOR
\ENDFOR
\FOR{simplex~$\simplex\notin \FiltrationIn \cup \ClassCur$ such that~$\delta\simplex=0$}

\IF{$\MaxFree_{\dim \simplex}>0$}
\STATE $\MaxFree_{\dim \simplex }\gets \MaxFree_{\dim \simplex} -1$
\STATE $\pdata' \gets (\FiltrationIn,j , \ClassCur\cup \{\simplex\}, \IntervalsCur, \TrivCur\cup \{\simplex\})$
\STATE $\Results\gets \DFS(\pdata',\Results,\delta,\Pairing,\MaxFree)$
\ENDIF \ENDFOR
\RETURN $\Results$ \ENDIF

\end{algorithmic}
\end{algorithm}
\begin{theorem}
\label{theorem_algo_correct}
Algorithm~$\AlgoComputeFiltrations(\SComplex,D)$ returns the list of all~$D$-compatible classifications.
\end{theorem}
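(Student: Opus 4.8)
The plan is to prove the two inclusions ``every classification recorded in $\Results$ is $D$-compatible'' (soundness) and ``every $D$-compatible classification is recorded in $\Results$'' (completeness), in each case by induction along the recursion tree of $\DFS$; termination is clear, since every recursive call enlarges $\FiltrationIn\cup\ClassCur$ and $\SComplex$ is finite. The bridge between the combinatorial data $(\FiltrationIn,j,\ClassCur,\IntervalsCur,\TrivCur)$ and the defining axioms \ref{RespRule}, \ref{TrivRule} of a $D$-compatible classification is Proposition~\ref{proposition_filtration_is_D_compatible}. The technical backbone of both directions is a \emph{matrix invariant}: along any branch of the recursion, the matrix $\delta$ supplied to $\DFS(\pdata,\ldots)$ is a fully reduced form, in the sense of \cite{cohen2006vines} and Remark~\ref{remark_matrix_reduction}, of the boundary submatrix whose columns are the simplices of $\FiltrationIn\cup\ClassCur$ taken in the order recorded by $\pdata$. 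This invariant is preserved because the only updates to $\delta$ occur when a death simplex $\simplex$ is appended, where $\delta'$ is $\delta$ after a left-to-right reduction of the column of $\simplex$ --- exactly the operation that extends a reduced matrix by one further reduced column. Granting it, the branch conditions ``$\delta\simplex=0$'' and ``$\low(\delta\simplex)=\simplex'\in\FiltrationIn\cup\ClassCur$'' are precisely the conditions for $\simplex$, appended to the current partial filtration, to be respectively a positive (birth) simplex or a negative (death) simplex pairing with the class born at $\simplex'$; and by canonicity of the persistence pairing, the pair $[\simplex,\simplex']$ thereby produced is independent of the reduction choices made along the way.

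For soundness I would show, by induction on the recursion, that every tuple $\pdata$ on which $\DFS$ is invoked is a genuine partial classification in the sense of Definition~\ref{definition_partial_filtration} --- in particular $\FiltrationIn$ is $\Din$-compatible --- that $\Pairing$ records the pairing induced on $\FiltrationIn\cup\ClassCur$ by the current order, and that each $\MaxFree_p$ equals the number of non-critical positive $p$-simplices still absent from $\FiltrationIn\cup\ClassCur$. The one delicate step is the closure of a class: when $\ClassCur$ is complete one sets $\FiltrationIn'=\FiltrationIn\cup\ClassCur$ and passes to the next endpoint $j'$, and one must check that $\FiltrationIn'$ is then $\Dinbis$-compatible. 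This follows from Proposition~\ref{proposition_filtration_is_D_compatible} applied with the truncated barcode $\Dinbis$: the condition $\IntervalsCur=\emptyset$ gives \ref{RespRule} (every interval of $\Dinbis$ with an endpoint at the value just processed, including the semi-infinite ones produced by bars that straddle it, has its critical simplices correctly placed), while $\TrivCur=\emptyset$ together with the inductive hypothesis gives \ref{TrivRule} (every remaining pair lies within a single class). When the base case on line~1 is reached one has $|\FiltrationIn|=\sharp\SComplex$, $\ClassCur=\emptyset$ and $\Din=D$, so $\FiltrationIn$ is a bona fide $D$-compatible classification and is added to $\Results$.

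For completeness I would fix a $D$-compatible classification $\Filtration$, with its order $\simplex_1<\cdots<\simplex_{\sharp\SComplex}$, class partition $\Class_1<\cdots<\Class_m$, endpoint assignment $j\mapsto i_j$ (which is in fact strictly increasing on $\FinEnd(D)$), and induced pairing $\Pairing$; by Proposition~\ref{proposition_filtration_is_D_compatible} it satisfies \ref{RespRule} and \ref{TrivRule}. I would then follow the unique branch of the recursion that reconstructs $\Filtration$, maintaining the invariant that the current $\pdata$ is the truncation of $\Filtration$ after a prefix $\simplex_1,\ldots,\simplex_k$ of its order, with $\FiltrationIn$ the completed classes, $\ClassCur$ a prefix of the class in progress, and $\IntervalsCur$, $\TrivCur$ its outstanding obligations. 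At a node whose current class is incomplete, reading off from $\Filtration$ the role of the next simplex $\simplex_{k+1}$ --- critical birth, critical death, non-critical death, or non-critical birth --- singles out one of the four extension types; the matrix invariant guarantees $\simplex_{k+1}$ satisfies the sparsity condition gating that loop (and the required dimension matches by construction), the $\MaxFree$ invariant guarantees $\MaxFree_{\dim\simplex_{k+1}}>0$ in the non-critical-birth case (since $D$-compatibility pins down the total number of non-critical positive simplices in each dimension), and $\IntervalsCur$, resp.\ $\TrivCur$, contains the relevant interval, resp.\ simplex. At a node whose current class is complete, the test of whether that class held a critical simplex for $j$ is dictated by $\Filtration$ and agrees with the class-to-endpoint assignment $j\mapsto i_j$, and the recursion then continues along the next truncation of $\Filtration$. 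Since $\DFS$ explores every branch, this one is taken and terminates at line~1 with $\FiltrationIn=\Filtration$, so $\Filtration\in\Results$; uniqueness of the branch shows moreover that no classification is listed twice.

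The main obstacle will be the matrix invariant and its interpretation: one has to argue carefully that the algorithm's incremental reduction --- which only ever operates on the freshly inserted column, never revisiting earlier ones --- really does coincide with a full reduction of the current partial boundary matrix, and that the pairing it produces, hence the meaning of the gate ``$\low(\delta\simplex)=\simplex'$'', is insensitive to those choices. Both are classical facts about the persistence reduction algorithm and vineyard updates \cite{cohen2006vines}, recalled in Remark~\ref{remark_matrix_reduction}; once they are in place, tracking the updates of $\Pairing$, $\IntervalsCur$, $\TrivCur$ and $\MaxFree$ along the recursion is routine, and the two inductions above go through.
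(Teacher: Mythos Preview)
Your proposal is correct and considerably more thorough than the paper's own argument. The paper proves only completeness: it takes soundness for granted (anything recorded at Line~1 is the $\FiltrationIn$-component of a complete partial classification, hence $D$-compatible by Definition~\ref{definition_partial_filtration}), and then shows by backward induction on $|\FiltrationIn\cup\ClassCur|$ that every partial classification $\pdata$ is visited --- removing the last simplex of $\ClassCur$ (or, when $\ClassCur=\emptyset$, demoting the last class $\Class_m$ back to the role of $\ClassCur$) produces a smaller $\pdata'$ from which $\pdata$ is reached via Lines~3--11 or one of the four extensions in Lines~14--37. Your forward branch-following argument for completeness is the same induction read in the opposite direction; your explicit treatment of soundness, of the matrix invariant behind the $\low(\delta\simplex)$ tests, and of the $\MaxFree$ bookkeeping spells out points the paper leaves to Remark~\ref{remark_matrix_reduction} or omits entirely.

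One small slip: it is not true that \emph{every} recursive call strictly enlarges $\FiltrationIn\cup\ClassCur$. The calls on Lines~9 and~11 absorb a nonempty $\ClassCur$ into $\FiltrationIn'$ and reset the current class to $\emptyset$, so the union is unchanged. Termination still holds, for instance because the pair $(|\FiltrationIn|,\,|\FiltrationIn\cup\ClassCur|)$ strictly increases in lexicographic order at each recursive call (the first coordinate at Lines~9 and~11, the second at Lines~19, 26, 31, 36).
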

\begin{proof}
From Line~$1$ of Alg.~\ref{alg:DFS_algo_compatible_data}, the output consists in the $D$-compatible classifications~$\FiltrationIn$ extracted from complete partial classifications~$\pdata$ that are encountered by the algorithm. Thus it suffices to show that any given partial classification~$\pdata$ is explored by the algorithm. 
We proceed by induction on the number~$m$ of classes~$\Class_1,\cdots,\Class_m$ forming~$\FiltrationIn$, and on the number of simplices in~$\Class_1 \sqcup \cdots \sqcup \Class_m \sqcup \ClassCur$. Note that the empty partial classification is visited at the beginning of the algorithm, so we may assume that~$\Class_1 \sqcup \cdots \sqcup \Class_m \sqcup \ClassCur\neq \emptyset$. 
If~$\ClassCur=\emptyset$, we have~$m\geq 1$ and~$\Phi_m\neq \emptyset$. We can then form the partial classification~$\pdata'$ with~$\FiltrationIn':=\Class_1\sqcup \cdots \sqcup \Class_{m-1}$,~$\ClassCur':=\Class_m$,~$\IntervalsCur':=\emptyset$,~$(\TrivCur)':=\emptyset$, and~$j':=j-1$ or~$j':=j$  depending on whether~$ \Class_m=\Class_{i_{j-1}}$ or not. Clearly then, by Lines~$3-10$ of Alg.~\ref{alg:DFS_algo_compatible_data}, if~$\pdata'$ is explored by the algorithm, then so is~$\pdata$. 
Otherwise,~$\ClassCur\neq \emptyset$ has a last simplex~$\simplex$. We can then form~$\pdata'$ by removing~$\simplex$, i.e.~$\ClassCur':=\ClassCur \setminus \{\simplex\}$, with~$\FiltrationIn':=\FiltrationIn$ and~$j'=j$ unchanged, and with the obvious changes in~$\IntervalsCur'$ if~$\simplex$ were a critical simplex, or in $(\TrivCur)'$ if it were non-critical. Then~$\pdata$ is one of the four incremental extensions of~$\pdata'$ depicted in Fig.~\ref{fig:extensions}, therefore by Lines~$14-37$ of Alg.~\ref{alg:DFS_algo_compatible_data}, if~$\pdata'$ is explored by the algorithm, then so is~$\pdata$. 
\end{proof}

\begin{remark}
\label{remark_BFS_vs_DFS}
The exploration of Algorithm~\ref{alg:DFS_algo_compatible_data} is equivalently viewed as a Depth-First Search (DFS) on {\em the tree of partial~$D$-compatible classifications}: each node is a partial classification, whose children differ by the addition of a single simplex according to the four types of extension depicted in Fig.~\ref{fig:extensions}. The algorithm records the subset of leaves that correspond to~$D$-compatible classifications. It would also have been possible to design a Breadth-First-Search (BFS) algorithm. However the BFS approach requires more storage, because we can forget the information stored in a node (e.g. the boundary matrix of a partial classification) only when all its children are treated. Hence in a BFS version of the algorithm we would eventually need to store the information of the entire tree, while in the DFS version at most one branch is stored at a time.
\end{remark}
\begin{remark}
\label{remark_matrix_reduction}

We implicitly maintain a matrix factorization~$\delta = \partial V$ at each step of the algorithm, where $\partial \in \field^{\SComplex \times \SComplex}$ is the total boundary matrix of~$\SComplex$ and $\delta, V \in \field^{\SComplex \times \SComplex}$ are square.  This factorization must satisfy two conditions, each of which is stated in terms of a sequence of form $\xi = (\simplex_1, \ldots, \simplex_p, \tau_1, \ldots, \tau_q, \upsilon_1, \ldots, \upsilon_r)$,  where $\simplex_1 < \cdots < \simplex_p$ is the linear order on $\FiltrationIn $ and $\upsilon_1 < \ldots < \upsilon_q$ is the linear order on $\ClassCur$.  We write $\hat \partial$ for the matrix obtained  by permuting rows and columns of $\partial$ such that simplex~$\xi_k$ indexes the $k$th row and column of~$\hat \partial$, for each~$k$.  Matrices~$\hat \delta$, and $\hat V$ are defined similarly, by permuting rows and columns to ensure that~$\xi_k$ indexes the $k$th row and column of each matrix.  Our two conditions can now be stated as follows:
    \begin{itemize}
        \item Matrix $\hat V$ must be upper unitriangular.
        \item Matrix  $\hat \delta$ must be \emph{partially reduced} in the following sense.  For each birth-death pair~$[\tau ,\tau']$ in~$\FiltrationIn \cup \ClassCur$ (including non-critical pairs and excluding pairs of form $[\tau, \infty]$), the entry $\hat \delta (\tau, \tau')$ must be nonzero, and each entry that lies either directly below~$\hat \delta (\tau, \tau')$ in column~$\tau'$ (respectively, each entry to the right of~$\hat \delta(\tau, \tau')$ in row~$\tau$) must equal zero.
    \end{itemize}
It can be shown that the low function of $\hat \delta$ agrees with the low function of any~$R = DV$ decomposition  of$\hat \partial$ (when restricting each of these functions to the subset $\FiltrationIn \cup \ClassCur$; their values may differ for $\tau \notin \FiltrationIn \cup \ClassCur$), c.f. \cite{cohen2006vines}.  

To obtain such a factorization after we have added $\simplex$ to $\ClassCur$,  first fix a compatible sequence $\xi$ and permute the columns of $\delta$ accordingly;  then perform one further swap to ensure that the column indexed by $\simplex$ appears directly to the right of the last  column indexed by $\FiltrationIn$, keeping the location of all columns indexed by $\FiltrationIn$ fixed.  Perform the same permutation on rows.  Then add multiples of column $\simplex$ to columns on its right as necessary to ensure that the unique nonzero entry in row $\low(\delta \simplex)$ appears in column $\simplex$.  A routine exercise shows that the resulting matrix $\hat \delta'$, fits into a matrix factorization $\hat \delta' = \hat \partial  \hat V$ of the appropriate form. 
\end{remark}
\begin{remark}
\label{remark_nerve}
From Theorem~\ref{theorem_description_polytopes} the polyhedra~$\Polytope(\Filtration)$ induced by the~$D$-compatible classifications~$\Filtration$ describe the fiber~$\persmap^{-1}(D)$ as a polyhedral complex. In applications where it is desirable to dispose of a simplicial complex structure for~$\persmap^{-1}(D)$, we can simply form the nerve of the cover associated to the polyhedra~$\Polytope(\Filtration)$, which by the Nerve theorem for Euclidean closed convex sets is homotopy equivalent to~$\persmap^{-1}(D)$ (see e.g.~\cite[\S 5]{weil1952theoremes}). In practice, computing the nerve amounts to finding non-empty intersections~$\bigcap_{l=1}^n \Polytope(\Filtration_l)$, which in a polyhedral complex boils down to intersecting vertex sets:
\[\VertSet \left (\bigcap_{l=1}^n \Polytope(\Filtration_l) \right)= \bigcap_{l=1}^n \VertSet(\Polytope(\Filtration_l)).\]
We include the construction of this simplicial complex for describing~$\persmap^{-1}(D)$ in our implementation.
\end{remark}
\begin{remark}
Since the polyhedral decomposition of the fiber realizes a regular CW complex, computing the~$\Z_2$ linear boundary operator of this object reduces to enumerating the codimension-1 faces of each polyhedron. These may be computed from the standard formula for the boundary of a product of copies of standard geometric simplices:
    \begin{align*}
        \textstyle
        \partial(\sigma_1 \times \cdots \sigma_m)
        =
        \bigcup_k [\sigma_1 \times \cdots \times \partial(\sigma_k) \times \cdots \times \sigma_m]
    \end{align*}
Computing the coefficients of the $\Z$-linear boundary matrix is slightly more involved, due to orientation.  We deffer this problem to later work.
\end{remark}

\begin{remark}[Generalization to barcodes for persistent (relative) (co)homology]  The discussion this far has focused exclusively on barcodes of the homological persistence module (obtained by applying the homology functor to a nested sequence of cell complexes). However, there are several other formulae for generating persistence modules from a filtered cell complex.  While each construction has distinct and useful algebraic properties, their barcodes are completely determined by 
that of the homological barcode \cite{de2011dualities}. Thus the procedure to compute fiber of~$\persmap$ also serves to compute the persistence fiber of these other constructions. For a detailed discussion, see Appendix~\ref{sec:dualities}.
\end{remark}

\section{Generalisation to based chain complexes}
\label{section_based_chain_complexes}

We generalise the fact that the fiber of the persistence map~$\persmap$ is a polyhedral complex to filters defined directly at the level of based chain complexes. These include filters on simplicial complexes, cubical complexes, delta complexes and CW complexes. In turn our approach for computing~$\persmap^{-1}(D)$ adapts to these situations as well.
\begin{definition}
A based, finite-dimensional, $\field$-linear chain complex is a pair $(\chaincx, \basis)$ such that $\sum_i \dim(\chaincx_i) < \infty$ and $E$ is a union of bases $E_i$ of $C_i$ for all $i$.
A \emph{filter} on~$(\chaincx, \basis)$ is a real-valued function $\filta: \basis \to \I $ such that the linear span of $\{ e \in \basis : \filta(e) \le t\}$ forms a linear subcomplex of $\chaincx$, for each $t \in \I$.
\end{definition}

Here are some examples of~$(\chaincx, \basis)$ induced by combinatorial complexes:
\begin{enumerate}
    \item \textbf{Simplicial Complexes.} Basis~$\basis$ is the collection of simplices in a simplicial complex~$\SComplex$. We recover the standard setting of filters over~$\SComplex$.
    \item \textbf{Cubical complexes.} Basis~$\basis$ is the collection of cubes in a cubical complex.
    \item \textbf{Delta and CW Complexes.} Basis~$\basis$ is the collection of cells in a delta complex or CW complex~$\SComplex$.
\end{enumerate}
These variations are of interest in practice: For instance with delta and CW complexes we can decompose topological spaces with much fewer simplices, while cubical complexes appear naturally e.g. in image analysis. The main result of this section, Theorem \ref{thm_CW_complexes_have_good_fibers}, generalizes the structure theorem for simplicial complexes (Theorem \ref{theorem_polyhedral_complex}) to these important variants.  In particular, Theorem \ref{thm_CW_complexes_have_good_fibers} implies each of the following results.
\begin{theorem}
\label{thm:generalization_to_cw_complexes}
Let~$\SComplex$ be a simplicial, cubical, delta or CW complex and let~$D$ be a barcode. Then the fiber~$\persmap^{-1}(D)$ is the underlying space of a polyhedral complex whose polyhedra are products of standard simplices.
\end{theorem}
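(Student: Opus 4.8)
The plan is to deduce this theorem as an immediate corollary of the master result Theorem~\ref{thm_CW_complexes_have_good_fibers}, which (as announced in the text) establishes that for an arbitrary based, finite-dimensional, $\field$-linear chain complex $(\chaincx, \basis)$ and any barcode $D$, the fiber $\persmap^{-1}(D) \subseteq \I^{\basis}$ is the underlying space of a polyhedral complex whose cells are products of standard geometric simplices. Given that result, the only thing left to check is that each of the four classes of combinatorial complex in the statement --- simplicial, cubical, delta, and CW --- canonically gives rise to such a based chain complex, in a way that is compatible with the notion of filter. This is exactly the content of the list of examples enumerated just before the statement: in each case one takes $\basis$ to be the set of cells (simplices, cubes, or CW-cells) and $\chaincx$ to be the associated cellular chain complex over $\field$, with $\basis_i$ the set of $i$-dimensional cells forming a basis of $\chaincx_i$; finiteness of $\sum_i \dim \chaincx_i$ follows from finiteness of the complex. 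One then observes that a filter $f:\SComplex \to \I$ in the classical sense (sublevel sets are subcomplexes) is precisely a filter on $(\chaincx,\basis)$ in the chain-complex sense, since a subcomplex of a cell complex is spanned by the cells it contains and this span is a subcomplex of $\chaincx$; conversely any chain-level filter restricts to one whose sublevel sets are honest subcomplexes. Hence $\Filt(\SComplex)$ and the space of chain-level filters on $(\chaincx,\basis)$ coincide, and likewise their persistence maps agree, since sublevel-set persistent homology is computed from exactly this filtered chain complex.

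Concretely I would organize the argument in three short steps. First, recall or cite the construction of the cellular chain complex for each of the four types of complex and verify the finiteness and basis conditions, so that $(\chaincx,\basis)$ is a based chain complex in the sense of the definition above; this is standard and I would keep it to a sentence or two per case. Second, identify the filter spaces: show $f\colon \basis\to\I$ has the property that $\mathrm{span}\{e : f(e)\le t\}$ is a subcomplex of $\chaincx$ if and only if $f^{-1}((-\infty,t])$ is a subcomplex of $\SComplex$ for every $t$, which reduces to the elementary fact that, for cell complexes, a set of cells is a subcomplex iff its $\field$-span is a subchain complex. Third, invoke Theorem~\ref{thm_CW_complexes_have_good_fibers} applied to $(\chaincx,\basis)$ and $D$ to conclude that $\persmap^{-1}(D)$ is the underlying space of a polyhedral complex with product-of-simplices cells, and note that under the identification of step two this is the same fiber as the one in the statement.

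I do not expect any real obstacle here, since the theorem is by design a specialization of the general result; the only point requiring mild care is the equivalence of the two notions of filter for delta and CW complexes, where "subcomplex" means a subset of cells closed under taking cells in the closure (equivalently, a union of cells whose cellular chain span is a subcomplex). For delta complexes and CW complexes the attaching maps are not injective on boundaries, but this does not affect the argument: what matters is only that the closure of a cell is covered by lower-dimensional cells, so that the cellular boundary of a cell lies in the span of cells in any subcomplex containing it, which is precisely the linear-subcomplex condition. Once this identification is in place, the theorem follows directly, with no further computation.
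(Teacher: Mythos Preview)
Your plan has a genuine gap in Step~2: the claimed identification of filter spaces is false for general CW and delta complexes. You assert the ``elementary fact that, for cell complexes, a set of cells is a subcomplex iff its $\field$-span is a subchain complex,'' but only one direction holds. The paper itself supplies the counterexample (immediately preceding Lemma~\ref{lem:cellinheritance}): take the CW decomposition of $S^1$ with one vertex $v$ and one edge $e$. The cellular boundary map vanishes, so $\{e\}$ has $\field$-span equal to a perfectly good linear subcomplex of $\chaincx$, and the function $f(e)=1,\ f(v)=2$ is a valid chain-level filter on $(\chaincx,\basis)$. But $\{e\}$ is \emph{not} a CW subcomplex of $\cwcomplex$, since the closure of $e$ contains $v$. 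Your justification---that ``the cellular boundary of a cell lies in the span of cells in any subcomplex containing it''---only establishes the forward implication (CW filter $\Rightarrow$ chain filter); it says nothing when the algebraic boundary happens to vanish while the topological closure does not.

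Consequently, applying Theorem~\ref{thm_CW_complexes_have_good_fibers} to $(\chaincx,\basis)$ alone would compute the fiber inside the larger space of chain-level filters, not inside $\Filt(\cwcomplex)$. The paper's actual argument (Section~5.3) handles this by introducing the predicate $\prop(\filta)=\True$ iff every sublevel set $\filta^{-1}(-\infty,t]$ is a genuine CW subcomplex, checking via Axiom~\ref{item:cellaxiom_compose} that $\prop$ is inherited along the relation $\sim_\critvals$, and then invoking Lemma~\ref{lem:cellinheritance} to conclude that $\prop^{-1}(\True)$ is a union of polyhedra in $\ambientPoly_\critvals$. Intersecting with the chain-level fiber from Theorem~\ref{thm:fiberispoly} then gives the CW-level fiber as a polyhedral subcomplex, with the product-of-simplices description coming from Lemma~\ref{lem:polyhedraareproductsofsimplices}. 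To repair your proof you should drop the false equivalence and instead argue that the CW-filter condition cuts out a union of cells of $\ambientPoly_\critvals$, exactly as the paper does.
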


\begin{theorem}
Theorem \ref{thm:generalization_to_cw_complexes} remains true if we restrict to lower-star filtrations or Vietoris-Rips filtrations.
\end{theorem}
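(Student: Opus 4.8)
The plan is to reduce the two cases — lower-star filtrations and Vietoris--Rips filtrations — to the general based-chain-complex result Theorem~\ref{thm_CW_complexes_have_good_fibers}, by exhibiting each family of filters as the solution set of a system of linear constraints imposed on a based chain complex, and then checking that this linear constraint is compatible with the cellwise structure of the fiber. First I would set up the relevant based chain complex: for a simplicial complex~$\SComplex$ with vertex set~$\VertSet$, a lower-star filter is determined by its values on vertices, $g : \VertSet \to \I$, via $f(\simplex) = \max_{\vertex \in \simplex} g(\vertex)$; the Vietoris--Rips case is analogous, with $g$ a function on edges (or on pairs of vertices) and $f(\simplex) = \max_{e \subseteq \simplex} g(e)$ over the edges~$e$ of~$\simplex$. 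In both cases the space of such filters sits inside $\Filt(\SComplex)$ as the image of an affine (indeed, piecewise-linear) map, but more usefully it is cut out by the finitely many equalities $f(\simplex) = \max_{\tau} f(\tau)$ ranging over the generating faces~$\tau$ of~$\simplex$; on each region where the identity of the maximizing~$\tau$ is fixed, this is a system of linear equations.

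Next I would observe that the key structural input from the general theory is Axiom~\ref{item:cellaxiom_inequality_background}: membership of a filter~$\filtb$ in the polyhedron~$\cell(\filta)$ is governed purely by the order relations among values and by which values are ``nailed'' to finite endpoints of~$D$. The crucial point is that the lower-star (resp.\ Vietoris--Rips) constraint is invariant under the equivalence~$\sim$: if $\filta$ is a lower-star filter and $\filta \sim \filtb$, then $\filtb$ is again a lower-star filter. Indeed, writing $\filtb = \psi \circ \filta$ for an order-preserving~$\psi$ fixing $\{1, \ldots, \dim D\}$ as in Axiom~\ref{item:cellaxiom_compose_background}, and using that $\psi$ commutes with $\max$ (being order-preserving), we get $\filtb(\simplex) = \psi(f(\simplex)) = \psi(\max_{\tau} f(\tau)) = \max_\tau \psi(f(\tau)) = \max_\tau \filtb(\tau)$. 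Hence the restricted fiber $\persmap^{-1}(D) \cap \{\text{lower-star filters}\}$ is a union of cells $\cell(\filta)$, and it suffices to show each such cell, intersected with the lower-star locus, is again a polyhedron that is a product of standard simplices.

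For this last step I would use that within a single cell $\cell(\filta) \cong \Delta_{\#\eta^1} \times \cdots \times \Delta_{\#\eta^{\dim D}}$ (Theorem~\ref{theorem_polyhedral_complex_a}), the linear order among the coordinates is fixed, so the identity of the maximizing face~$\tau$ in each equation $f(\simplex) = \max_\tau f(\tau)$ is constant across the cell; thus the lower-star constraint restricts to a genuine system of linear equations on $\cell(\filta)$, intersecting the product of simplices in a face-like convex subpolytope. One then verifies this intersection is itself a product of standard simplices — concretely, the linear relations identify certain barycentric coordinates, and the quotient of a product of simplices by such coordinate identifications is again such a product (or empty). Assembling these cells and invoking Theorem~\ref{theorem_polyhedral_complex} / Theorem~\ref{thm_CW_complexes_have_good_fibers} to get the gluing into a polyhedral complex completes the argument. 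I expect the main obstacle to be the final verification that the linearly-sliced product-of-simplices is still a product of standard simplices rather than merely a polytope: one must exploit the specific form of the max-constraints (they equate a coordinate to a maximum of others, which on a fixed-order cell becomes $f(\simplex) = f(\tau_0)$ for a single determined~$\tau_0$, i.e.\ a coordinate identification) to see that the slice is a face of the product after a change of variables, hence again a product of simplices.
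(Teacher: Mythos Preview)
Your core observation is exactly the paper's: the lower-star (resp.\ Vietoris--Rips) constraint is preserved under the relation~$\sim$, because any order-preserving~$\psi$ commutes with~$\max$; hence if~$\filta$ is a lower-star filter then the entire cell~$\cell(\filta)$ lies in the lower-star locus. The paper packages this as Lemma~\ref{lem:lower_star_is_union_of_polys} and concludes immediately via Theorem~\ref{thm_CW_complexes_have_good_fibers}.

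Where you diverge is in your third paragraph, and that divergence is unnecessary work that also introduces a conceptual wobble. Once you have established that the lower-star locus is a union of \emph{whole} cells~$\cell(\filta)$, you are finished: the restricted fiber $\persmap^{-1}(D)\cap\{\text{lower-star}\}$ is then the intersection of two subcomplexes of the same ambient polyhedral complex~$\ambientPoly_\critvals$, hence itself a subcomplex, and every cell of~$\ambientPoly_\critvals$ is already a product of standard simplices by Lemma~\ref{lem:polyhedraareproductsofsimplices}. There is nothing left to slice. Your sentence ``it suffices to show each such cell, intersected with the lower-star locus, is again a polyhedron that is a product of standard simplices'' is redundant with what you just proved: those cells are already \emph{contained in} the lower-star locus, so the intersection is the cell itself. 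The entire face-slicing discussion that follows --- identifying barycentric coordinates, checking that quotients of products of simplices remain products of simplices --- can be deleted. In short: stop at the end of your second paragraph and invoke Theorem~\ref{thm:fiberispoly} (or its CW analogue) together with the fact that both~$\persmap^{-1}(D)$ and the lower-star locus are unions of cells in~$\ambientPoly_\critvals$.
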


\subsection{Polyhedral decomposition of the ambient cube}
Let~$\basis$ be a finite set, and let~$\critvals = \{\critval_0 < \cdots < \critval_\ncritval \}$ be a finite subset of the  interval~$\I$, where $\critval_0 = 1$ and $\critval_m \leq \sharp \SComplex$. For any pair of functions~$\filta, \filtb: \basis \to \I$, let us define the relation~$\filta \sim_\critvals \filtb$ by either of the following two equivalent axioms:
    \begin{enumerate}[label=\textbf{(A\arabic*)}]
        \item 
        \label{item:cellaxiom_compose}        
        There exists an order-preserving map $\psi: \I \to \I$ such that $\psi(\critval_i) = \critval_i$ for each $i$, and $\filtb = \psi \circ \filta$.    
        \item 
            \label{item:cellaxiom_inequality}
            The function $\filtb$ satisfies
            \begin{align}
                \filta(e) \in \critvals \implies \filtb(e) = \filta(e)
                &&
                \filta(e) \le \filta(e') \implies \filtb(e) \le \filtb(e').
                \label{eq:equivalence_condition}
            \end{align}        
    \end{enumerate}
Note that the relation $\sim_\critvals$ is reflexive and transitive but not symmetric.
We write
\[
\cell_\critvals(\filta) := \{ \filtb \in \I ^E  : \, \, \filta \sim_\critvals \filtb \}.
\]
\begin{lemma}
\label{lem:cellispoly}
The set  $\cell_\critvals(\filta)$ is a compact polyhedron.  
\end{lemma}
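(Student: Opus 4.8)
The plan is to use axiom~\ref{item:cellaxiom_inequality} to exhibit $\cell_\critvals(\filta)$ as the solution set, inside $\R^\basis$, of a finite system of affine equalities and non-strict inequalities, and then to observe that this solution set lies in the cube $\I^\basis$ and is therefore bounded. With $\filta$ fixed, the constraints on the unknown $\filtb$ come in three families: (i) the anchoring equalities $\filtb(e) = \filta(e)$, one for each $e \in \basis$ with $\filta(e) \in \critvals$; (ii) the monotonicity inequalities $\filtb(e) \le \filtb(e')$, one for each ordered pair $(e,e') \in \basis^2$ with $\filta(e) \le \filta(e')$ (so that a pair with $\filta(e) = \filta(e')$ forces the equality $\filtb(e) = \filtb(e')$); and (iii) the range constraints $1 \le \filtb(e) \le \sharp\SComplex$ encoding $\filtb \in \I^\basis$. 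Since $\basis$ is finite and $\filta$ is fixed, there are finitely many such constraints, each a closed affine half-space or hyperplane; by axiom~\ref{item:cellaxiom_inequality} their common solution set is precisely $\cell_\critvals(\filta)$, which is therefore a polyhedron.

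Compactness is then immediate: the set is closed, being a finite intersection of closed sets, and bounded, being contained in $[1,\sharp\SComplex]^\basis$, hence compact in the finite-dimensional space $\R^\basis$. (It is also nonempty, as $\filta \in \cell_\critvals(\filta)$ by reflexivity of $\sim_\critvals$, although the statement does not require this.)

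The one point meriting care --- and the step I would write out most carefully --- is that $\sim_\critvals$ is defined in axiom~\ref{item:cellaxiom_compose} through an existential quantifier over order-preserving maps $\psi : \I \to \I$, so one must invoke the equivalence \ref{item:cellaxiom_compose}$\,\Leftrightarrow\,$\ref{item:cellaxiom_inequality} to pass to the finite linear description of (i)--(iii). The reason the equivalence (and hence the finiteness) works is that $\psi \circ \filta$ depends only on the restriction of $\psi$ to the finite set $\filta(\basis) \cup \critvals$; this also yields an alternative proof, realising $\cell_\critvals(\filta)$ as the image, under the linear (hence affine) map $\psi \mapsto \psi \circ \filta$, of the bounded polytope of order-preserving functions $\filta(\basis) \cup \critvals \to \I$ that fix each $\critval_i$, and using that the affine image of a polytope is a polytope. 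Finally, for use later in the section I would note that sorting the elements of $\basis$ by which open gap $(\critval_i, \critval_{i+1})$ contains their $\filta$-value makes the system (i)--(iii) split as a product of standard simplices, one per gap, refining ``compact polyhedron'' to the product-of-simplices description paralleling Theorem~\ref{theorem_polyhedral_complex_a}.
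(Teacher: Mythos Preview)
Your proof is correct and takes essentially the same approach as the paper: invoke Axiom~\ref{item:cellaxiom_inequality} to exhibit $\cell_\critvals(\filta)$ as a finite intersection of hyperplanes and closed half-spaces, then note boundedness. Your version is in fact more careful than the paper's terse argument in explicitly including the box constraints $\filtb \in \I^\basis$ (which are needed for boundedness, since the constraints from~\ref{item:cellaxiom_inequality} alone need not bound the set), and your closing remarks about the product-of-simplices refinement anticipate Lemma~\ref{lem:polyhedraareproductsofsimplices}.
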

\begin{proof}
Axiom \ref{item:cellaxiom_inequality} represents a family of logical conditions, each of which determines either a hyperplane, i.e. $\{\filtb : \filtb(e) = c\}$, or a closed half-space, i.e. $\{ \filtb : \filtb(e') - \filtb(e) \ge 0\}$.  The intersection of these sets, $\cell_\critvals(\filta)$, is a bounded polyhedron.
\end{proof}

\begin{lemma}
\label{lem:intersectionimpliescontainment}
If $\filtb \in \cell_\critvals(\filta)$ then $ \cell_\critvals(\filtb) \subseteq \cell_\critvals(\filta)$.
\end{lemma}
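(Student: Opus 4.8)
The plan is to recognize this statement as precisely the transitivity of the relation $\sim_\critvals$. By definition, $\filtb \in \cell_\critvals(\filta)$ means $\filta \sim_\critvals \filtb$, and a membership $\filtc \in \cell_\critvals(\filtb)$ means $\filtb \sim_\critvals \filtc$; so the inclusion $\cell_\critvals(\filtb) \subseteq \cell_\critvals(\filta)$ is equivalent to the implication ``$\filta \sim_\critvals \filtb$ and $\filtb \sim_\critvals \filtc$ $\Longrightarrow$ $\filta \sim_\critvals \filtc$'', for every $\filtc \in \I^\basis$. Thus the whole proof reduces to checking that $\sim_\critvals$ composes.

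First I would fix $\filtc \in \cell_\critvals(\filtb)$ and verify the two clauses of Axiom \ref{item:cellaxiom_inequality} for the pair $(\filta,\filtc)$. For the first clause, suppose $\filta(e) \in \critvals$; applying the corresponding clause to $\filta \sim_\critvals \filtb$ gives $\filtb(e) = \filta(e) \in \critvals$, and applying it again to $\filtb \sim_\critvals \filtc$ gives $\filtc(e) = \filtb(e) = \filta(e)$. For the monotonicity clause, suppose $\filta(e) \le \filta(e')$; the monotonicity clause for $\filta \sim_\critvals \filtb$ yields $\filtb(e) \le \filtb(e')$, and then that for $\filtb \sim_\critvals \filtc$ yields $\filtc(e) \le \filtc(e')$. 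Hence $\filta \sim_\critvals \filtc$, i.e. $\filtc \in \cell_\critvals(\filta)$, which is what we want.

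Alternatively one can use Axiom \ref{item:cellaxiom_compose}: write $\filtb = \psi \circ \filta$ and $\filtc = \psi' \circ \filtb$ with $\psi,\psi' : \I \to \I$ order-preserving and fixing every $\critval_i$; then $\psi' \circ \psi$ is again order-preserving and fixes every $\critval_i$, and $\filtc = (\psi' \circ \psi) \circ \filta$, witnessing $\filta \sim_\critvals \filtc$. Either route is a short verification. The only point that needs any care is that $\sim_\critvals$ is not symmetric, so one must chain the inequalities (or compose the auxiliary maps) in the correct order — the $\filta$-to-$\filtb$ step before the $\filtb$-to-$\filtc$ step, and $\psi$ before $\psi'$ — rather than treat the relation as reversible. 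I do not expect any genuine obstacle beyond this bookkeeping.
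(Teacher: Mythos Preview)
Your proposal is correct and follows exactly the paper's approach: the paper's entire proof is the single sentence ``Relation $\sim_\critvals$ is transitive.'' You go slightly further by actually verifying transitivity via Axiom~\ref{item:cellaxiom_inequality} (and alternatively via Axiom~\ref{item:cellaxiom_compose}), which the paper merely asserts when introducing the relation.
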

\begin{proof}
Relation $\sim_\critvals$ is transitive.
\end{proof}

\begin{proposition}
\label{cor:polyhedral_decomp_unit_square}
The set of convex polyhedra $\ambientPoly_\critvals = \{ \cell_\critvals(\filta) : \filta: \basis \to \I \}$ is a polyhedral complex; the underlying space is $\vert \ambientPoly_\critvals \vert  = \I^\basis$.  
\end{proposition}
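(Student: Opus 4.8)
The plan is to establish the two defining properties of a polyhedral complex — closure under taking faces, and the condition that any two members intersect in a common face — and separately to verify that the union of all cells is exactly $\I^\basis$. The last point is the easiest: every $g \in \I^\basis$ lies in $\cell_\critvals(g)$ by reflexivity of $\sim_\critvals$, so $\bigcup_{\filta} \cell_\critvals(\filta) = \I^\basis$; and each $\cell_\critvals(\filta)$ is a compact polyhedron by Lemma \ref{lem:cellispoly}, so $\ambientPoly_\critvals$ is a finite collection once we observe that the hyperplane/half-space description in Axiom \ref{item:cellaxiom_inequality} depends only on the ordered partition of $\basis$ induced by $\filta$ together with the information of which blocks are pinned to which values $\critval_i$ — there are finitely many such combinatorial types.

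The heart of the argument is the intersection property. First I would show that the relative interiors $\mathrm{relint}\,\cell_\critvals(\filta)$, as $\filta$ ranges over $\I^\basis$, partition $\I^\basis$: concretely, $\filtb \in \mathrm{relint}\,\cell_\critvals(\filta)$ should be equivalent to $\filta \sim_\critvals \filtb$ \emph{and} $\filtb \sim_\critvals \filta$, i.e. the two functions induce the same ordered partition of $\basis$ and pin the same blocks to the same critical values. This symmetrized relation is a genuine equivalence relation, so its classes partition $\I^\basis$; one then checks that the closure of the class of $\filtb$ is precisely $\cell_\critvals(\filtb)$ (containment of the class in the polyhedron is clear; for the reverse, a point of $\cell_\critvals(\filtb)$ can be perturbed within the polyhedron to recover all strict inequalities, landing in the relative interior). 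Granting this, each $\cell_\critvals(\filta)$ is the disjoint union of the relative interiors it contains, and the faces of $\cell_\critvals(\filta)$ are exactly the subpolyhedra obtained by promoting some of the strict inequalities $\filtb(e) < \filtb(e')$ to equalities — which are again of the form $\cell_\critvals(\filtb')$ for an appropriate $\filtb'$ with a coarser partition. This simultaneously gives closure under faces and identifies the faces combinatorially.

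For the common-face condition, given $\filta, \filtb$ with $\cell_\critvals(\filta) \cap \cell_\critvals(\filtb) \neq \emptyset$, pick $\filtc$ in the relative interior of the intersection — more precisely, pick $\filtc$ in the intersection whose induced ordered partition is as fine as possible. Using Lemma \ref{lem:intersectionimpliescontainment} twice, $\cell_\critvals(\filtc) \subseteq \cell_\critvals(\filta) \cap \cell_\critvals(\filtb)$. The claim is that $\cell_\critvals(\filtc)$ is a face of each of $\cell_\critvals(\filta)$ and $\cell_\critvals(\filtb)$ and equals the whole intersection: any point $h$ of $\cell_\critvals(\filta)\cap\cell_\critvals(\filtb)$ satisfies all the order relations that $\filta$ and $\filtb$ jointly impose and pins all the critical-value blocks of both, hence $h \in \cell_\critvals(\filtc)$, giving $\cell_\critvals(\filta)\cap\cell_\critvals(\filtb) = \cell_\critvals(\filtc)$; and by the face characterization of the previous paragraph, a cell contained in $\cell_\critvals(\filta)$ and defined by promoting a subset of its strict inequalities to equalities is a face of it. I expect the main obstacle to be purely bookkeeping: making rigorous the correspondence between a function $\filta$, the ordered set partition of $\basis$ it induces (refined by the marking of blocks pinned to the $\critval_i$), and the face lattice of the polyhedron $\cell_\critvals(\filta)$, and checking that "promote some strict inequalities to equalities" always yields a partition realizable by some actual function into $\I$ (this uses that $\I$ is an interval with enough room, and that the pinning constraints $\psi(\critval_i)=\critval_i$ are mutually consistent because $\critval_0 < \cdots < \critval_\ncritval$). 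None of these steps is deep, but the combinatorial translation must be set up carefully so that "face" on the polyhedral side and "coarsening of the marked partition" on the combinatorial side match exactly.
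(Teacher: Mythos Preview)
Your proposal is correct and uses exactly the ingredients the paper cites: the paper's entire proof is the single line ``Follows from Lemmas \ref{lem:cellispoly} and \ref{lem:intersectionimpliescontainment},'' and your argument is a careful unpacking of what those two lemmas buy---compactness of each cell, and the nesting $\cell_\critvals(\filtb)\subseteq\cell_\critvals(\filta)$ whenever $\filtb\in\cell_\critvals(\filta)$---together with the combinatorial bookkeeping (ordered marked partitions, face lattice via coarsening) needed to turn those facts into the full polyhedral-complex axioms. In other words, you have written out the proof the paper only gestures at; the approach is the same.
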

\begin{proof}
Follows from Lemmas \ref{lem:cellispoly} and  \ref{lem:intersectionimpliescontainment}.
\end{proof}

Next we provide the description of each polyhedron~$\cell_\critvals(\filta)$ in the complex as a product of standard simplices. For convenience let $\critval_{m+1}:=\infty$, and for each~$i \in \{1, \ldots, m\}$, let~$\eta^i = \filta(\basis) \cap (\critval_i, \critval_{i+1})$.

\begin{lemma}
\label{lem:polyhedraareproductsofsimplices}
If each $\eta^i$ is nonempty, then $\cell_\critvals(\filta)$ is affinely isomorphic to $\Delta_{\# \eta^0 } \times \cdots \times \Delta_{\# \eta^m}$, where~$\Delta_{k}$ stands for the standard geometric simplex of dimension~$k$.
\end{lemma}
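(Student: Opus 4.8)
My plan is to reindex the coordinates of $\I^\basis$ by the distinct values taken by $\filta$, as in the proof of Theorem~\ref{theorem_polyhedral_complex_a}, and then split $\cell_\critvals(\filta)$ into a product of ``ordered simplices'', one for each interval $(\critval_i,\critval_{i+1})$ between consecutive critical values. Write $\filta(\basis)=\{v_1<\cdots<v_N\}$ and fix a representative $e_k\in\filta^{-1}(v_k)$ for each $k$. Applying the monotonicity clause of Axiom~\ref{item:cellaxiom_inequality} in both directions forces every $\filtb\in\cell_\critvals(\filta)$ to be constant on each fibre $\filta^{-1}(v_k)$, hence determined by the vector $(\filtb(e_1),\dots,\filtb(e_N))$; the remaining clauses say precisely that this vector is weakly increasing, lies in $\I^N$, and has $k$-th coordinate equal to $v_k$ whenever $v_k\in\critvals$. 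Conversely every such vector arises from a point of $\cell_\critvals(\filta)$, so $\filtb\mapsto(\filtb(e_1),\dots,\filtb(e_N))$ is an affine isomorphism of $\cell_\critvals(\filta)$ onto
\[
    Q=\bigl\{\,(w_1\le\cdots\le w_N)\in\I^N:\ w_k=v_k\ \text{whenever}\ v_k\in\critvals\,\bigr\}.
\]

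Next I would factor $Q$ over the gaps between the critical values. Here I would invoke that the critical values are attained, i.e.\ $\critvals\subseteq\filta(\basis)$ --- the case relevant to our applications, e.g.\ when $\critvals$ is the set of finite barcode endpoints and $\filta$ lies in the corresponding fiber, as in Theorem~\ref{theorem_polyhedral_complex_a}. Under this assumption each $v_k$ lies in a unique interval $[\critval_i,\critval_{i+1})$ (with $\critval_{m+1}:=\infty$), each $\critval_i$ equals some $v_k$ whose coordinate in $Q$ is therefore frozen, and the frozen coordinates separate the blocks of free ones: if $v_k$ lies in gap $i<i'$ and $v_{k'}$ in gap $i'$, the inequality $w_k\le w_{k'}$ already follows from $w_k\le\critval_{i+1}\le\critval_{i'}\le w_{k'}$, so it imposes no new constraint across gaps. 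Consequently $Q$ is the product, over $i=0,\dots,m$, of the polyhedra
\[
    P_i=\bigl\{\,\critval_i\le w_1\le\cdots\le w_{k_i}\le\critval_{i+1}\,\bigr\}\subseteq\R^{k_i},\qquad k_i:=\#\eta^i,
\]
where for $i=m$ one reads $\critval_{m+1}$ as $\max\I=\sharp\SComplex$ (nonemptiness of $\eta^m$ gives $\critval_m<\sharp\SComplex$), and the coordinates with $v_k\in\critvals$ are dropped, as they each contribute only a point.

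Finally, for each $i$ the affine map $(w_1,\dots,w_{k_i})\mapsto(w_1-\critval_i,\ w_2-w_1,\ \dots,\ \critval_{i+1}-w_{k_i})$ sends $P_i$ isomorphically onto $\{\,t\in\R_{\ge 0}^{k_i+1}:\ \sum_j t_j=\critval_{i+1}-\critval_i\,\}$, a positively scaled copy of $\Delta_{k_i}$; composing with the isomorphisms above and multiplying over $i$ gives $\cell_\critvals(\filta)\cong\Delta_{\#\eta^0}\times\cdots\times\Delta_{\#\eta^m}$, as claimed. I expect the only delicate step to be the factorisation of the middle paragraph: one must check that no inequality genuinely couples two distinct gaps, and this is exactly where one needs the interior critical values to be attained --- without it the gaps merge and the product decomposition fails. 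The reindexing and the simplex identification are routine.
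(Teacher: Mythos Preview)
Your argument is correct and far more detailed than the paper's proof, which is the single line ``Follows from Axiom~\ref{item:cellaxiom_compose}.'' In substance the route is the same: parametrise $\cell_\critvals(\filta)$ by the distinct values of $\filta$, freeze the coordinates lying in $\critvals$, and recognise each block of free coordinates between consecutive critical values as an ordered simplex.

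The one place where you diverge is in working from Axiom~\ref{item:cellaxiom_inequality} rather than Axiom~\ref{item:cellaxiom_compose}, and this is precisely what forces you to introduce the side hypothesis $\critvals\subseteq\filta(\basis)$, which is not among the assumptions of the lemma. You diagnose the reason correctly: under~\ref{item:cellaxiom_inequality} alone, if some interior $\critval_i$ is not attained by $\filta$ then nothing in your description of $Q$ prevents a coordinate from gap $i{-}1$ from exceeding $\critval_i$, so adjacent gaps couple. Under~\ref{item:cellaxiom_compose}, by contrast, the decoupling is automatic: any $\filtb\in\cell_\critvals(\filta)$ has the form $\filtb=\psi\circ\filta$ with $\psi$ order-preserving and $\psi(\critval_i)=\critval_i$, so $\filta(e)\in(\critval_i,\critval_{i+1})$ forces $\filtb(e)=\psi(\filta(e))\in[\critval_i,\critval_{i+1}]$ regardless of whether $\critval_i$ is hit. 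The blocks are then independent with no extra hypothesis, and the product-of-simplices description follows at once --- this is what the paper's one-liner gestures at. If you reroute your middle paragraph through~\ref{item:cellaxiom_compose}, your proof covers the lemma exactly as stated; the reindexing and the final simplex identification need no change.
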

\begin{proof}
Follows from Axiom \ref{item:cellaxiom_compose}.
\end{proof}

\subsection{Polyhedral decomposition of $\persmap^{-1}(D)$ for based chain complexes}

Let $(\chaincx, \basis)$ be a based, finite-dimensional, $\field$-linear chain complex.

\begin{lemma}
\label{lem:wholecell}
Let $\filta$ be a filter on $(\chaincx, \basis)$; $D$ be the associated (total) barcode; and~$\critvals:=\FinEnd(D)$ be the set of finite endpoints of intervals in $D$.  Then each element of~$\cell_\critvals(\filta)$ is a bona fide filter on $(\chaincx, \basis)$ with barcode $D$.
\end{lemma}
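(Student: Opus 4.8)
The plan is to show the two separate claims in Lemma~\ref{lem:wholecell}: first, that every~$\filtb \in \cell_\critvals(\filta)$ is again a filter on~$(\chaincx,\basis)$, i.e.\ that its sublevel spans are linear subcomplexes of~$\chaincx$; and second, that the barcode of~$\filtb$ equals~$D$. For the first claim, I would fix~$\filtb \in \cell_\critvals(\filta)$ and~$t \in \I$, and observe that by Axiom~\ref{item:cellaxiom_inequality} the monotonicity condition~$\filta(e)\le \filta(e') \implies \filtb(e)\le \filtb(e')$ forces the sublevel set~$\{e : \filtb(e)\le t\}$ to be a \emph{down-set} with respect to the preorder that~$\filta$ induces on~$\basis$ (where~$e \preceq e'$ iff~$\filta(e) \le \filta(e')$). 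Concretely, for any threshold~$t$ there is a threshold~$s$ (namely~$s = \max\{\filta(e) : \filtb(e) \le t\}$, or~$s=1$ if that set is empty) such that~$\{e : \filtb(e)\le t\} = \{e : \filta(e)\le s\}$; the inclusion~$\supseteq$ is immediate and~$\subseteq$ follows from monotonicity. Since~$\filta$ is a filter, the right-hand set spans a linear subcomplex of~$\chaincx$, so~$\filtb$ is a filter as well.

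For the second claim, I would invoke Axiom~\ref{item:cellaxiom_compose}: there is an order-preserving~$\psi: \I \to \I$ with~$\psi(\critval_i) = \critval_i$ for all~$i$ and~$\filtb = \psi \circ \filta$. Then the equivariance of the persistence map under post-composition by non-decreasing maps (Remark~\ref{remark_equivariance}, or~\cite[Lemma~1.5]{leygonie2021fiber}, which is stated for simplicial complexes but whose proof is purely algebraic and applies verbatim to based chain complexes) gives that the barcode of~$\filtb = \psi\circ\filta$ is~$\psi(D)$, where~$\psi(D)$ collapses each interval~$(b,d)$ of~$D$ to~$(\psi(b),\psi(d))$ and discards it when~$\psi(b) = \psi(d)$. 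It remains to check~$\psi(D) = D$. Every endpoint of every interval of~$D$ lies in~$\critvals = \FinEnd(D)$ (or equals~$\infty$), and~$\psi$ fixes~$\critvals$ pointwise and fixes~$\infty$ by convention; hence~$\psi(b) = b$ and~$\psi(d) = d$ for each~$(b,d) \in D$, so no interval is collapsed and~$\psi(D) = D$. Therefore~$\persmap(\filtb) = D$.

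The only subtlety — and the step I would be most careful about — is the passage from the pointwise description of a single sublevel set to the conclusion that~$\filtb$ is genuinely a filter in the sense of the definition: one must make sure the ``threshold~$s$'' argument is valid for \emph{every}~$t$, including~$t$ values between consecutive values of~$\filtb$, and that the preorder-downset characterization is exactly what is needed. This is a short finite combinatorial check rather than a real obstacle. A second point worth stating explicitly is that the algebraic equivariance lemma of~\cite{leygonie2021fiber} does not use any simplicial structure, only the functoriality of homology applied to the nested sequence of subcomplexes determined by a filter; so its use here for an arbitrary based chain complex is legitimate. With these two observations in place the lemma follows.
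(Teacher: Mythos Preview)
Your proof is correct and follows essentially the same approach as the paper: the paper invokes equivariance of~$\persmap$ under post-composition by non-decreasing maps and then Axiom~\ref{item:cellaxiom_compose}, which is exactly your second claim. The only difference is that you make the ``$\filtb$ is a filter'' part explicit via Axiom~\ref{item:cellaxiom_inequality} and a down-set argument, whereas the paper leaves this implicit (it follows immediately from~$\filtb=\psi\circ\filta$, since sublevel sets of~$\filtb$ are then sublevel sets of~$\filta$); this is added clarity rather than a different route.
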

\begin{proof}
The persistence map is equivariant in the sense that~$\persmap(\psi \circ f)= \psi.\persmap(f)$ for~$\psi: \I\rightarrow \I$ a non-decreasing map, as can be proven e.g. following~\cite[Lemma~1.5]{leygonie2021fiber}. Here~$\psi$ acts point-wise on intervals of~$\persmap(f)$, i.e. each interval~$(b,d)\in \persmap(f)$ produces an interval~$(\psi(b),\psi(d))$ in~$\psi.\persmap(f)$ whenever~$\psi(b)\neq \psi(d)$. The result follows from Axiom~\ref{item:cellaxiom_compose}. \qedhere
\end{proof}
Given~$\unionofpolytopes$ a union of polyhedra in a polyhedral complex~$\ambientPoly$, then the subcomplex induced by~$\unionofpolytopes$ is defined as
    \begin{align}
        \ambientPoly[S] = \{ X \in \ambientPoly : X \subseteq \unionofpolytopes \}.      
    \end{align}
\begin{theorem}
\label{thm:fiberispoly}
Let $(\chaincx, \basis)$ be a point-wise finite dimensional based chain complex;~$D$ be a barcode;~$\critvals:=\FinEnd(D)$; and~$\filtersCompat$ be the set of filters on~$(\chaincx, \basis)$ with barcode~$D$. Then~$\filtersCompat$ is a union of polyhedra in~$\ambientPoly_{\critvals}$, hence there exists a well-defined polyhedral subcomplex
    \begin{align*}
        \ambientPoly_\critvals[\filtersCompat] \subseteq \ambientPoly_\critvals 
    \end{align*}
with underlying space $\vert \ambientPoly_\critvals[\filtersCompat] \vert = \filtersCompat $.
\end{theorem}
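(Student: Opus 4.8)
The plan is to reduce Theorem~\ref{thm:fiberispoly} to Lemma~\ref{lem:wholecell} together with the polyhedral-complex structure from Proposition~\ref{cor:polyhedral_decomp_unit_square}. The key observation is that $\filtersCompat$ is exactly the union of those cells $\cell_\critvals(\filta)$ whose generic filter has barcode $D$; once I show $\filtersCompat$ is a union of cells of $\ambientPoly_\critvals$, the existence of the subcomplex $\ambientPoly_\critvals[\filtersCompat]$ with the claimed underlying space follows formally from the definition of an induced subcomplex. So the heart of the argument is the set-theoretic identity
\[
\filtersCompat = \bigcup_{\substack{\filta:\basis\to\I \\ \persmap(\filta)=D}} \cell_\critvals(\filta).
\]

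First I would prove this identity. The inclusion $\supseteq$ is immediate from Lemma~\ref{lem:wholecell}: if $\persmap(\filta)=D$ and $\critvals = \FinEnd(D)$, then every $\filtb\in\cell_\critvals(\filta)$ is a filter on $(\chaincx,\basis)$ with barcode $D$, hence lies in $\filtersCompat$. For $\subseteq$, take any $\filtb\in\filtersCompat$; since $\sim_\critvals$ is reflexive, $\filtb\in\cell_\critvals(\filtb)$, and $\persmap(\filtb)=D$ by hypothesis, so $\filtb$ lies in a cell indexed by a filter with barcode $D$. This shows $\filtersCompat$ is a union of (closed) polyhedra belonging to $\ambientPoly_\critvals$.

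Next I would invoke the general fact about induced subcomplexes: whenever $\unionofpolytopes$ is a union of polyhedra each of which belongs to a polyhedral complex $\ambientPoly$, the collection $\ambientPoly[\unionofpolytopes] = \{X\in\ambientPoly : X\subseteq \unionofpolytopes\}$ is itself a polyhedral complex (it is closed under taking faces, since $\ambientPoly$ is and faces of a polyhedron contained in $\unionofpolytopes$ are again contained in $\unionofpolytopes$; and any two of its members meet along a common face because they already do so inside $\ambientPoly$). Its underlying space equals $\unionofpolytopes$ precisely when $\unionofpolytopes$ is a union of members of $\ambientPoly$ — which is what the identity above established with $\unionofpolytopes = \filtersCompat$. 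Applying this with $\ambientPoly = \ambientPoly_\critvals$ (a polyhedral complex by Proposition~\ref{cor:polyhedral_decomp_unit_square}) yields that $\ambientPoly_\critvals[\filtersCompat]$ is a well-defined polyhedral subcomplex of $\ambientPoly_\critvals$ with $\vert \ambientPoly_\critvals[\filtersCompat]\vert = \filtersCompat$.

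The main obstacle — though a mild one — is verifying carefully that each cell appearing in the union is genuinely a member of the complex $\ambientPoly_\critvals$, i.e.\ that the index set $\{\filta : \persmap(\filta)=D\}$ produces honest cells of the form $\cell_\critvals(\filta)$ with $\critvals = \FinEnd(D)$, so that $\critvals$ really is the distinguished set of critical values built into $\ambientPoly_\critvals$. This is exactly the content of Lemma~\ref{lem:wholecell}, which ensures consistency between the barcode's finite endpoints and the $\critvals$ defining the ambient decomposition; once this compatibility is in hand, the rest is bookkeeping about polyhedral complexes. One should also note in passing that $\FinEnd(D)\subseteq\I$ with $\critval_0 = 1$ and $\critval_m\leq\sharp\SComplex$ after the normalization discussed earlier, so the hypotheses of Proposition~\ref{cor:polyhedral_decomp_unit_square} are met.
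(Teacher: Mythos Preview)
Your proof is correct and follows essentially the same approach as the paper: both use Lemma~\ref{lem:wholecell} (together with reflexivity of $\sim_\critvals$) to identify $\filtersCompat$ with the union $\bigcup_{\filta\in\filtersCompat}\cell_\critvals(\filta)$, and then appeal to the polyhedral decomposition of Proposition~\ref{cor:polyhedral_decomp_unit_square}. The only cosmetic difference is that the paper additionally remarks that $\filtersCompat$ is closed by continuity of $\persmap$, whereas you verify the induced-subcomplex conditions directly from the union-of-cells description; your bookkeeping is in fact the more self-contained of the two.
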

\begin{proof}
Lemma \ref{lem:wholecell} implies that~$\filtersCompat$ is a union of the polyhedra in $\{ \cell_\critvals(\filta) : \filta \in \filtersCompat \}$, and by continuity it is a closed subset of the polyhedral complex~$\ambientPoly_\critvals=\{ \cell_\critvals(\filta) : \filta: \basis \to \I \}$ (Proposition~\ref{cor:polyhedral_decomp_unit_square}). Therefore~$\filtersCompat$ is a sub-polyhedral complex.  
\end{proof}

\subsection{Polyhedral decomposition of $\persmap^{-1}(D)$ for CW complexes}

The polyhedral decomposition of~$\persmap^{-1}(D)$ for based chain complexes (Theorem~\ref{thm:fiberispoly}) does not carry over directly to arbitrary CW complexes because given a CW complex~$\cwcomplex$, there may exist filters on the associated based chain complex~$(\chaincx, \basis)$ that do not correspond to valid filters of~$\cwcomplex$. Here the notion of filter on a CW complex (respectively, delta or cubical complex) naturally generalises the simplicial situation: it qualifies any function~$\filta: \cwcomplex\rightarrow \I$ whose sub-level sets are sub-CW complexes (respectively, sub-delta or sub-cubical complexes).

\begin{example}
Let $\cwcomplex = \{ v, e\}$ be the CW decomposition of $S^1$  with one vertex, $v$,  and one edge, $e$. Since all boundary maps are 0, the filtration $\{e\} \subseteq \{v, e\}$ is perfectly valid for $(\chaincx, \basis)$, but not for $\cwcomplex$.
\end{example}

Fortunately this problem is simple to address.  Lemma \ref{lem:cellinheritance} represents the only technical observation needed to extend our polyhedral characterization of the persistence fiber from regular finite CW complexes to arbitrary finite CW complexes.  The proof is vacuous.

\begin{lemma}
\label{lem:cellinheritance}
Suppose that a function $\prop: \I^\basis \to \{\True, \; \False \}$ satisfies the condition that 
    \begin{align}
        \prop(\filta) = \True \implies \prop \text{ evaluates to } \True \text{ on each element of  }\cell_\critvals(\filta).
        \label{eq:allifone}
    \end{align}

Then $\prop^{-1}(\True)$ is a union of polyhedra in $\ambientPoly_\critvals$, hence $\ambientPoly_\critvals[  \prop^{-1}(\True)] $ is a polyhedral subcomplex.
\end{lemma}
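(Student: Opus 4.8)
The plan is to combine the hypothesis \eqref{eq:allifone} with the polyhedral decomposition $\ambientPoly_\critvals$ of the ambient cube $\I^\basis$ established in Proposition~\ref{cor:polyhedral_decomp_unit_square}. First I would show that $\prop^{-1}(\True)$ is exactly the union of those polyhedra $\cell_\critvals(\filta)$ for which $\filta \in \prop^{-1}(\True)$. One inclusion is immediate: if $\filtb \in \prop^{-1}(\True)$ then $\filtb \in \cell_\critvals(\filtb)$ (reflexivity of $\sim_\critvals$), so $\filtb$ lies in such a union. For the reverse inclusion, suppose $\filtb \in \cell_\critvals(\filta)$ with $\prop(\filta) = \True$; then by the defining implication \eqref{eq:allifone}, $\prop$ evaluates to $\True$ on every element of $\cell_\critvals(\filta)$, in particular on $\filtb$, so $\prop(\filtb) = \True$. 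Hence
\[
\prop^{-1}(\True) = \bigcup_{\filta \in \prop^{-1}(\True)} \cell_\critvals(\filta),
\]
which exhibits $\prop^{-1}(\True)$ as a union of polyhedra in $\ambientPoly_\critvals$.

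Next I would invoke the fact (as in the proof of Theorem~\ref{thm:fiberispoly}) that a union of polyhedra in a polyhedral complex automatically induces a well-defined polyhedral subcomplex: given $\unionofpolytopes = \prop^{-1}(\True)$, the set $\ambientPoly_\critvals[\unionofpolytopes] = \{ X \in \ambientPoly_\critvals : X \subseteq \unionofpolytopes \}$ is a subcomplex of $\ambientPoly_\critvals$ (faces of polyhedra in the collection are again in the collection, and the intersection condition is inherited from $\ambientPoly_\critvals$), and its underlying space is precisely $\unionofpolytopes$. The only point requiring a word of care is that the underlying space of $\ambientPoly_\critvals[\unionofpolytopes]$ equals $\unionofpolytopes$ rather than something smaller: this holds because $\unionofpolytopes$ is itself a union of polyhedra $\cell_\critvals(\filta) \in \ambientPoly_\critvals$, each of which is therefore contained in $\unionofpolytopes$ and hence belongs to $\ambientPoly_\critvals[\unionofpolytopes]$, so the union of polyhedra in $\ambientPoly_\critvals[\unionofpolytopes]$ recovers all of $\unionofpolytopes$.

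Since the paper describes the proof as \lq\lq vacuous\rq\rq, I expect there is essentially no obstacle: the entire content is the one-line observation above that \eqref{eq:allifone} forces $\prop$-truth to be a union of cells. If anything, the only mild subtlety is bookkeeping about whether $\ambientPoly_\critvals[\unionofpolytopes]$ is closed under taking faces --- but this is inherited for free from $\ambientPoly_\critvals$ being a polyhedral complex, since a face of a polyhedron $X \subseteq \unionofpolytopes$ is a face of $X$, hence lies in $\ambientPoly_\critvals$, and is still contained in $\unionofpolytopes$. I would therefore keep the write-up to two or three sentences, referencing Proposition~\ref{cor:polyhedral_decomp_unit_square} and the subcomplex construction, exactly paralleling the proof of Theorem~\ref{thm:fiberispoly}.
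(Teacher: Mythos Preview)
Your proposal is correct and matches the paper's intent exactly: the paper declares the proof ``vacuous,'' and what you have written is precisely the trivial unpacking of that claim via reflexivity of $\sim_\critvals$ and the hypothesis \eqref{eq:allifone}. There is nothing to add or change.
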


\begin{theorem}
Let~$\SComplex$ be a simplicial, cubical, delta or CW complex.  Let~$D$ be a barcode, and let~$(\chaincx, \basis)$ be the induced based chain complex. Then the fiber~$\persmap^{-1}(D)$ is a polyhedral complex whose polyhedra are products of standard simplices.
\end{theorem}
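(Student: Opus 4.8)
The plan is to deduce this from Theorem~\ref{thm:fiberispoly} together with Lemma~\ref{lem:cellinheritance}. First I would fix a simplicial, cubical, delta or CW complex~$\SComplex$, a barcode~$D$, and the induced based chain complex~$(\chaincx, \basis)$, and set~$\critvals := \FinEnd(D)$. The key observation is that the set of filters of~$\SComplex$ (in the combinatorial sense, i.e.\ functions~$\filta:\basis\to\I$ whose sublevel sets are sub-CW complexes) is itself a union of polyhedra in~$\ambientPoly_\critvals$. To see this, define~$\prop:\I^\basis\to\{\True,\False\}$ by~$\prop(\filta)=\True$ precisely when~$\filta$ is a filter of~$\SComplex$. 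I would then verify the hypothesis~\eqref{eq:allifone} of Lemma~\ref{lem:cellinheritance}: if~$\filta$ is a filter of~$\SComplex$ and~$\filtb\sim_\critvals\filta$, then by Axiom~\ref{item:cellaxiom_compose} we have~$\filtb=\psi\circ\filta$ for an order-preserving~$\psi:\I\to\I$, so every sublevel set~$\{e:\filtb(e)\le t\}$ equals some sublevel set~$\{e:\filta(e)\le \psi^{-1}\text{-threshold}\}$ of~$\filta$, hence is again a sub-CW complex; thus~$\prop(\filtb)=\True$. (This is the content of the remark that ``the proof is vacuous'': being a filter of~$\SComplex$ only depends on the chain of sublevel subcomplexes, which is preserved by post-composition with a monotone map.)

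Next I would combine two polyhedral subcomplexes. By Theorem~\ref{thm:fiberispoly}, the set~$\filtersCompat$ of filters on~$(\chaincx,\basis)$ with barcode~$D$ is the underlying space of a polyhedral subcomplex~$\ambientPoly_\critvals[\filtersCompat]$. By the previous paragraph and Lemma~\ref{lem:cellinheritance}, the set of filters of~$\SComplex$ is the underlying space of a polyhedral subcomplex~$\ambientPoly_\critvals[\prop^{-1}(\True)]$. The fiber~$\persmap^{-1}(D)$ (for the combinatorial complex~$\SComplex$) is exactly the intersection of these two sets: a function~$\filta:\SComplex\to\I$ is in~$\persmap^{-1}(D)$ iff it is a valid filter of~$\SComplex$ \emph{and} the induced filtration of~$(\chaincx,\basis)$ has barcode~$D$. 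Since the intersection of two subcomplexes of a polyhedral complex~$\ambientPoly_\critvals$ is again a polyhedral subcomplex (its cells are exactly the cells of~$\ambientPoly_\critvals$ contained in both), we conclude that~$\persmap^{-1}(D)=\vert \ambientPoly_\critvals[\filtersCompat]\cap \ambientPoly_\critvals[\prop^{-1}(\True)]\vert$ is the underlying space of a polyhedral complex.

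Finally I would address the claim that the polyhedra are products of standard simplices. Each polyhedron in~$\ambientPoly_\critvals$ has the form~$\cell_\critvals(\filta)$, and by Lemma~\ref{lem:polyhedraareproductsofsimplices} (applied after discarding the indices~$i$ for which~$\eta^i=\emptyset$, or equivalently passing to the coarser critical-value set actually realized by~$\filta$), such a polyhedron is affinely isomorphic to a product~$\Delta_{\#\eta^0}\times\cdots\times\Delta_{\#\eta^m}$. Since every cell of the subcomplex~$\ambientPoly_\critvals[\filtersCompat\cap\prop^{-1}(\True)]$ is in particular a cell of~$\ambientPoly_\critvals$, it too is a product of standard simplices, which completes the proof.

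The main obstacle, such as it is, is purely bookkeeping: one must be careful that ``filter of the combinatorial complex~$\SComplex$'' and ``filter of the based chain complex~$(\chaincx,\basis)$'' are genuinely different notions (as the~$S^1$ example shows), so the argument really does need the intersection with~$\prop^{-1}(\True)$ and cannot invoke Theorem~\ref{thm:fiberispoly} alone. The verification of~\eqref{eq:allifone} for~$\prop$ is the one substantive point, and it is indeed immediate since post-composition with an order-preserving map permutes but does not alter the nested family of sublevel sets.
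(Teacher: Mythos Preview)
Your proof is correct and follows essentially the same approach as the paper: define~$\prop$ to encode the valid CW filters, verify the inheritance condition via Axiom~\ref{item:cellaxiom_compose}, apply Lemma~\ref{lem:cellinheritance}, and conclude the product-of-simplices shape via Lemma~\ref{lem:polyhedraareproductsofsimplices}. You are in fact more explicit than the paper about intersecting with the chain-complex fiber from Theorem~\ref{thm:fiberispoly}; the only slip is notational---the relation should read~$\filta\sim_\critvals\filtb$ rather than~$\filtb\sim_\critvals\filta$, though your substantive conclusion~$\filtb=\psi\circ\filta$ is the right one.
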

\begin{proof}
Let $\cwcomplex$ be a finite CW complex, and let $\prop(\filta) = \True$ iff 
$\filta^{-1}(-\infty, t]$ is a CW subcomplex of $\cwcomplex$ for each $t \in \I$.  Then $\prop(\filta) \implies \prop(\filtb)$ for each $\filtb$ such that $\filta \sim_\critvals \filtb$, by Axiom \ref{item:cellaxiom_compose}.   Lemma \ref{lem:cellinheritance} therefore implies that $\persmap^{-1}(D)$ is the underlying space of a polyhedral complex.  The polyhedra in this complex are isomorphic to products of simplices, by Lemma \ref{lem:polyhedraareproductsofsimplices}. 
\end{proof}

In practice, it is simple to adapt the $\DFS$ algorithm to compute $\persmap^{-1}(D)$ for a CW complex.  One must simply add the condition that $\FiltrationIn \cup \{ \simplex \}$ be a bona fide subcomplex to each of the lists of criteria found on lines 15, 21, 28, and 32.
\subsection{Polyhedral decomposition of $\persmap^{-1}(D) \cap S$, for a restricted family $S$}
\label{sec_technical_results_for_restricted_filts}

It is common in practice to restrict one's attention to a restricted family of filters, $S$.  A noteworthy family of examples come from the \emph{lower-$p$ filtrations}.  Given a CW complex $\cwcomplex$ with associated based chain complex $(\chaincx, \basis)$, we define the space of \emph{lower-$p$ filtrations} on $\cwcomplex$ as 
    \begin{align*}
        \filterLowerK{p}{\cwcomplex}
        =
        \{
        \filta: \basis \to \I : \text{ for each cell } \tau \in \basis, \text{ one has }\filta(\tau) = \max \{\filta(\sigma) : \sigma \subseteq \mathrm{cl}(\tau), \; \dim(\sigma) \le p \} 
        \}
    \end{align*} 
Examples include

\begin{enumerate}
    \item The space of lower-star filtration,  $\filterLowerK{0}{\cwcomplex}$.  
    \item  The space of lower-edge filtrations,  $\filterLowerK{1}{\cwcomplex}$.  These include the so-called Vietoris-Rips filtrations of a combinatorial simplicial complex.
\end{enumerate}

The important fact about these spaces is that they are unions of polyhedra in the polyhedral decomposition of the ambient cube:

\begin{lemma}
\label{lem:lower_star_is_union_of_polys}
For each $p$ and each finite subset $\critvals \subset \I$, the space $\filterLowerK{p}{\cwcomplex}$ is a union of polyhedra in $\ambientPoly_\critvals$.
\end{lemma}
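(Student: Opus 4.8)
The plan is to invoke Lemma~\ref{lem:cellinheritance} with a suitable predicate~$\prop$, just as in the proof of the CW-complex case, the only difference being the choice of~$\prop$. Concretely, I would define $\prop: \I^\basis \to \{\True, \False\}$ by declaring $\prop(\filta) = \True$ if and only if $\filta \in \filterLowerK{p}{\cwcomplex}$, i.e.\ for every cell $\tau \in \basis$ one has $\filta(\tau) = \max\{ \filta(\sigma) : \sigma \subseteq \mathrm{cl}(\tau), \; \dim(\sigma) \le p \}$. Then $\prop^{-1}(\True) = \filterLowerK{p}{\cwcomplex}$, so by Lemma~\ref{lem:cellinheritance} it suffices to verify the implication~\eqref{eq:allifone}: whenever $\prop(\filta) = \True$ and $\filta \sim_\critvals \filtb$, we must have $\prop(\filtb) = \True$.

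The heart of the argument is therefore showing that the lower-$p$ condition is preserved under the relation $\sim_\critvals$. By Axiom~\ref{item:cellaxiom_compose}, if $\filta \sim_\critvals \filtb$ then $\filtb = \psi \circ \filta$ for some order-preserving $\psi: \I \to \I$. Fix a cell $\tau$ and let $\Sigma_\tau = \{ \sigma \subseteq \mathrm{cl}(\tau) : \dim(\sigma) \le p \}$; note $\tau \in \Sigma_\tau$ since $\dim(\tau) \le p$ would be needed --- but in fact the definition includes the boundary faces, and for cells of dimension $\le p$ the set contains $\tau$ itself, while for higher-dimensional $\tau$ the maximum is taken over the lower-dimensional faces in its closure. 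Since $\psi$ is order-preserving (monotone non-decreasing), it commutes with taking maxima over any finite set: $\max_{\sigma \in \Sigma_\tau} \psi(\filta(\sigma)) = \psi\big( \max_{\sigma \in \Sigma_\tau} \filta(\sigma) \big)$. Hence
\[
    \filtb(\tau) = \psi(\filta(\tau)) = \psi\Big( \max_{\sigma \in \Sigma_\tau} \filta(\sigma) \Big) = \max_{\sigma \in \Sigma_\tau} \psi(\filta(\sigma)) = \max_{\sigma \in \Sigma_\tau} \filtb(\sigma),
\]
using $\prop(\filta) = \True$ for the second equality. This is exactly the lower-$p$ condition for $\filtb$, so $\prop(\filtb) = \True$, establishing~\eqref{eq:allifone}.

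I would also briefly note that one should check $\filterLowerK{p}{\cwcomplex}$ is genuinely a union of the polyhedra $\cell_\critvals(\filta)$ for $\filta$ ranging over this set --- but this is immediate from Lemma~\ref{lem:cellinheritance}, which gives precisely that $\prop^{-1}(\True)$ is such a union (and that $\ambientPoly_\critvals[\prop^{-1}(\True)]$ is a polyhedral subcomplex). The main obstacle, such as it is, is purely the bookkeeping observation that a monotone self-map of $\I$ commutes with finite maxima; everything else is a direct transcription of the CW-complex proof. One subtlety worth a sentence: we use that each $\Sigma_\tau$ is finite (true since $\cwcomplex$ is finite), so the maximum is attained and the interchange with $\psi$ is valid without any continuity hypothesis on $\psi$ beyond monotonicity. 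No appeal to Axiom~\ref{item:cellaxiom_inequality} is needed, though one could alternatively argue directly from the inequalities in~\eqref{eq:equivalence_condition}; the composition formulation is cleaner.
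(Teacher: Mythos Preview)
Your proposal is correct and follows essentially the same approach as the paper: both show that $\filta \sim_\critvals \filtb$ preserves membership in $\filterLowerK{p}{\cwcomplex}$ via Axiom~\ref{item:cellaxiom_compose}, and conclude that the space is a union of polyhedra. The paper argues this directly in two sentences without explicitly invoking Lemma~\ref{lem:cellinheritance}, whereas you route through that lemma and spell out the key computation (that a monotone $\psi$ commutes with finite maxima); your version is simply a more detailed rendering of the same idea.
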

\begin{proof}
Fix $\filta \in \filterLowerK{p}{\cwcomplex}$, and suppose that $\filta \sim_\critvals \filtb$ for some $\filtb$.  Then it follows from  Axiom \ref{item:cellaxiom_compose} that $\filtb \in \filterLowerK{p}{\cwcomplex}$. In particular, if $\filterLowerK{p}{\cwcomplex}$ contains an element of a polyhedron in $\ambientPoly_\critvals$, then it contains the entire polyhedron.
\end{proof}

Then we have the following.
\begin{theorem}
\label{thm_CW_complexes_have_good_fibers}
Suppose that $(\chaincx, \basis)$ is the based chain complex associated with a finite simplicial complex, a delta complex or a CW complex~$\cwcomplex$. Let $D$ be a barcode and  $\critvals = \FinEnd(D)$ be the associated set of  finite endpoints. Then
    \begin{align*}
        S = \filtersCompat \cap \filterLowerK{p}{\cwcomplex}
    \end{align*}
is a union of polyhedra in $\ambientPoly_\critvals$.  Consequently there is a nested sequence of polyhedral subcomplexes
    \begin{align}
        \label{eq:lower_k_fiber}
        \ambientPoly_\critvals[S]
        \subseteq
        \ambientPoly_\critvals[\filtersCompat]
        \subseteq 
        \ambientPoly_\critvals. 
    \end{align}
When $p = 0$, the set $S$ is the space of lower-star filters with barcode~$D$.  When $p = 1$, the set $S$ is the space of lower-edge filters with barcode~$D$.
\end{theorem}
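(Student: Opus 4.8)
The plan is to assemble Theorem~\ref{thm_CW_complexes_have_good_fibers} from pieces that are already in place. Recall that $S = \filtersCompat \cap \filterLowerK{p}{\cwcomplex}$. I would first argue that $S$ is a union of polyhedra in $\ambientPoly_\critvals$ by showing that the defining predicate is constant on each cell $\cell_\critvals(\filta)$, which is exactly the hypothesis of Lemma~\ref{lem:cellinheritance}. Concretely, define $\prop: \I^\basis \to \{\True, \False\}$ by $\prop(\filta) = \True$ iff (a) $\persmap(\filta) = D$, and (b) $\filta \in \filterLowerK{p}{\cwcomplex}$, and also (c) when $\cwcomplex$ is a genuine CW/delta complex, each sublevel set $\filta^{-1}(-\infty,t]$ is a subcomplex of $\cwcomplex$. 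Then $\prop^{-1}(\True) = S$ (the subcomplex condition being automatic in the simplicial case, and needed only to ensure we stay inside $\Filt(\cwcomplex)$ for CW/delta complexes).

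The heart of the matter is verifying the implication~\eqref{eq:allifone} for this $\prop$, i.e. that $\prop(\filta) = \True$ forces $\prop$ to be $\True$ on all of $\cell_\critvals(\filta)$. This splits into three independent inheritance claims, each of which is already available. Condition (a) is inherited because of Lemma~\ref{lem:wholecell}: every element of $\cell_\critvals(\filta)$ is a bona fide filter with barcode $D$, using equivariance of $\persmap$ under non-decreasing reparametrizations and the fact that $\critvals = \FinEnd(D)$ is fixed by the reparametrization $\psi$ from Axiom~\ref{item:cellaxiom_compose}. Condition (b) is inherited by Lemma~\ref{lem:lower_star_is_union_of_polys}: if $\filta \in \filterLowerK{p}{\cwcomplex}$ and $\filta \sim_\critvals \filtb$, then $\filtb \in \filterLowerK{p}{\cwcomplex}$, again straight from Axiom~\ref{item:cellaxiom_compose} (post-composing with an order-preserving $\psi$ commutes with taking maxima over faces, so the lower-$p$ defining equation is preserved). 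Condition (c) is inherited by the same argument used in the proof of the CW-complex structure theorem just above: the property "each sublevel set is a subcomplex" depends only on the total preorder that $\filta$ induces on $\basis$, which $\sim_\critvals$ preserves. Since a finite intersection of "all-or-nothing" predicates is again all-or-nothing, $\prop$ satisfies~\eqref{eq:allifone}.

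With that in hand, Lemma~\ref{lem:cellinheritance} immediately gives that $S = \prop^{-1}(\True)$ is a union of polyhedra in $\ambientPoly_\critvals$, and hence $\ambientPoly_\critvals[S]$ is a well-defined polyhedral subcomplex. The nested sequence~\eqref{eq:lower_k_fiber} then follows formally: $S \subseteq \filtersCompat$ (intersection with a subset) gives $\ambientPoly_\critvals[S] \subseteq \ambientPoly_\critvals[\filtersCompat]$ since the induced-subcomplex operation $\ambientPoly[\,\cdot\,]$ is monotone with respect to inclusion of unions of polyhedra, and $\ambientPoly_\critvals[\filtersCompat] \subseteq \ambientPoly_\critvals$ by definition; moreover Theorem~\ref{thm:fiberispoly} already certifies $\ambientPoly_\critvals[\filtersCompat]$ as a genuine subcomplex with underlying space $\filtersCompat = \persmap^{-1}(D)$. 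Finally the identifications at $p = 0$ and $p = 1$ are just unwinding the definition of $\filterLowerK{p}{\cwcomplex}$: for $p=0$ the value on each cell is the max of vertex values on its closure (lower-star), and for $p = 1$ it is the max over edges in its closure (lower-edge, which specializes to Vietoris--Rips for the Rips complex).

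The only real obstacle I anticipate is being careful about condition (c): one must make sure the "sublevel sets are subcomplexes" requirement is phrased so that it genuinely depends only on the sublevel-set \emph{order} and not on the actual real values, so that it is preserved under $\psi \circ (\cdot)$; but this is precisely the observation already isolated in the proof of the CW structure theorem preceding this subsection, so nothing new is needed. Everything else is bookkeeping: checking that a conjunction of all-or-nothing predicates is all-or-nothing, and that $\ambientPoly[\,\cdot\,]$ is order-preserving.
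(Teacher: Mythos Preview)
Your proof is correct and follows essentially the same approach as the paper, which simply writes ``Follows from Lemma~\ref{lem:lower_star_is_union_of_polys}.'' You unpack this by bundling the conditions into a single predicate and invoking Lemma~\ref{lem:cellinheritance} directly, whereas the paper implicitly relies on the fact that both $\filtersCompat$ (Theorem~\ref{thm:fiberispoly}) and $\filterLowerK{p}{\cwcomplex}$ (Lemma~\ref{lem:lower_star_is_union_of_polys}) are already unions of polyhedra in $\ambientPoly_\critvals$, so their intersection is too; the underlying mechanism---cell-constancy of the defining conditions under $\sim_\critvals$---is identical. One small redundancy: your condition~(c) is already implied by condition~(b), since any $\filta \in \filterLowerK{p}{\cwcomplex}$ automatically has CW sublevel sets (if $\sigma \subseteq \mathrm{cl}(\tau)$ then the cells of dimension $\le p$ in $\mathrm{cl}(\sigma)$ are among those in $\mathrm{cl}(\tau)$, so $\filta(\sigma) \le \filta(\tau)$); this does no harm, but you could drop it.
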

\begin{proof}
Follows from Lemma \ref{lem:lower_star_is_union_of_polys}.
\end{proof}

\begin{remark}[Adapting the $\DFS$ algorithm]
\label{remark_lower_star_fiber}
If we wish only to compute a polyhedral decomposition of $D$-compatible lower-$p$ filters, then the exploration of Alg.~\ref{alg:DFS_algo_compatible_data} can be considerably reduced. For this we maintain a list~$\LowerStarSimplices$ of simplices~$\simplex\notin \FiltrationIn$ whose faces~$\tau$ with $\dim(\tau) \leq p$ satisfy~$\tau\in \FiltrationIn$;  we deem the current class~$\ClassCur$ incomplete as long as~$\LowerStarSimplices$ is non-empty. In practice this simply amounts to modifying Line 3 to:
\[\text{(Line 3') \hspace{.4cm}  {\bf else if }} \LowerStarSimplices=\emptyset \text{ and } \ClassCur\neq \emptyset  \text{ and } \IntervalsCur= \emptyset  \text{ and } \TrivCur= \emptyset \text{ {\bf then}}\]
In addition we need to update the list~$\LowerStarSimplices$ each time we add a new vertex~$\tau$ in the classification (which can happen at lines~16,~22,~29 and~35) simply by adding the following line:
\[\text{Add to list } \LowerStarSimplices \text{ all simplices } \simplex \text{ such that } \FiltrationIn\cup \ClassCur \cup \{\tau\} \text{ contains the $k$-dimensional faces of~$\simplex$ for each $k \leq p$}.\]
\end{remark}
\section{Experiments}
\label{section_experiments}
Using our algorithm we compute the number of polyhedra in~$\persmap^{-1}(D)$, binned by dimension, and the non-zero Betti numbers~$\beta_p(\persmap^{-1}(D))$. We also record the special cases where the facets, i.e. polyhedra in~$\persmap^{-1}(D)$ that are maximal w.r.t. inclusions, do not consist only in the top-dimensional polyhedra. 
\paragraph{Implementation} Algorithm \ref{alg:compute_filtrations} was implemented in the programming language Rust.  This implementation accommodates user-defined coefficient fields, based complexes, and restricted families of filters, e.g. lower-star.  The implementation uses several dependencies from the ExHACT library \cite{ExHACT} for low-level functions, including reduction of boundary matrices and implementation of common coefficient fields. Source code for both libraries is to be made available in the near future.  
\paragraph{Reading the results} The outputs of the algorithm are reported in figures. Each figure corresponds to a specific simplicial, or CW complex and provides statistics about the fiber~$\persmap^{-1}(D)$ for various barcodes in a table. By convention, black intervals in the target barcode~$D$ are of dimension~$0$, blue intervals are of dimension~$1$, while green intervals are of dimension~$2$. In all cases the number of polyhedra in~$\persmap^{-1}(D)$ is binned by dimension in the form of an array, and the Betti numbers are computed with coefficients in~$\mathbb Z_2$. Unless explicitly stated otherwise:
\begin{itemize}
    \item The facets are the top dimensional polyhedra. Otherwise, we explicitly report in red the facets binned by dimension in an array;
    \item The persistence modules associated to barcodes~$D$ are computed with coefficients in~$\mathbb Z_2$. Otherwise, in special cases where we also compute persistence modules with coefficients in~$\mathbb Q$, the coefficient field is indicated in blue by a specific mention;
    \item The fiber is computed inside the space of all filters. Otherwise, we provide as many columns for statistics about~$\persmap^{-1}(D)$ as there are categories of filters to consider. 
\end{itemize}
\subsection{Simplicial Complexes}
\label{subsection_experiments_simplicial_complexes}
In all the examples of this section~$\SComplex$ is a simplicial complex. When~$\SComplex$ is a tree (Figure~\ref{fig:fiber_trees}), we report these statistics both when the domain of~$\persmap$ consists of all filters and when it is restricted to lower star filters. For lower star filters on the interval the fiber is shown by~\cite{cyranka2018contractibility} to consist of contractible components. Our computations indicate that this property holds as well for general filters on the interval, however it breaks for other trees where the fiber has loops as indicated by non-trivial Betti numbers~$\beta_1$. 
\begin{figure}[ht]
\centering
\includegraphics[width=0.8\textwidth]{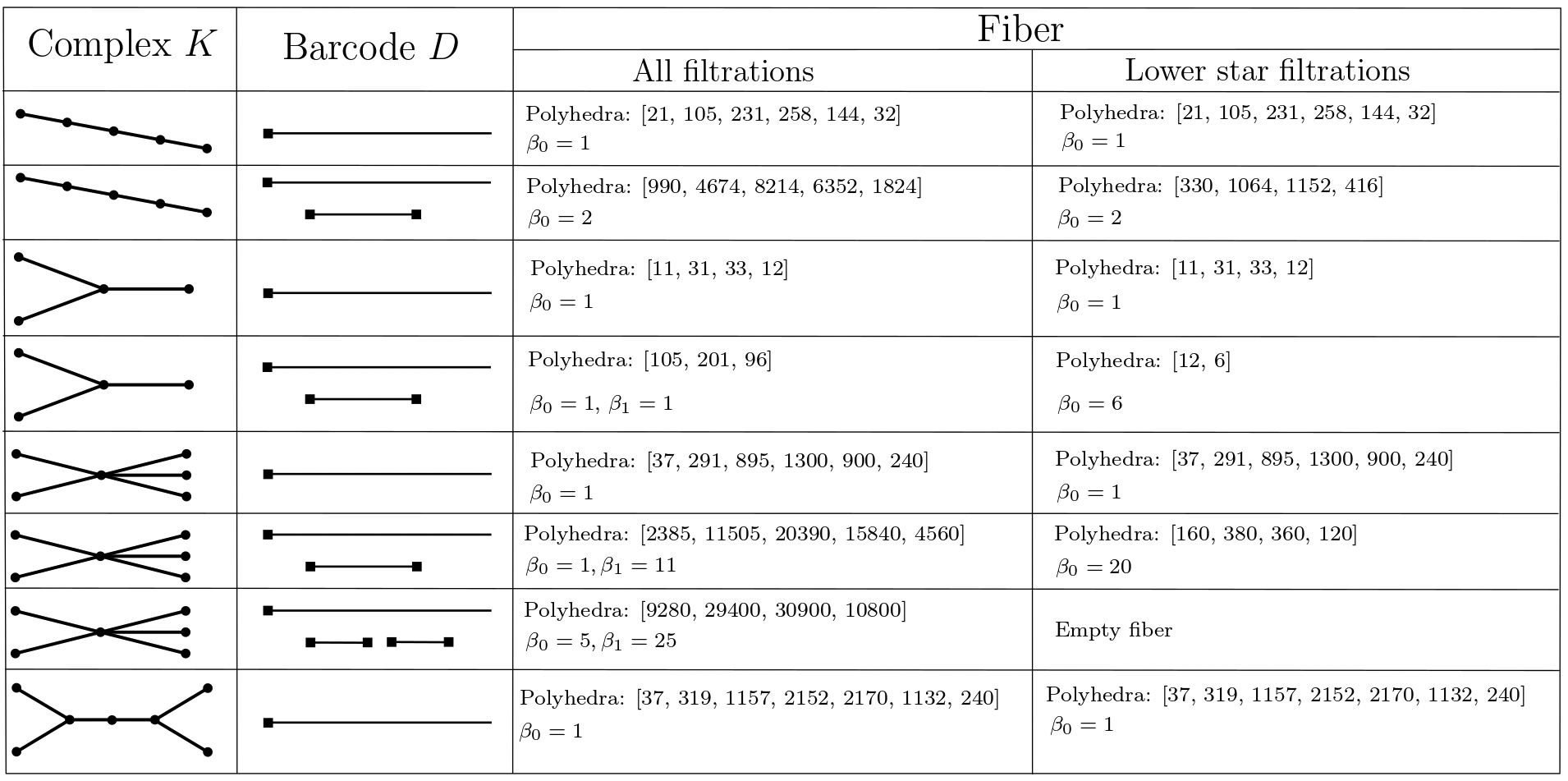}
\caption{Some statistics about fibers~$\persmap^{-1}(D)$ when~$\SComplex$ is a tree. }
\label{fig:fiber_trees}
\end{figure}

For lower star filters on arbitrary subdivisions of the circle it is proven in~\cite{mischaikow2021persistent} that the fiber is made of circular components. Our computations (Figure~\ref{fig:fiber_triangle_and_exts}) suggest that this property holds as well when allowing general filters and adding dangling edges to the circle. 
\begin{figure}[ht]
\centering
\includegraphics[width=0.55\textwidth]{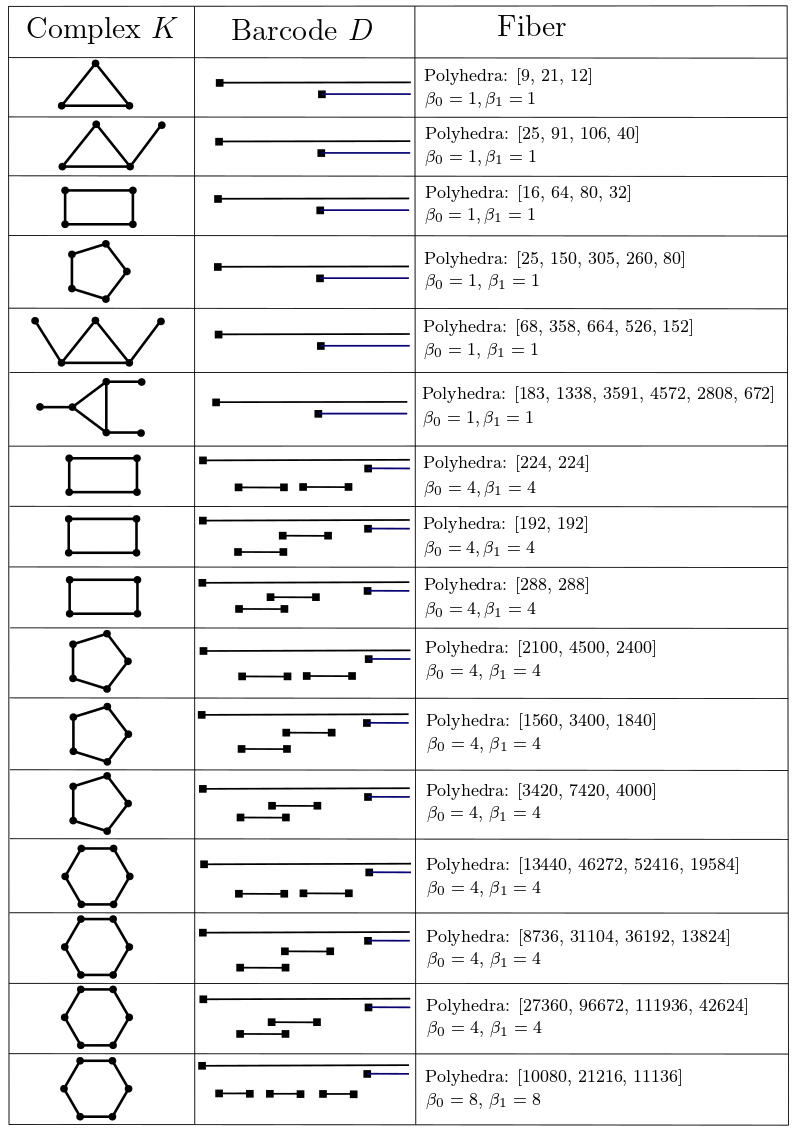}
\caption{Some statistics about fibers~$\persmap^{-1}(D)$ when~$\SComplex$ is homotopy equivalent to a circle.}
\label{fig:fiber_triangle_and_exts}
\end{figure}

When~$\SComplex$ is homotopy equivalent to a bouquet of two circles (Figure~\ref{fig:fiber_eight}), the fiber itself has trivial homology in degree higher than~$1$ and we observe cases (indicated in red) where some facets are not top-dimensional polyhedra. 
\begin{figure}[ht]
\centering
\includegraphics[width=0.8\textwidth]{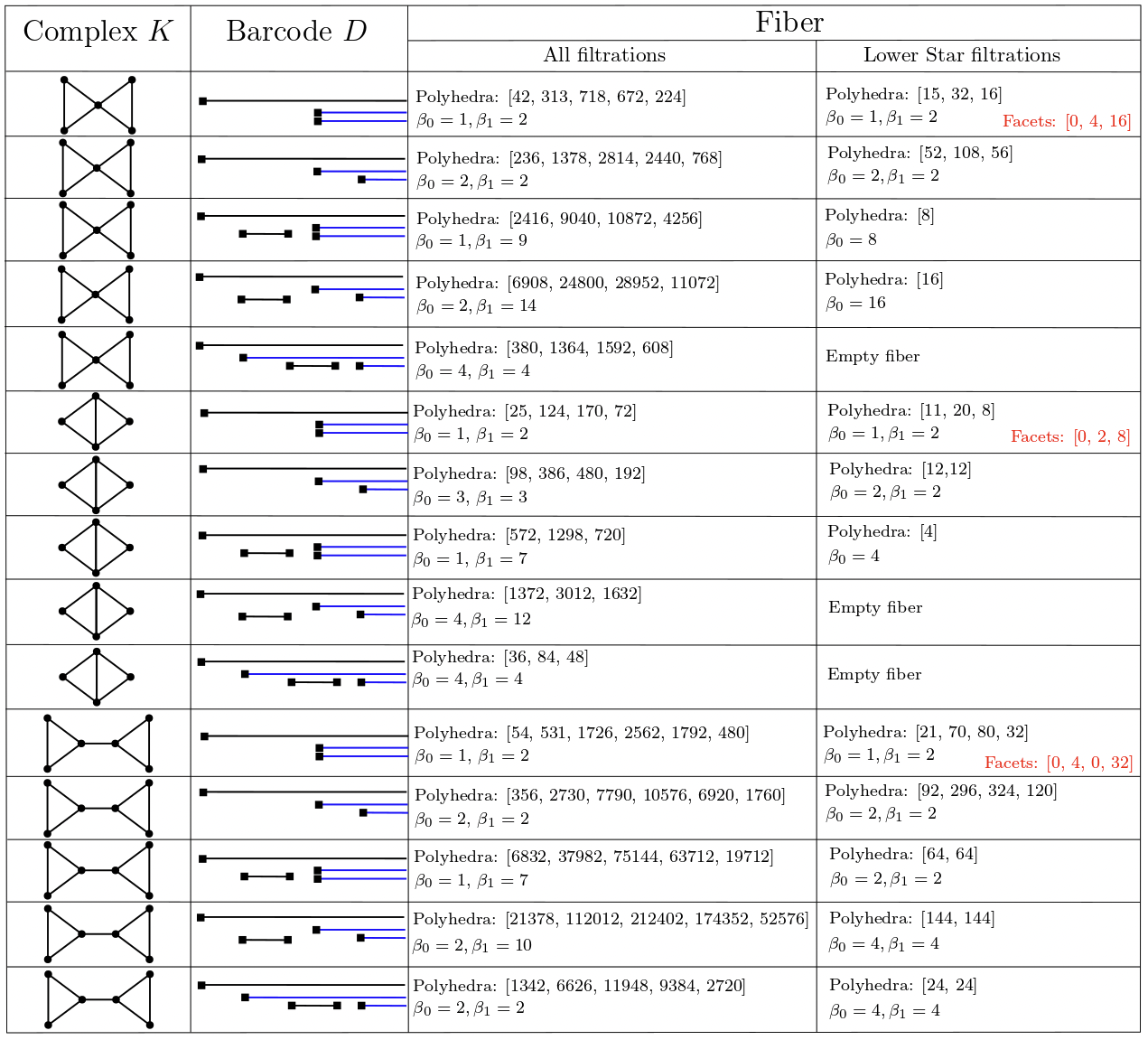}
\caption{Some statistics about fibers~$\persmap^{-1}(D)$ when~$\SComplex$ is homotopy equivalent to a bouquet of two circles.}
\label{fig:fiber_eight}
\end{figure}

In light of all the previous calculations, we can conjecture that when~$\SComplex$ is a graph the fiber~$\persmap^{-1}(D)$ has trivial homology in degrees higher than~$1$. However, when~$\SComplex$ is the $2$-skeleton of the $3$-simplex (Figure~\ref{fig:fiber_sphere_and_exts}), for some barcodes~$D$ the fiber has non-trivial degree~$3$ homology. Therefore in general the fiber~$\persmap^{-1}(D)$ may have higher non-trivial homologies than the base complex~$\SComplex$. 
\begin{figure}[ht]
\centering
\includegraphics[width=0.8\textwidth]{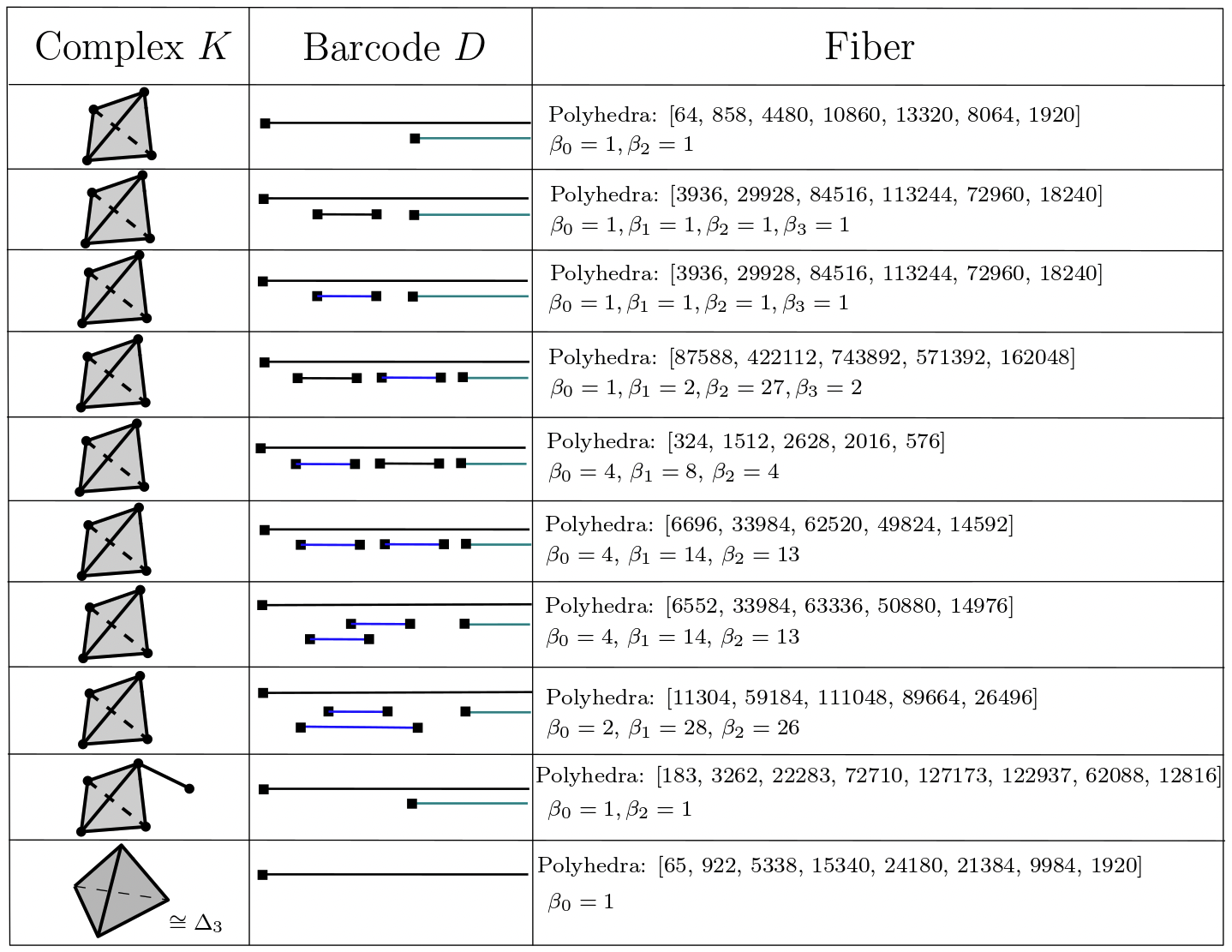}
\caption{Some statistics about fibers~$\persmap^{-1}(D)$ when~$\SComplex$ is homotopy equivalent to a $2$-sphere. }
\label{fig:fiber_sphere_and_exts}
\end{figure}

Let~$\SComplex$ be an arbitrary connected simplicial complex, and let~$D_{\SComplex}$ be the barcode with one infinite bar~$(0,+\infty)$ in degree~$0$, with no finite bars, followed by infinite bars~$(1,+\infty)$ of multiplicity~$\beta_p(\SComplex)$ in each degree~$p\geq 1$. In all the examples computed with~$\Z_2$ coefficients by our algorithm, the fiber~$\persmap^{-1}(D_\SComplex)$ and the base complex~$\SComplex$ have the same Betti numbers (with coefficients in~$Z_2$). This motivates the following conjecture.

\begin{conjecture}
\label{conjecture_fiber_homotopy_equivalent_to_K}
Let~$\SComplex$ be a simplicial complex. Then the fiber~$\persmap^{-1}(D_\SComplex)$ and~$\SComplex$ have the same Betti numbers.
\end{conjecture}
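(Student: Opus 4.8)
The plan is to reduce the conjecture, by an explicit deformation retraction, to a purely combinatorial statement about $\SComplex$, and then to attack that statement with poset-topological tools. Throughout, assume $\SComplex$ is connected (otherwise argue componentwise), and write $\SComplex_{\leq t} := \filta^{-1}(-\infty,t]$.

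First I would record a concrete description of the fiber: $\filta \in \persmap^{-1}(D_\SComplex)$ if and only if every non-empty $\SComplex_{\leq t}$ is connected, every inclusion $\SComplex_{\leq s}\hookrightarrow \SComplex_{\leq t}$ is injective on $H_{\geq 1}$ (no higher class ever dies), and all $\sum_p\beta_p(\SComplex)$ higher bars are born at a single value $\lambda=\lambda(\filta)$ — equivalently $H_{\geq1}(\SComplex_{<\lambda})=0$ while $\SComplex_{\leq\lambda}\hookrightarrow \SComplex$ is an iso on $H_{\geq1}$. If $\SComplex$ is itself acyclic in degrees $\geq 1$, then $D_\SComplex$ has a single finite endpoint, its barcode is invariant under the rescalings $u\mapsto(1-s)u+s$, and hence (by equivariance of $\persmap$, Remark~\ref{remark_equivariance}) the fiber is star-shaped about the constant filter $\mathbf 1$, so it is contractible and the conjecture holds. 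Assume henceforth that $\SComplex$ is not acyclic.

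Second, I would construct a deformation retraction of $\persmap^{-1}(D_\SComplex)$ onto the subcomplex $Y$ of filters taking no value above $\lambda(\filta)$, by pushing each cell $\simplex$ with $\filta(\simplex)>\lambda$ down linearly to $\lambda$. Since $\SComplex_{\leq\lambda}$ already carries all of $H_{\geq1}(\SComplex)$, collapsing the levels above $\lambda$ alters neither connectivity nor higher homology at any time, so the entire path stays in the fiber; one must verify that $\lambda(\cdot)$ is continuous on the fiber, a stability-type fact I expect to hold but which needs care. The image $Y$ is precisely the order complex $|\mathcal A|$ of the poset $\mathcal A$ of connected, degree-$\geq1$ acyclic subcomplexes of $\SComplex$: a filter in $Y$ is recorded by a flag $L_1\subsetneq\cdots\subsetneq L_r$ of such subcomplexes (its proper sublevel sets below $\lambda$) together with the assigned levels, and by Theorem~\ref{theorem_polyhedral_complex_a} the corresponding cell $\cell(\filta)$ is the simplex $\StandSimplex_{r-1}$ of that chain, with the face relations of the order complex. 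Hence $\persmap^{-1}(D_\SComplex)\simeq |\mathcal A|$, and the conjecture reduces to the statement $|\mathcal A|\simeq\SComplex$.

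To prove $|\mathcal A|\simeq\SComplex$ I would try either (a) to exhibit a discrete Morse function on $|\mathcal A|$ whose critical cells are in dimension-preserving bijection with a generating set of $H_*(\SComplex)$ (plus one $0$-cell), yielding $\beta_p(|\mathcal A|)\leq\beta_p(\SComplex)$, together with explicit cycle representatives obtained by ``rotating which cell of $\SComplex$ plays the critical role'' for the reverse inequality; or (b) to connect $\mathcal A$ to the face poset of a subdivision of $\SComplex$ by a zig-zag of poset maps and invoke Quillen's Theorem~A. As a warm-up, the Euler-characteristic identity $\chi(|\mathcal A|)=\chi(\SComplex)$ should follow from a signed count of $D_\SComplex$-compatible classifications. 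The main obstacle is exactly this last step: although $|\mathcal A|\simeq\SComplex$ holds in every computed example (e.g. a Mayer--Vietoris computation gives $|\mathcal A|\simeq S^1$ for a triangulated circle and $|\mathcal A|\simeq S^2$ for $\partial\StandSimplex^3$, the boundary of the $3$-simplex), the natural covers of $|\mathcal A|$ — by the subcomplexes below the maximal elements, or by $\{L\in\mathcal A: v\in L\}$ over vertices $v$ — are not good covers, so neither the nerve lemma nor the evident application of Theorem~A goes through, and the obvious closed-star map $\simplex\mapsto\overline{\mathrm{st}}(\simplex)$ has empty comma-fibers over the low elements of $\mathcal A$. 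Making this step work, or finding a better intermediate model for the homotopy type of the fiber, is where the real difficulty lies; everything preceding it is bookkeeping with the structures already developed above.
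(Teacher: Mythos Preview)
The paper does not prove this statement: it is presented explicitly as a conjecture, supported only by the numerical evidence gathered in Section~\ref{section_experiments}. There is therefore no ``paper's proof'' against which to compare; what you have written is an attack on an open problem.

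Your reduction is nonetheless sound and worth recording. One simplification: since the fiber is taken over the \emph{fixed} barcode $D_{\SComplex}$, the quantity $\lambda$ is not a function of $\filta$ at all but the constant determined by $D_{\SComplex}$ (namely $\lambda = 2$ once $\FinEnd(D_{\SComplex})$ is normalised to $\{1,2\}$), so your continuity worry is moot. The push-down $\filta \mapsto \psi_s \circ \filta$, with $\psi_s(u) = u$ for $u \leq \lambda$ and $\psi_s(u) = (1-s)u + s\lambda$ for $u > \lambda$, lands in $\cell(\filta)$ by Axiom~\ref{item:cellaxiom_compose} and hence in the fiber by Theorem~\ref{theorem_polyhedral_complex_a}. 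The identification of the retract $Y$ with the order complex $|\mathcal A|$ of the poset of non-empty proper connected acyclic subcomplexes of $\SComplex$ then checks out: a chain $L_1 \subsetneq \cdots \subsetneq L_\ell$ in $\mathcal A$ gives the cell $\cell(\filta) \cong \Delta_{\ell - 1}$ (here $\#\eta^1 = \ell - 1$ and $\#\eta^2 = 0$), and passing to a codimension-one face deletes a single $L_i$, exactly the face relation of the order complex.

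The gap is precisely where you locate it. You have reformulated the conjecture as the combinatorial statement $\beta_*(|\mathcal A|) = \beta_*(\SComplex)$, but neither of your proposed strategies --- a discrete Morse function on $|\mathcal A|$, or a zig-zag to the face poset of $\SComplex$ via Quillen's Theorem~A --- is carried through, and your own diagnosis that the natural covers of $|\mathcal A|$ are not good and that the obvious poset maps have empty comma-fibers is accurate. The proposal is therefore a clean and correct reduction of an open conjecture to another (possibly no easier) combinatorial one; it is a useful reformulation, not a proof.
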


\subsection{CW Complexes}
\label{subsection_experiments_CW_complexes}

In this section~$\SComplex$ is a surface with a CW structure: the torus (Fig.\ref{fig:torus}), the Klein bottle (Fig.\ref{fig:Klein}), the real projective plane (Fig.\ref{fig:RP2}), the Möbius strip (Fig.\ref{fig:Mobius}), the cylinder (Fig.\ref{fig:Cylinder}) and the Dunce Hat (Fig.~\ref{fig:DunceHat}). Indeed from section~\ref{section_based_chain_complexes} our algorithm adapts to CW complexes and more generally to based chain complexes. This is a precious feature since simplicial triangulations of our surfaces have many simplices, hence our algorithm struggles to compute the associated fibers, while it handles cellular decompositions which are much smaller. 

For cellular triangulations that are too small (e.g. the first two decompositions of the torus in Fig.\ref{fig:torus}), fibers are not interesting. This is why we consider cellular decompositions that are not minimal and have sufficiently many simplices for the fibers to be interesting. For such fibers, the remarks of section~\ref{subsection_experiments_simplicial_complexes} about simplicial complexes apply as well. In particular, the fiber~$\persmap^{-1}(D_\SComplex)$ and the base complex~$\SComplex$ have the same Betti numbers. 

We also find novel behaviours:
\begin{itemize}
    \item For some CW complexes that are topological manifolds, such as the real projective plane and the Klein bottle, there are fibers whose facets do not consist only in top-dimensional polyhedra. In particular these fibers are not manifolds.
    \item For some CW complexes that are topological manifolds, such as the real projective plane, there are fibers whose connected components don't have the same homotopy type. This situation is detected whenever~$\beta_0(\persmap^{-1}(D))\geq 2$ and~$\beta_p(\persmap^{-1}(D))= 1$ for some~$p\geq 1$.  
    \item For some spaces like the Klein bottle and the real projective plane whose homology with coefficients in~$\Z_2$ differ from that with coefficients in~$\mathbb Q$, the fibers~$\persmap^{-1}(D)$ strongly depend on the choice field. Namely, the number of polyhedra in the fibers, the dimensions of the facets and the Betti numbers are not the same whether the persistence module associated to~$D$ is computed with coefficients in~$\Z_2$ or~$\mathbb Q$. 
    \item The dunce hat is contractible but some fibers have non-trivial $2$-dimensional homology. 
\end{itemize}

\section{Conclusion}

This work introduces and implements the first algorithm to compute the fiber~$\persmap^{-1}(D) \subseteq \R^\SComplex$.
Each fiber~$\persmap^{-1}(D)$ admits a canonical polyhedral decomposition~\cite{leygonie2021fiber}, and the output of the algorithm is a collection of polyhedra, with each polyhedron represented in computer memory as an ordered partitions of~$\SComplex$.

The proposed algorithm leverages the combinatorial structure of $\persmap^{-1}(D)$ to organize the computation as a depth-first search; this ensures that the memory requirement to run the computation (excluding the list of polyhedra returned) scales quadratically with the size of $\SComplex$, rather than exponentially, as would  naive implementations. 

In addition, we extend the polyhedral decomposition of the fiber from~\cite[Theorem 2.2]{leygonie2021fiber} to encompass not only simplicial complexes but CW complexes generally, including cubical and delta complexes.  We also incorporate variations on the notion of filter that arise naturally in applications, e.g. the lower-star filtration and Vietoris-Rips filtration.  
The proposed algorithm adapts naturally to these settings, and we include these variants in the implementation.
Indeed, this flexibility proves useful in experiments, since several computations which proved intractable on a simplicial complex $\SComplex$ due to excessive time demands later proved feasible for homeomorphic CW complexes that had fewer cells. 

This work enables the research community to study persistence fibers empirically, for the first time.  As a demonstration, we  compute the fibers of approximately 120 barcode strata, the only corpus of its kind.  The  Betti statistics of the associated polyhedral complexes suggest several numerical trends, and provide counterexamples which would be impossible to replicate by hand.  

An interesting feature of these complexes is their size.  In each of our experiments, the underlying simplicial or CW complex had fewer than 20 cells; however the associated fibers often had  hundreds of thousands of polyhedra -- in some cases, millions.   It is surprising that so many distinct solution classes should exist, given the size of $K$ and the number of conditions imposed by the persistence map.  These examples  should inform general approaches to computation in the future, and motivate the mathematical problem of formulating new, more compact representations of the fiber.

Even in cases where the fiber remains small enough to fit comfortably in computer memory, we find that challenges remain vis-a-vis overall execution time.  Most computations that are run on complexes with 15 cells or more consume hours or days; run time also depends, to a large degree, on the barcode selected.  Moreover, the overwhelming majority of internal calls to our recursive depth-first-search algorithm yield only proper faces of polyhedra that have already been computed.  This points to several natural and concrete directions either for development of new algorithms or improvement of the methods presented here.

\appendix
\section{Connection with Simple Homotopy Theory}
\label{appendix:connection_to_zeeman}
In this section we show that collapsibility of a complex~$\SComplex$, which is a combinatorial and stronger notion of contractibility, is equivalent to the fiber~$\persmap^{-1}(D)$ over a well-chosen barcode~$D$ being non-empty. In particular we can use our algorithm for computing~$\persmap^{-1}(D)$ to determine whether~$\SComplex$ is collapsible.

Given~$\tau,\simplex$ two simplices, $\tau \subseteq \simplex$ and $\dim \simplex=\dim \tau +1$, such that~$\sigma$ is a maximal face of~$\SComplex$ and no other simplex contains~$\tau$, we say that~$\tau$ is a {\em free face}. The operation of removing~$\tau,\simplex$ is called an {\em elementary collapse}, and if~$L:=\SComplex \setminus \{\tau\subseteq \simplex\}$ is the resulting complex we write~$\SComplex \searrow L$. Finally~$\SComplex$ is said to be {\em collapsible} if there is a sequence of elementary collapses from~$\SComplex$ to one of its vertices:
\[K=L_n\searrow L_{n-1} \searrow L_{n-2} \searrow \cdots \searrow L_1 \searrow L_0=\{v\}.\]
Collapsibility implies contractibility but the reverse is false: the dunce hat and the house with two rooms are instances of contractible $2$-complexes that are not collapsible. However we have the following well-known Zeeman's conjecture, appropriately phrased in~\cite{adiprasito2017subdivisions}  for simplicial complexes:
\begin{conjecture}[Zeeman~\cite{zeeman1963dunce}]
Let~$\SComplex$ be a contractible $2$-complex. Then after taking finitely many barycentric subdivisions the product~$\SComplex \times \I$ is collapsible. 
\end{conjecture}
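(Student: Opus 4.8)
The plan is to route the conjecture through the equivalence established in this appendix: a finite complex is collapsible if and only if the fiber $\persmap^{-1}(D)$ is non-empty for the associated ``collapsing'' barcode $D_\SComplex$. Since $\SComplex$ is contractible, so is every $\mathrm{sd}^n(\SComplex \times \I)$, and for a contractible complex the barcode $D_{\mathrm{sd}^n(\SComplex \times \I)}$ reduces to a single immortal bar $(0,+\infty)$ in degree $0$ with no other bars. A point of this fiber is exactly an injective filter whose persistence pairing cancels every cell except one vertex, i.e.\ a discrete Morse function with a single critical cell, which is equivalent to a collapse of $\mathrm{sd}^n(\SComplex \times \I)$ to a point. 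Thus I would first reformulate the conjecture as: for a contractible $2$-complex $\SComplex$ there exists $n$ with $\persmap^{-1}(D_{\mathrm{sd}^n(\SComplex \times \I)}) \neq \emptyset$. All of the genuine difficulty now resides in this emptiness question.

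To attack non-emptiness I would build the required filter by discrete Morse theory, starting from two collapses that are always available: the cylinder collapse $\SComplex \times \I \searrow \SComplex \times \{0\}$, and a collapse $\SComplex \times \I^N \searrow \{v\}$ valid for $N$ large, which exists because a contractible polyhedron becomes collapsible after multiplication by a cube of sufficiently high dimension. The idea is to simulate the $N-1$ ``extra'' interval directions inside a single factor $\times \I$ by spending barycentric subdivisions: each subdivision multiplies the supply of free faces and enlarges the space of admissible acyclic matchings, so one hopes to transport the high-dimensional matching down to $\mathrm{sd}^n(\SComplex \times \I)$ in such a way that exactly one critical vertex survives. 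Concretely I would record the contraction of $\SComplex$ as a sequence of elementary expansions and collapses, take the product with $\I$, and then use subdivision to replace each expansion by a genuine collapse, certifying at the end---either by hand in small cases or by running $\AlgoComputeFiltrations$ on $\mathrm{sd}^n(\SComplex \times \I)$---that the fiber $\persmap^{-1}(D_{\mathrm{sd}^n(\SComplex \times \I)})$ is non-empty.

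The hard part, and the reason this remains a conjecture, is precisely the step that trades subdivisions for the missing collapsibility: there is no known general procedure producing the required acyclic matching on $\mathrm{sd}^n(\SComplex \times \I)$, and $n$ cannot be bounded in advance. The obstruction is not merely technical, since by the Gillman--Rolfsen theorem the conjecture restricted to special spines of homotopy $3$-balls is equivalent to the Andrews--Curtis conjecture, which is itself open and widely suspected to fail; any fully general argument of the kind above would therefore settle Andrews--Curtis as a by-product. Accordingly I would not expect this plan to deliver an unconditional proof. Its realistic value is as a verification scheme: the equivalence above turns each instance of Zeeman's conjecture into a finite, decidable non-emptiness question about $\persmap^{-1}(D_{\mathrm{sd}^n(\SComplex \times \I)})$, which the algorithm of Section~\ref{section_computing_fiber} can resolve for concrete $\SComplex$ and $n$ (as it already does for the dunce hat, where $n=0$ suffices), yielding experimental evidence rather than a general theorem.
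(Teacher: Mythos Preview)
The paper presents this statement as an open conjecture and offers no proof; it explicitly notes that the conjecture remains open and implies the $3$-dimensional Poincar\'e conjecture. There is therefore no proof in the paper to compare against, and you are right to conclude that your outline does not deliver an unconditional argument.

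Your reformulation, however, contains a genuine error. You invoke the barcode $D_\SComplex$ of Conjecture~\ref{conjecture_fiber_homotopy_equivalent_to_K}, which for a contractible complex consists of the single infinite bar in degree~$0$. The fiber over \emph{that} barcode is always non-empty when the complex is contractible: the constant filter realises it. So non-emptiness of $\persmap^{-1}(D_\SComplex)$ cannot detect collapsibility. The equivalence actually established in this appendix is Proposition~\ref{proposition_bridge_collapsibility_elementary_barcode}, and it uses an \emph{elementary} barcode---one infinite bar together with $(\sharp K - 1)/2$ pairwise disjoint finite bars, hence $\sharp K$ distinct endpoints. Your description ``an injective filter whose persistence pairing cancels every cell except one vertex'' is precisely a point of the fiber over an elementary barcode, not over $D_\SComplex$; an injective filter on a complex with more than one simplex can never have barcode $D_\SComplex$, since its barcode has $\sharp K$ distinct finite endpoints. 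With the correct barcode in place your first paragraph does recover the content of Proposition~\ref{proposition_bridge_collapsibility_elementary_barcode}, but as stated it is false.

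A second point to flag: the assertion that ``a contractible polyhedron becomes collapsible after multiplication by a cube of sufficiently high dimension'' is not a theorem one can simply cite, and for $2$-complexes it comes close to presupposing what is to be shown. What \emph{is} available is that a finite contractible complex has trivial Whitehead torsion and is therefore simple-homotopy equivalent to a point; but simple-homotopy equivalence permits expansions as well as collapses, and eliminating the expansions is exactly the unresolved step you identify later in your sketch.
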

This conjecture remains open and implies the 3-dimensional conjecture~\cite{zeeman1963dunce}.

Next we bridge the question of the collapsibility of a complex~$\SComplex$ to the fiber of~$\persmap$ over barcodes~$D$ that are {\em elementary}: those have~$1$ infinite bar~$(b_0,\infty)$ in dimension~$0$ followed by~$\frac{\sharp K-1}{2}$ non-overlapping intervals~$(b_i,d_i)$, that is:
\[b_0<b_1<d_1<b_2<d_2<\cdots <b_i<d_i< \cdots < b_{\frac{\sharp K-1}{2}}<d_\frac{\sharp K-1}{2}\]
\begin{proposition}
\label{proposition_bridge_collapsibility_elementary_barcode}
Let~$\SComplex$ be a contractible complex. Then~$\SComplex$ is collapsible if and only if there exists an elementary barcode~$D$ with nonempty fiber, i.e.~$\persmap^{-1}(D)\neq \emptyset$.

\end{proposition}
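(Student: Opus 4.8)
The plan is to establish both implications by translating "elementary collapse" into the language of $D$-compatible factorizations, leveraging the fact that an elementary barcode $D$ has exactly one infinite bar in degree $0$ and $\frac{\sharp K - 1}{2}$ finite, pairwise non-overlapping bars, so that $\dim D = \sharp K$ and every simplex must be critical in any $D$-compatible classification (there is no room for non-critical simplices, since non-overlapping finite bars force each class to be a singleton). Consequently a filter $f \in \persmap^{-1}(D)$ (composed appropriately with $\phi$) is essentially an injective filter on $\SComplex$, i.e.\ a total order $\simplex_1 < \cdots < \simplex_{\sharp K}$, whose persistence pairing consists of one unpaired vertex and $\frac{\sharp K-1}{2}$ pairs $[\PairingBirth(b_i,d_i), \PairingDeath(b_i,d_i)]$, each matching a simplex with a cofacet one dimension higher. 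The key observation is that such a pairing is exactly an \emph{acyclic partial matching} in the sense of discrete Morse theory with a single critical vertex, and by a theorem of Forman such a matching exists (with the gradient flow being acyclic, which the persistence algorithm guarantees) if and only if $\SComplex$ collapses to a point.

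For the forward direction (collapsible $\Rightarrow$ nonempty fiber), I would take a collapsing sequence $K = L_n \searrow \cdots \searrow L_0 = \{v\}$. Reading it backwards, we build up $\SComplex$ by, at each step, adjoining a free pair $(\tau, \simplex)$ with $\tau \subseteq \simplex$, $\dim\simplex = \dim\tau + 1$; we also must insert $v$ first. This yields a total order on the simplices in which each $\simplex_i$ (other than $v$) is adjoined together with its matched face, and one checks directly that running the standard reduction algorithm on this order produces the pairing $[\tau,\simplex]$ for each such pair and leaves only $v$ unpaired — the point being that when $\simplex$ enters, its boundary chain reduces to $\tau$ (mod the already-cancelled pairs), because $\tau$ was free in $L_k$ exactly when it was the unique uncancelled face of $\simplex$. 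Choosing the integer values $1, 2, \ldots, \sharp K$ along this order gives an injective filter $f$; then any $\phi$ contracting $\persmap(f)$ onto an elementary barcode $D$ (which is possible since the bars of $\persmap(f)$ can be made to nest/separate as required, e.g.\ by taking $\phi$ that collapses each critical pair's endpoints together, so that here $D = \persmap(f)$ up to relabeling endpoints) witnesses $f \in \persmap^{-1}(D)$. A small technical point: one must check that the ordering can be arranged so that the resulting barcode's finite bars are pairwise non-overlapping; this is arranged by ordering the collapses so that each free pair is completely introduced (both $\tau$ and $\simplex$) consecutively, giving nested-or-disjoint intervals, which can always be rearranged to non-overlapping after applying a suitable $\phi$.

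For the converse (nonempty fiber $\Rightarrow$ collapsible), given $f \in \persmap^{-1}(D)$ with $D$ elementary, Proposition~\ref{proposition_filtration_is_D_compatible} and the non-overlapping structure force the induced $D$-compatible classification to have all classes singletons and all simplices critical, yielding a total order $\simplex_1 < \cdots < \simplex_{\sharp K}$ whose persistence pairing $\Pairing$ matches each non-vertex simplex either as a birth with an unmatched-yet cofacet or as a death with its $\low$. One then reads the matching backwards: I claim $\simplex_{\sharp K}$ together with its partner form a free pair in $\SComplex$, allowing an elementary collapse, after which the remaining order on $L := \SComplex \setminus \{\tau \subseteq \simplex\}$ is still $D'$-compatible for the truncated elementary barcode, and we induct. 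The crux is showing $\simplex_{\sharp K}$ is free (a maximal face whose matched face $\tau$ has no other cofacet in $\SComplex$): maximality follows because no simplex enters after $\simplex_{\sharp K}$, and the uniqueness of the cofacet follows from the fact that the reduced column of $\simplex_{\sharp K}$ has lowest entry $\tau$ with no surviving competitors — any other cofacet $\simplex'$ of $\tau$ would have been paired earlier in a way that contradicts the non-overlapping interval structure or the acyclicity of the reduction. I expect \textbf{this step — proving that the last simplex in a $D$-compatible order for an elementary barcode is always part of a free face — to be the main obstacle}, since it requires carefully exploiting that the intervals are non-overlapping (so the pairings are "locally" structured, like a bracketing) together with the combinatorics of the boundary operator; the cleanest route is probably to invoke the correspondence between the persistence pairing of an injective filter and an acyclic Morse matching (as in, e.g., the work relating persistence and discrete Morse theory), and then to cite Forman's theorem that an acyclic matching with one critical cell of dimension $0$ (and none in higher dimensions) certifies collapsibility, checking that the non-overlapping elementary barcode is precisely the condition ensuring no higher critical cells.
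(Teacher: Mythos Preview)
Your overall strategy matches the paper's---build a filter from a collapsing sequence for the forward direction, and peel off the last pair inductively for the converse---but both directions are overcomplicated, and the converse has a genuine gap in the route you flag as ``cleanest.''

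\textbf{Forward direction.}  The worry about non-overlapping bars is unfounded, and the proposed fix via~$\phi$ would not work in any case: an order-preserving~$\phi$ cannot turn nested intervals into disjoint ones.  Fortunately you do not need it.  If you order the simplices as~$v,\tau_1,\sigma_1,\tau_2,\sigma_2,\ldots$ and set~$f(v)=0$, $f(\tau_i)=2i-1$, $f(\sigma_i)=2i$, then for each~$i$ the simplex~$\tau_i$ is a birth (since~$L_{i-1}$ is contractible, $\partial\tau_i$ is already a boundary there) and~$\sigma_i$ is a death paired with~$\tau_i$ (because~$\tau_i\subset\sigma_i$ is the most recently added face, so~$\low(\partial\sigma_i)=\tau_i$, and~$\tau_i$ is unpaired so no reduction occurs).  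The resulting bars are literally~$(2i-1,2i)$, already non-overlapping.  This is exactly the paper's argument, in one line.

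\textbf{Converse direction.}  You correctly observe that~$\dim D=\sharp K$ forces~$f$ to be injective and every simplex critical.  But the step you flag as the ``main obstacle'' is in fact immediate once you use the non-overlapping structure in the right way: the last two endpoints of~$D$ are~$b_n<d_n$, so the persistence partner of the last simplex~$\sigma_n:=f^{-1}(d_n)$ is the \emph{second-to-last} simplex~$\tau_n:=f^{-1}(b_n)$.  Now~$\sigma_n$ is maximal (nothing enters after it), and any coface of~$\tau_n$ must enter after~$\tau_n$, hence equals~$\sigma_n$.  It remains only to see that~$\tau_n$ has \emph{some} coface, i.e.\ that~$\tau_n\subset\sigma_n$.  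But~$\low(R\sigma_n)=\tau_n$ means the row of~$\tau_n$ in~$R=\partial V$ is nonzero, hence the row of~$\tau_n$ in~$\partial$ is nonzero, i.e.\ $\tau_n$ is a face of something---necessarily of~$\sigma_n$.  This is the paper's argument; no appeal to discrete Morse theory is needed.

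\textbf{On the Morse-theoretic route.}  The suggestion to pass through Forman's theorem has a real gap: for a general injective filter the persistence pairing~$[\tau,\sigma]$ (defined via~$\low$ of the \emph{reduced} matrix) need not satisfy~$\tau\subset\sigma$, so you cannot directly declare it an acyclic matching on the face poset.  Proving that it \emph{is} a face-coface matching in the elementary-barcode case is precisely the content of the direct argument above, so invoking Forman does not save any work.
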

\begin{proof}
If~$\SComplex$ is collapsible let~$K=L_n\searrow L_{n-1} \searrow L_{n-2} \searrow \cdots \searrow L_1 \searrow L_0=\{v\}$ be a sequence of elementary collapses, with notations~$n=\frac{\sharp K-1}{2}$ and $L_{i+1}=L_i\cup \{\tau_i\subseteq \simplex_i\}$, and define a filter~$f$ by~$f(v):=0$,~$f(\tau_i):=2i+1$ and~$f(\simplex_i):=2i+2$. Then the barcode of~$f$ is clearly elementary by definition of an elementary collapse.  

Conversely, let~$D=\{(b_0,\infty)\}\cup \{(b_i,d_i)\}_{1\leq i \leq \frac{\sharp \SComplex -1}{2}}$ be an elementary barcode and~$f$ a filter in the fiber, i.e.~$\persmap(f)=D$. Since~$D$ has exactly~$\sharp \SComplex$ distinct endpoint values,~$f$ establishes a bijection~$v\mapsto b_0$, $(\tau_i,\simplex_i)\mapsto (b_i,d_i)$, from simplices of~$\SComplex$ to these endpoints. In particular~$\tau_n$ and then~$\simplex_n$ are the last two simplices to enter the sub-level set filtration of~$f$, so that~$\simplex_n$ is a maximal face and no other simplex can contain~$\tau_n$. However~$\simplex_n$ itself contains~$\tau_n$ because the~$\dim \tau_n= \dim \simplex_n -1= \dim (b_n,d_n)$-cycle created by~$\tau_n$ becomes a boundary when adding~$\simplex_n$ in the filtration. Thus removing~$\tau_n$ and~$\simplex_n$ defines an elementary collapse, and we conclude by induction. 
\end{proof}

\section{Adaptation for persistent (relative) (co)homology}
\label{sec:dualities}

\newcommand{\chaincxAtTime}[1]{L_{#1}}
\newcommand{\phmod}{\mathsf{H}}
\renewcommand{\H}{H}

In addition to the homology functor, the relative homology, cohomology, and relative cohomology functors  engender distinct persistence modules of their own, each of which determines a barcode and thus a new persistence map.  We claim that the procedure described to compute the fiber of the persistent homology map in this work also suffices to compute the fibers of these other maps.

Let $(\chaincx, \basis)$ be a based, finite-dimensional, $\field$-linear chain complex equipped with a filter $\filta: \basis \to \I $ that surjects onto a finite subset of the unit interval~$\I$, denoted $\critvals = \{\critval_0 < \cdots < \critval_\ncritval \}$,  where $\critval_0 = 0$ and $\critval_m < 1$. Write $\chaincxAtTime{t}$ for the linear span of $\{ e \in \basis : \filta(e) \le t\}$, which forms a subcomplex of $\chaincx$ by hypothesis.

From these data we can construct four distinct sequences of vector spaces and homomorphisms,  induced by either inclusion or quotient:

\begin{equation*}
\xymatrix@R-2pc{
    \phmod_*( \chaincxAtTime{}): && \H_*(\chaincxAtTime{\critval_0})  \ar[r] & 
    \quad \cdots \quad \ar[r] & \H_*(\chaincxAtTime{\critval_\ncritval })  \ar[r] & \H_*(\chaincxAtTime{1})
    \\
    \phmod^*( \chaincxAtTime{}): && \H^*(\chaincxAtTime{\critval_0}) & \ar[l]
    \quad \cdots \quad & \ar[l] \H^*(\chaincxAtTime{\critval_\ncritval })  & \ar[l] \H^*(\chaincxAtTime{1})  
    \\
    \phmod_*( \chaincxAtTime{1}, \chaincxAtTime{}): &
    \H_*(\chaincxAtTime{1})  \ar[r]    
    & \H_*(\chaincxAtTime{1}, \chaincxAtTime{\critval_0})  \ar[r] & 
    \quad \cdots \quad \ar[r] & \H_*(\chaincxAtTime{1}, \chaincxAtTime{\critval_\ncritval })   
    \\
    \phmod^*( \chaincxAtTime{1}, \chaincxAtTime{}): &
    \H^*(\chaincxAtTime{1})  
    & \ar[l]    \H^*(\chaincxAtTime{1}, \chaincxAtTime{\critval_0})  
    & 
    \ar[l] 
    \quad \cdots \quad 
    & 
    \ar[l] 
    \H^*(\chaincxAtTime{1}, \chaincxAtTime{\critval_\ncritval })       
}
\end{equation*}

We refer to these as the  homology, cohomology, relative homology, and relative cohomology persistence modules, respectively.

A classic result of \cite{de2011dualities} states that the barcode for $\phmod_*(\chaincxAtTime{})$ uniquely determines the barcodes for $\phmod^*(\chaincxAtTime{}), \phmod_*(\chaincxAtTime{1}, \chaincxAtTime{}),$ and $\phmod^*(\chaincxAtTime{1}, \chaincxAtTime{})$.

To compute the fiber of one of these other maps, therefore, one  must simply convert the barcode to the associated PH barcode and apply any algorithm that is specialized to compute fibers for $\phmod_*( \chaincxAtTime{})$.  Barcodes in persistent homology can be converted into barcodes for the other three standard persistence modules as follows\footnote{For full details see \cite{de2011dualities}.  The authors of that work stipulate that $(\chaincx, \basis)$ be the  chain complex of a filtered CW complex, equipped with the standard basis of cells;  however no proofs make use of this added restriction on $(\chaincx, \basis)$, and the results are easily seen to hold for arbitrary based filtered complexes.}
    \begin{enumerate}
        \item $\phmod_*(\chaincxAtTime{})$ from  $\phmod^*(\chaincxAtTime{})$: no change
        \item $\phmod_*(\chaincxAtTime{})$ from  $\phmod_*(\chaincxAtTime{1}, \chaincxAtTime{})$ or $\phmod^*(\chaincxAtTime{1}, \chaincxAtTime{})$: subtract 1 from the homology degree of each finite bar; replace each infinite bar of form  $(-\infty, a)$ with $[a,\infty)$, leaving degree unchanged        
    \end{enumerate}

\section{Additional computations}
\begin{figure}[ht]
\centering
\includegraphics[width=0.8\textwidth]{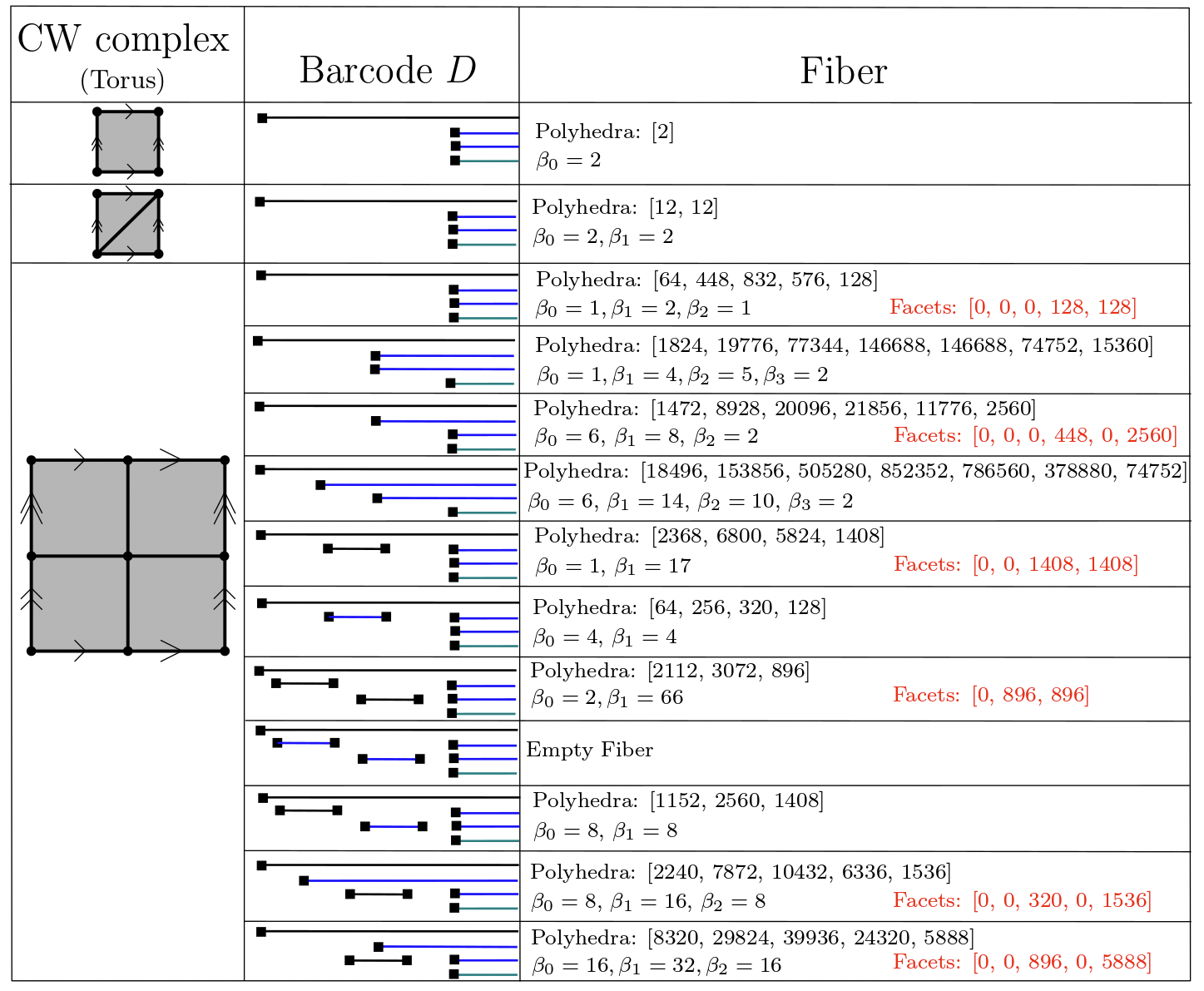}
\caption{Some statistics about fibers~$\persmap^{-1}(D)$ when~$\SComplex$ is a CW decomposition of the torus.}
\label{fig:torus}
\end{figure}

\begin{figure}[ht]
\centering
\includegraphics[width=\textwidth]{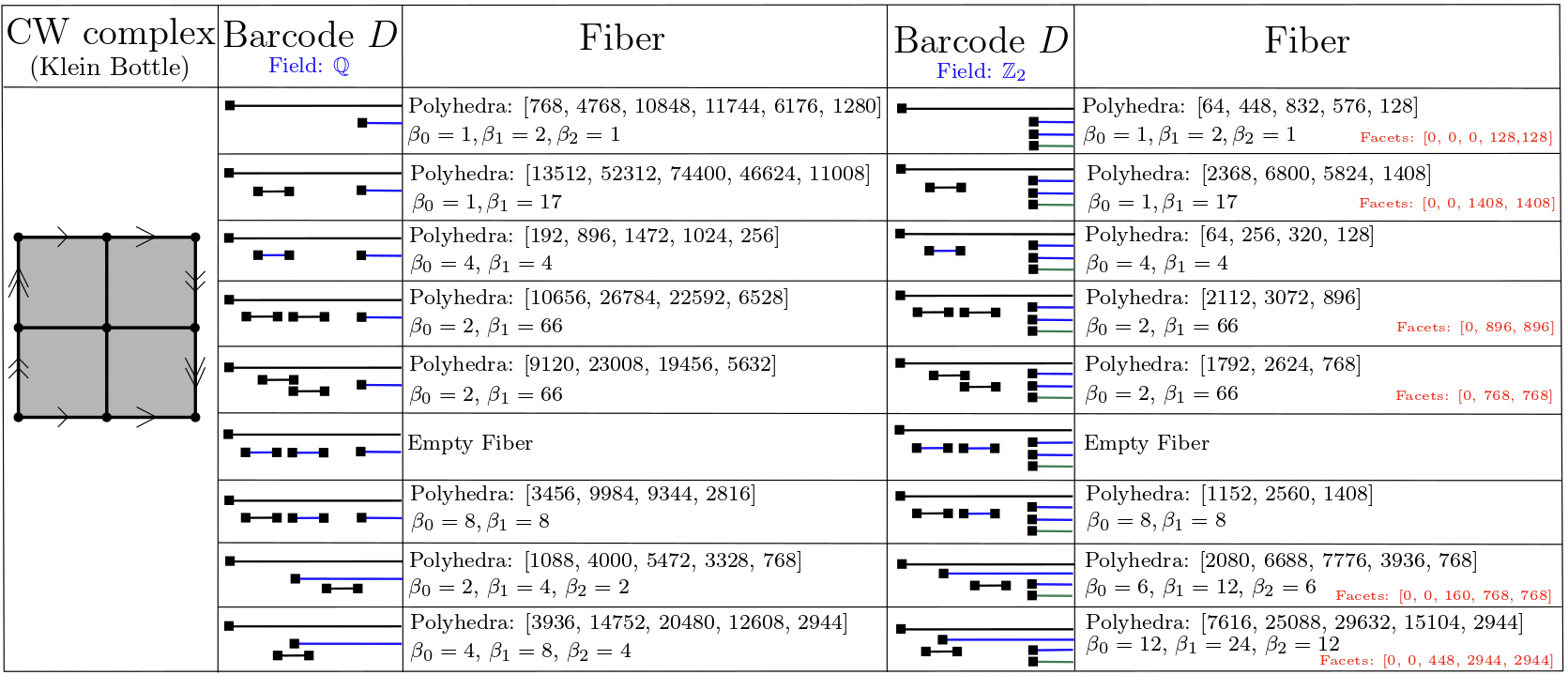}
\caption{Some statistics about fibers~$\persmap^{-1}(D)$ when~$\SComplex$ is a CW decomposition of the Klein bottle, and when persistence modules are computed with coefficients in the field~$\mathbb Q$ or~$\mathbb Z_2$. }
\label{fig:Klein}
\end{figure}

\begin{figure}[ht]
\centering
   \begin{subfigure}{0.8\textwidth}
   \includegraphics[width=0.8\textwidth]{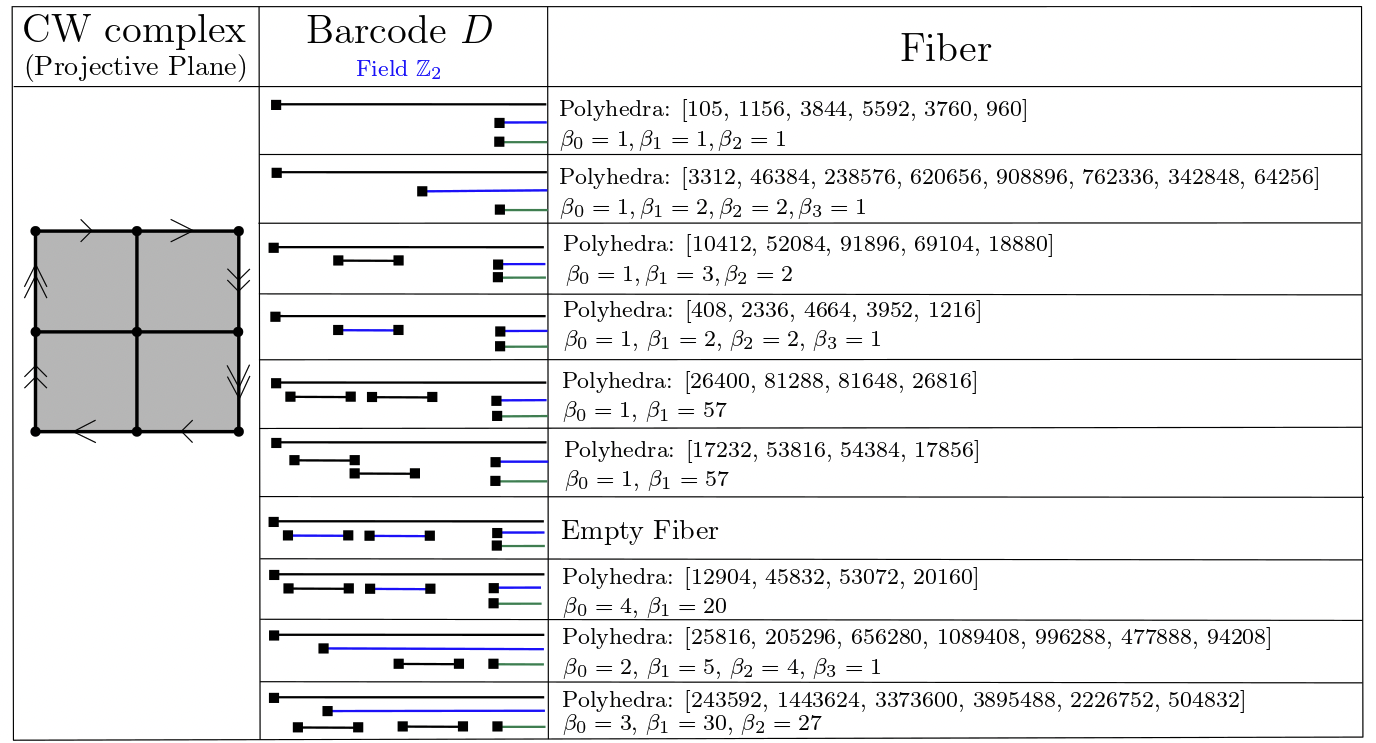}
   \label{fig:Ng1} 
\end{subfigure}
\begin{subfigure}{0.8\textwidth}
   \includegraphics[width=0.8\textwidth]{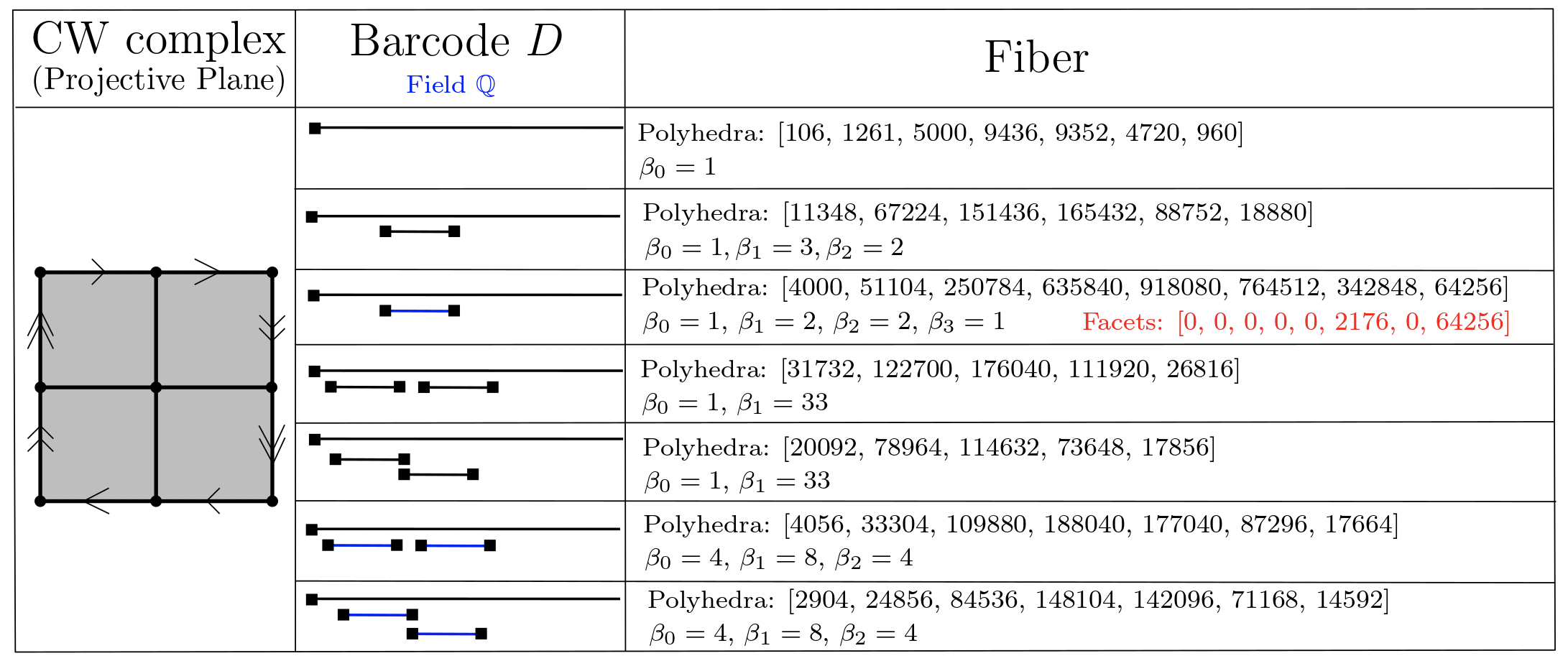}
\end{subfigure}
\caption{Some statistics about fibers~$\persmap^{-1}(D)$ when~$\SComplex$ is a CW decomposition of the real projective plane, and when persistence modules are computed with coefficients in the field~$\mathbb Q$ or~$\mathbb Z_2$. }
 \label{fig:RP2}
\end{figure} 
\begin{figure}[ht]
\centering
\includegraphics[width=0.55\textwidth]{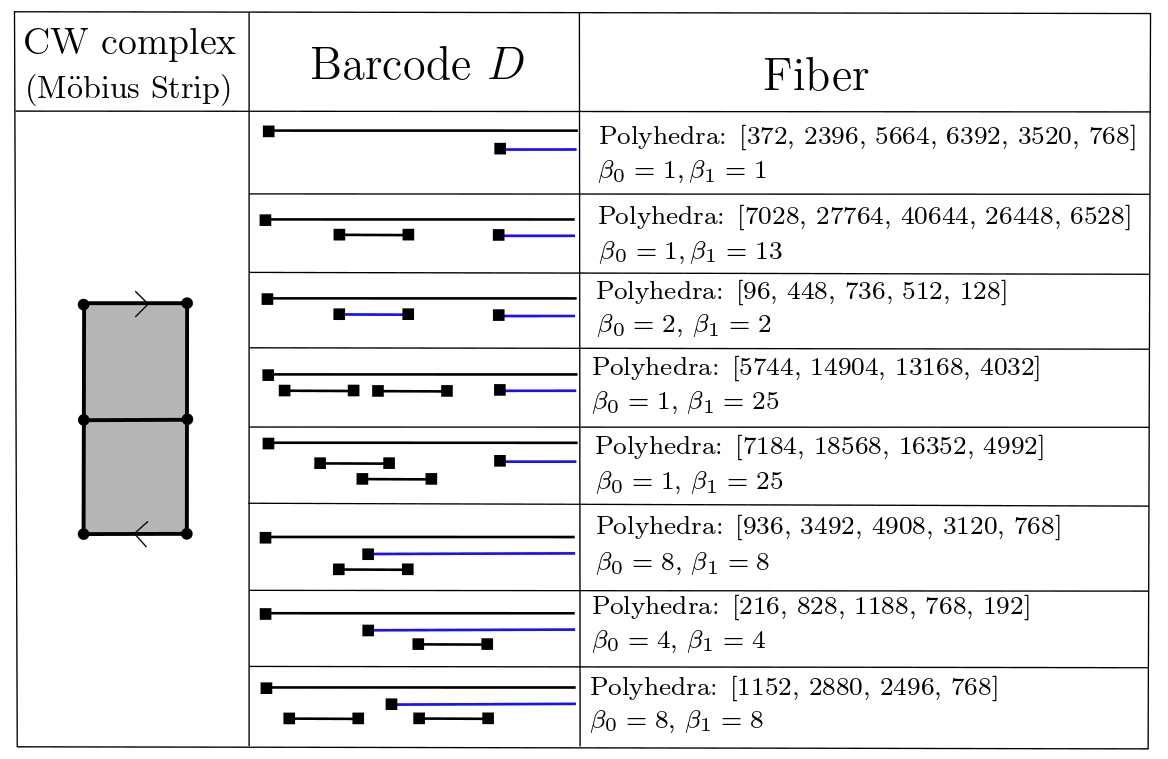}
\caption{Some statistics about fibers~$\persmap^{-1}(D)$ when~$\SComplex$ is a CW decomposition of the Möbius strip. }
\label{fig:Mobius}
\end{figure}
\begin{figure}[ht]
\centering
\includegraphics[width=0.55\textwidth]{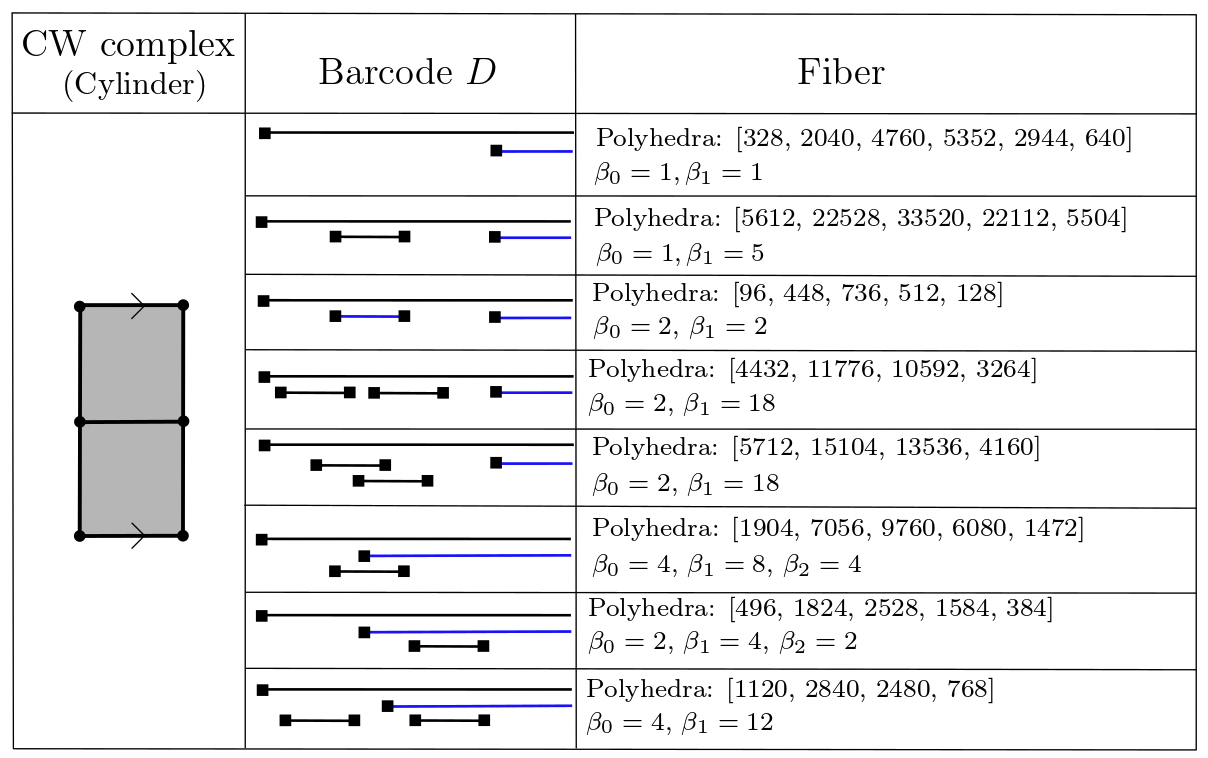}
\caption{Some statistics about fibers~$\persmap^{-1}(D)$ when~$\SComplex$ is a CW decomposition of the cylinder. }
\label{fig:Cylinder}
\end{figure}

\begin{figure}[ht]
\centering
\includegraphics[width=0.7\textwidth]{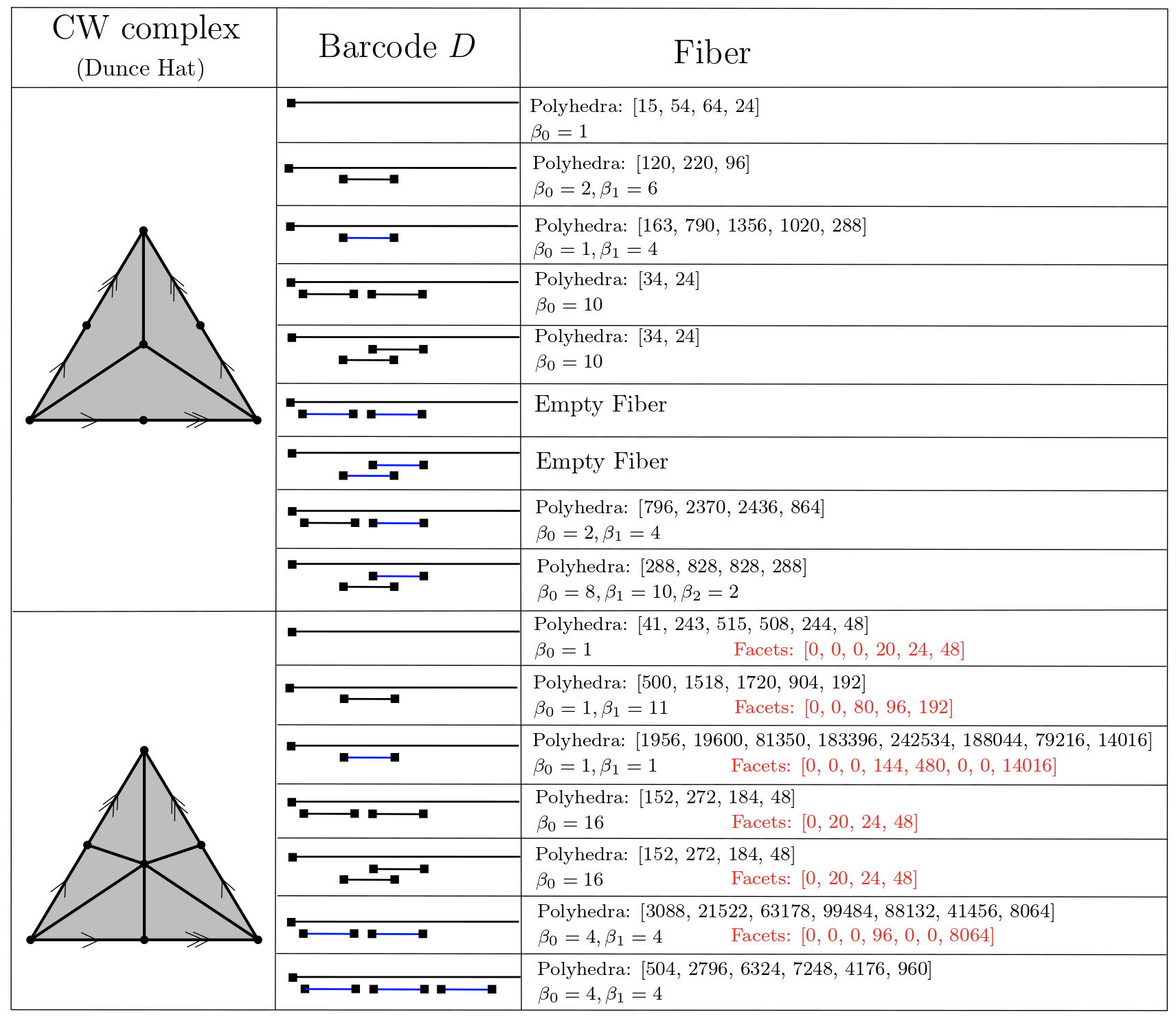}
\caption{Some statistics about fibers~$\persmap^{-1}(D)$ when~$\SComplex$ is a CW decomposition of the Dunce Hat.}
\label{fig:DunceHat}
\end{figure}

\bibliographystyle{plain}
\bibliography{reference}
\end{document}